\newtheorem{theorem}{Theorem}
\newtheorem{remark}[theorem]{Remark}
\newtheorem{lemma}[theorem]{Lemma}
\newtheorem{cor}[theorem]{Corollary}
\newcommand{\naturals}{\mathbb N}
\newcommand{\diag}{{\rm{diag}}}
\begin{document}

\title{State Transfer on Paths with Weighted Loops}

\author[1]{Stephen Kirkland%
\thanks{Research supported in part by NSERC Discovery Grant RGPIN--2019--05408.}}
\author[1]{Christopher M. van Bommel%
\thanks{Research supported in part by PIMS.}%
\thanks{Corresponding author.}} 
\affil[1]{Department of Mathematics, University of Manitoba, Winnipeg, MB, Canada.  Stephen.Kirkland@umanitoba.ca,  Christopher.vanBommel@umanitoba.ca}

\maketitle

\begin{abstract}
We consider the fidelity of state transfer on an unweighted path on $n$ vertices, where a loop of weight	$w$ has been appended at each of the end vertices. It is known that if $w$ is transcendental, then there is pretty good state transfer from one end vertex to the other; we prove a companion result to that fact, namely that there is a dense subset of $[1,\infty)$ such that if $w$ is in that subset, pretty good state transfer between end vertices is  impossible. Under mild hypotheses on $w$ and $t$, we derive upper and  lower bounds on the fidelity of state transfer between end vertices at readout time $t$. Using those bounds, we  localise the readout times for which that fidelity  is close to $1$. We also provide expressions for, and bounds on, the sensitivity of the fidelity of state transfer between end vertices, where the sensitivity is with respect to either the readout time or the weight $w$. Throughout, the results rely on detailed knowledge of the eigenvalues and eigenvectors of the associated adjacency matrix.  
\end{abstract}
	\noindent \textbf{Keywords:} quantum state transfer, fidelity.\\
	\noindent \textbf{MSC2020 Classification:} 
05C50; 81P45.

\section{Introduction}\label{sec:intro} 

An important task in the area of quantum information processing is the faithful transmission of information.  One avenue for the implementation of this task is a network of locally coupled spins, with the information as an excitation in the network, which is initialized at a source and then propagates according to Schr\"odinger's equation:
\[
i h \frac{\partial}{\partial t} \varphi = H \varphi.
\]
We can then consider the \emph{fidelity} of transmission  between a source and a target -- i.e., the probability that an excitation initialized at a source node is found at the target node -- to describe the quality or accuracy of the transmission.  The protocol for quantum communication through unmeasured and unmodulated spin chains was presented by Bose~\cite{B03}, and led to the interpretation of quantum channels implemented by spin chains as wires for transmission of excitations.  Here, we will  consider the Hamiltonian given by the adjacency matrix of the network, though other choices are possible.

If there exists a time when the fidelity of transmission is 1, then  we have \emph{perfect state transfer}.  In general, examples of perfect state transfer are rare, and networks known to exhibit perfect state transfer are highly structured, require specific edge weights, or require external control.  Christandl et al.~\cite{CDEL04, CDDEKL05} demonstrated that perfect state transfer can be achieved on Cartesian powers of the path on two vertices and the path on three vertices.  By taking the quotient of the former, they obtained edge weightings that  admit perfect state transfer on paths of arbitrary length.  Albanese et al.~\cite{ACDE04} extended this result to mirror inversion of an arbitrary quantum state.  Pemberton-Ross and Kay~\cite{PK11} constructed regular networks that permit perfect state transfer between any pair of sites through the use of controlled external magnetic fields.  Karimipour, Sarmadi Rad, and Asoudeh~\cite{KSA12} provided examples requiring less external control.  The study of perfect state transfer has been surveyed by Kay~\cite{K10} from a physics standpoint and Godsil~\cite{G12} from a mathematics perspective.

Christandl et al.~\cite{CDDEKL05} demonstrated that perfect state transfer does not occur between the end vertices of an unweighted path on at least 4 vertices.  Stevanovic~\cite{S11} and Godsil~\cite{G12} independently extended this result to any pair of vertices of a path.  Kempton, Lippner, and Yau~\cite{KLY17a} ruled out perfect state transfer on unweighted paths even in the presence of a fixed external magnetic field.

As a slightly weaker formulation, if there exist times at which fidelities arbitrarily close to 1 are obtained, then  we have \emph{pretty good state transfer} (PGST), or \emph{almost perfect state transfer}.  This less restrictive property recognizes that perfect state transfer, particularly when it is achieved via edge weights or external control, is subject to imprecision in its manufacture or implementation.  Vinet and Zhedanov~\cite{VZ12} demonstrated that sufficient conditions for pretty good state transfer are that the network be mirror-symmetric, and that the eigenvalues of the underlying graph be rationally independent.  Godsil, Kirkland, Severini, and Smith~\cite{GKSS12} demonstrated that pretty good state transfer occurs between the end vertices of an unweighted path of length $n$, where $n + 1$ is a prime, twice a prime, or a power of 2.  Coutinho, Guo, and van Bommel~\cite{CGvB17} identified an infinite family of paths admitting pretty good state transfer between internal vertices, and van Bommel~\cite{vB19} verified that no other path lengths admit pretty good state transfer.  Banchi, Coutinho, Godsil, and Severini~\cite{BCGS17} considered a Hamiltonian formed by the Laplacian, and show that in this case, pretty good state transfer occurs between the end vertices of a path whose length is a power of 2.

In light of the paucity of examples of perfect and pretty good state transfer on unweighted paths, perturbed paths have also been considered to achieve a notion of \emph{asymptotic state transfer}, that is, the fidelity of state transfer approaches 1 as some parameter of the graph changes.  W\'ojcik et al.~\cite{WLKGGB05} demonstrated asymptotic state transfer by changing the weights of the end edges of the path.  Casaccino, Lloyd, Mancini, and Severini~\cite{CLMS09} provided numeric evidence of asymptotic state transfer by adding loops of weight $w$ to the first and last vertices of a path.  Linneweber, Stolze, and Uhrig~\cite{LSU12} confirmed this result analytically.  Lin, Lippner, and Yau~\cite{LLY} considered a Hamiltonian formed by the normalised Laplacian, and show asymptotic state transfer when adding loops to two (or sometimes three) vertices if the vertices are sufficiently symmetric (a notion that is made precise in \cite{LLY}). Lorenzo, Apollaro, Sindona, and Plastina~\cite{LASP13} considered asymptotic state transfer by adding loops of weight $w$ to the second and second-last vertices of a path.  Chen, Mereau, and Feder~\cite{CMF16} obtain asymptotic state transfer by adding weighted edges from the third and third-last vertices of a path.  

In this paper we consider  the  family of graphs $P_n^w$ constructed as follows:  start with an unweighted path on $n$ vertices, then add a loop of weight $w$ at each end vertex. We focus on the fidelity of state transfer from one end vertex to the other. Our work here is motivated in part by a result of Kempton, Lippner, and Yau~\cite{KLY}, which shows that if $w$ transcendental, then there is pretty good state transfer from one end vertex of $P_n^w$ to the other. (We note in passing that \cite{KLY} contains many other results, and considers a broad family of weighted graphs that includes $P_n^w$.) Consequently, there is a dense subset of $\mathbb{R}$ such that if  $w$ is chosen from that subset, then pretty good state transfer between the end vertices is guaranteed. However, as we show in Section \ref{sec:nopgst}, there is also a dense subset of $[1, \infty)$ such that if  $w$ is chosen from that subset, then pretty good state transfer between the end vertices is impossible. Evidently great care and accuracy is needed in choosing $w$ if one is seeking to ensure pretty good state transfer for $P_n^w$. 

In view of this last observation, our objective in the sequel is to estimate the fidelity of state transfer between the end vertices of $P_n^w$, without considering whether or not pretty good state transfer holds for the particular choice of $w$.  Specifically, in Section \ref{subsec:fid} we prove upper and lower bounds on that fidelity in terms of $n, w$ and the readout time. The lower bound furnishes a guaranteed minimum on the fidelity of state transfer at a particular readout time, and for certain choices of the readout time, that lower bound can be made arbitrarily close to $1$ via suitable choice of $w$.  The upper bound, in turn, informs the choice of readout time so as to meet or exceed  a given  threshold for the fidelity. In Sections \ref{sec:time} and \ref{sec:weight} we prove bounds on the sensitivity of the fidelity with respect to the readout time and the value of $w$, respectively. The inequalities suggest that the fidelity is robust with respect to the readout time, but not with respect to the value of $w$. Throughout, our results rely on detailed  information about the eigenvalues and eigenvectors of the adjacency matrix of $P_n^w$; that information is developed in Sections \ref{sec:l1l2} and \ref{sec:other}. While state transfer problems are often treated from an algebraic viewpoint, the results obtained below arise from viewing the properties of state transfer primarily from an analytic perspective.  

Taken together, our results  quantify not only the fidelity of state transfer between the end vertices of $P_n^w$, but also  the choices of readout time that lead to  large fidelity. This is in marked contrast to the notion of pretty good state transfer, which, while  asserting the existence of a sequence of readout times for which the fidelity converges to $1,$  is silent on: i) when those readout times are, and  ii) how close to $1$ the fidelity is at those readout times. We note that several of our results are reminiscent of those of Lin, Lippner, and Yau~\cite{LLY}.

\section{Preliminaries} 

In this section we present some technical material that will assist us in deriving our later results. 

 The path $P_n^w$ is constructed from the unweighted path on $n$ vertices, with loops of weight $w$ added to each end vertex.  We assume that $V(P_n^w) = \{1, 2, \ldots, n\}$ and $E(P_n^w) = \{ \{1, 2\}, \{2, 3\}, \ldots, \{n - 1, n\} \}$.  In general, we assume that $w > 1$.  For a given $n$, we let $A(w)$ denote the adjacency matrix of $P_n^w$.  We consider the fidelity of state transfer from one end of the path to the other with respect to uniform XX couplings (represented by the edges of the path) and potentials (represented by loops), whose Hamiltonian is given by
\[
H_{XX} = \frac{1}{2} \sum_{j = 1}^{n - 1} (X_{j} X_{j + 1} + Y_j Y_{j + 1}) + w (Z_1 + Z_n),
\]
where $X_j$, $Y_j$, and $Z_j$ are the standard Pauli operators on site $j$.  Restricting to the single-excitation subspace allows us to take $A(w)$ as the Hamiltonian; then the solution to Schr\"odinger's equation is given by $\varphi(t) = U(t) \varphi(0)$, $U(t) = \exp (i t A(w))$ (where we incorporate Planck's constant $h$ in the time interval $t$).  Using the  spectral decomposition, we calculate the matrix exponential by
\[
U(t) = \sum_\lambda \exp(i t \lambda) E_\lambda,
\]
where the sum is taken over the eigenvalues $\lambda$ of $A(w)$ and $E_\lambda$ is the projection onto the $\lambda$-eigenspace.  We will denote the eigenvalues of $A(w)$ by $\lambda_1 \ge \lambda_2 \ge \cdots \ge \lambda_n$.  We let $V = [v_{ij}]$ be an orthogonal matrix that diagonalises $A(w)$, i.e.\ $V^T A(w) V = \diag(\lambda_1, \ldots, \lambda_n)$, so the columns of $V$ give the eigenvectors of $A(w)$.

Formally, we say we have \emph{perfect state transfer} from vertex $a$ to vertex $b$ at time $\hat t$ if $|U(\hat t)_{a, b}| = 1$ and \emph{pretty good state transfer} from vertex $a$ to vertex $b$ if, for every $\epsilon > 0$, there exists a time $t_\epsilon$ such that $|U(t_\epsilon)_{a, b}| > 1 - \epsilon$.  In general, we represent the fidelity of transfer from $a$ to $b$ at time $t$ by $p(t) = |U(t)_{a, b}|^2$.

The following lemma characterising pretty good state transfer is originally due to Banchi, Coutinho, Godsil, and Severini~\cite{BCGS17}; the form below is due to Kempton, Lippner, and Yau~\cite{KLY}.

\begin{lemma} \cite{BCGS17, KLY} \label{lem:pgst}
Let $u, v$ be vertices of $G$, and $H$ the Hamiltonian.  Then pretty good state transfer from $u$ to $v$ occurs at some time if and only if the following two conditions are satisfied:
\begin{enumerate}
    \item Every eigenvector $x$ of $H$ satisfies either $x_u = x_v$ or $x_u = - x_v$.
    \item Let $\{\lambda_i\}$ be the eigenvalues of $H$ corresponding to eigenvectors with $x_u = x_v \neq 0$, and $\{\mu_j\}$ the eigenvalues for eigenvectors with $x_u = - x_v \neq 0$.  Then if there exist integers $\ell_i$, $m_j$ such that if
    \begin{align*}
    \sum_i \ell_i \lambda_i + \sum_j m_j \mu_j &= 0 \\
    \sum_i \ell_i + \sum_j m_j &= 0
    \end{align*}
    then
    \[
    \sum_i m_i \text{ is even.}
    \]
\end{enumerate}
\end{lemma}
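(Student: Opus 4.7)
The plan is to rewrite the fidelity through the spectral decomposition and recognise that PGST from $u$ to $v$ is equivalent to the existence of a phase $e^{i\theta}$ and a sequence of times $t_n$ such that $U(t_n)e_u \to e^{i\theta} e_v$ (the convergence of a single entry forces convergence of the whole column, since $U(t_n)e_u$ has unit norm). Expanding $e_u$ in the eigenbasis gives $U(t_n)e_u = \sum_k e^{it_n \lambda_k} x^{(k)}_u\, x^{(k)}$, so the target becomes $\sum_k e^{it_n\lambda_k} x^{(k)}_u x^{(k)} \to e^{i\theta}\sum_k x^{(k)}_v x^{(k)}$.

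For necessity of condition~1, I would pass to a subsequence along which each $e^{it_n\lambda_k}\to\theta_k\in\mathbb{T}$, which is possible by compactness of $\mathbb{T}^n$. Grouping eigenvectors by eigenspace and comparing coefficients against the orthonormal basis $\{x^{(k)}\}$ yields $\theta_k x^{(k)}_u = e^{i\theta} x^{(k)}_v$ for every $k$. In particular $|x^{(k)}_u|=|x^{(k)}_v|$, and since $A(w)$ is real symmetric we may take the $x^{(k)}$ real, giving $x^{(k)}_u = \pm x^{(k)}_v$. Partitioning accordingly yields the families $\{\lambda_i\}$ and $\{\mu_j\}$ of condition~2, together with the phase constraints $e^{it_n\lambda_i}\to e^{i\theta}$ and $e^{it_n\mu_j}\to -e^{i\theta}$ (eigenvectors with $x^{(k)}_u=0=x^{(k)}_v$ drop out harmlessly).

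Condition~2 would then be derived using Kronecker's theorem on simultaneous Diophantine approximation: the orbit closure of $t\mapsto (e^{it\lambda_i}, e^{it\mu_j})$ in $\mathbb{T}^{|\{\lambda_i\}|+|\{\mu_j\}|}$ is the subtorus cut out by $\prod z_i^{\ell_i}\prod w_j^{m_j}=1$ over all integer relations $\sum\ell_i\lambda_i+\sum m_j\mu_j=0$. The limit point $(e^{i\theta},\dots,-e^{i\theta})$ lies on this subtorus, so $e^{i\theta(L+M)}(-1)^M=1$ with $L=\sum\ell_i$, $M=\sum m_j$. Specialising to relations with $L+M=0$ forces $(-1)^M=1$, i.e.\ $M$ even, which is condition~2. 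For the converse, given conditions~1 and~2 I would construct $\theta$ and $t_n$ by reversing this computation: if every relation in the integer lattice of relations has $L+M=0$ any $\theta$ suffices, while otherwise one lets $d>0$ generate the image of $(\ell,m)\mapsto L+M$, fixes a preimage $(\ell^*,m^*)$ with $L^*+M^*=d$, and sets $e^{i\theta d}=(-1)^{M^*}$; applying condition~2 to differences $(\ell,m)-k(\ell^*,m^*)$ shows this choice is consistent with every relation. Kronecker then supplies times $t_n$ along which the desired phases hold, and reassembling via condition~1 gives $U(t_n)e_u \to e^{i\theta}e_v$.

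The main obstacle is the delicate Kronecker step in the converse direction: one must show that condition~2 (which only concerns relations of weight zero) is sufficient to produce a single phase $\theta$ compatible with the countably many non-trivial relations of weight $L+M\neq 0$. The divisibility/quotient argument above is what closes this gap. Secondary bookkeeping concerns are handling multiplicities in the spectrum of $A(w)$ and ensuring that eigenvectors with $x^{(k)}_u=0$ contribute nothing to the convergence, but these are routine once the Kronecker framework is set up.
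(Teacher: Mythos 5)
The paper does not prove this lemma; it is imported verbatim from \cite{BCGS17, KLY}, so there is no internal proof to compare against. Your argument is correct and is essentially the standard one from those references: reduce PGST to $U(t_n)e_u \to e^{i\theta}e_v$, extract condition~1 by comparing spectral projections along a convergent subsequence, and obtain condition~2 (in both directions) from the description of the closure of the one-parameter orbit $t \mapsto (e^{it\lambda_i}, e^{it\mu_j})$ as the annihilator of the lattice of integer relations, with your quotient/divisibility argument correctly extending the weight-zero hypothesis of condition~2 to all relations via the generator $d$ of the image of $(\ell,m)\mapsto L+M$. The only points worth making fully explicit in a final write-up are that the sign in condition~1 is constant on each eigenspace (otherwise a linear combination of eigenvectors would violate the condition), and that eigenvalues whose eigenprojections kill both $e_u$ and $e_v$ contribute nothing to either side of the limit; you flag both, and neither affects correctness.
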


 Throughout, we let $e_j$ denote the standard basis vector which consists of a 1 in the $j$th row and every other entry is 0.  For $n$ even, we define the matrices 
\begin{equation}\label{eq:b1b2} 
B_1(w)=\left[\begin{array}{cccccc} 
w&1&0&0&\ldots&0\\
1&0&1&0&\ldots &0\\
0&1&0&1&\ldots&0\\
\vdots&\vdots & \ddots &\ddots&\ddots &\vdots \\
0&0&\ldots&1&0&1\\
0&0&\ldots&0&1&1
\end{array}\right], \quad B_2(w)=\left[\begin{array}{cccccc} 
w&1&0&0&\ldots&0\\
1&0&1&0&\ldots &0\\
0&1&0&1&\ldots&0\\
\vdots&\vdots & \ddots &\ddots&\ddots &\vdots \\
0&0&\ldots&1&0&1\\
0&0&\ldots&0&1&-1
\end{array}\right],
\end{equation} 
where $B_1(w)$ and $B_2(w)$ are of order $\frac{n}{2}$, and for $n$ odd, we define the matrices
\begin{equation}\label{eq:c1c2} 
C_1(w)=\left[\begin{array}{cccccc} 
w&1&0&0&\ldots&0\\
1&0&1&0&\ldots &0\\
0&1&0&1&\ldots&0\\
\vdots&\vdots & \ddots &\ddots&\ddots &\vdots \\
0&0&\ldots&1&0&\sqrt{2}\\
0&0&\ldots&0&\sqrt{2}&0
\end{array}\right], \quad C_2(w)=\left[\begin{array}{cccccc} 
w&1&0&0&\ldots&0\\
1&0&1&0&\ldots &0\\
0&1&0&1&\ldots&0\\
\vdots&\vdots & \ddots &\ddots&\ddots &\vdots \\
0&0&\ldots&1&0&1\\
0&0&\ldots&0&1&0
\end{array}\right],
\end{equation}
  where $C_1(w)$ is of order $\frac{n+1}{2}$ and $C_2(w)$ is of order $\frac{n-1}{2}$.  We will establish by Lemma~\ref{lem:interlace} that the odd-indexed eigenvalues of $A(w)$ are the eigenvalues of $B_1(w)$ or $C_1(w)$ and the corresponding eigenvectors satisfy $v_{1i} = v_{ni} \neq 0$ and the even-indexed eigenvalues of $A(w)$ are the eigenvalues of $B_2(w)$ or $C_2(w)$ and the corresponding eigenvectors $x$ satisfy $v_{1i} = -v_{ni} \neq 0$, allowing us to apply the linear combination condition of Lemma~\ref{lem:pgst} to determine the presence of pretty good state transfer.

We note the following results on the eigenvalues and eigenvectors of $A(w)$; we defer the proofs of these results to Sections~\ref{sec:l1l2} and \ref{sec:other}. We have:
\begin{align}
    \label{eqn:eval} \lambda_j &= 2 \cos (\theta_j) \text{ for some } \theta_j \text{ such that } \frac{(j - 2) \pi}{n - 1} \le \theta_j \le \frac{j \pi}{n + 1}, \quad 3 \le j \le n;  \\
    \label{eqn:v11} v_{11}^2 &= \left(2+ \frac{8}{n-1}\sum_{\ell=1}^{\lceil\frac{n-1}{2}\rceil } \frac{\sin^2\left(\frac{(2\ell-1)\pi}{n-1}\right)}{\left(\lambda_1 - 2\cos\left(\frac{(2\ell-1)\pi}{n-1}\right)\right)^2} \right)^{-1};  \\
\label{eqn:v12} v_{12}^2 &= \left(2+ \frac{8}{n-1}\sum_{\ell=1}^{\lceil\frac{n-2}{2}\rceil } \frac{\sin^2\left(\frac{2\ell\pi}{n-1}\right)}{\left(\lambda_2 - 2\cos\left(\frac{2\ell\pi}{n-1}\right)\right)^2} \right)^{-1}; \\
\label{eqn:v1j}    v_{1j}^2 &= \begin{cases}
\frac{\displaystyle 2 \cos^2 \left( \frac{n - 1}{2} \theta_j \right)}{\displaystyle n + \frac{\sin (n \theta_j)}{\sin (\theta_j)}}, & j \text{ odd}, \\ ~ \\
\frac{\displaystyle 2 \sin^2 \left( \frac{n - 1}{2} \theta_j \right)}{\displaystyle n - \frac{\sin (n \theta_j)}{\sin (\theta_j)}}, & j \text{ even},
\end{cases} \quad \text{where } \lambda_j = 2 \cos (\theta_j). 
\end{align}

\section{No PGST for a Dense Subset of Values for $w$}\label{sec:nopgst} 

As observed in Section \ref{sec:intro}, if $w$ is transcendental, then there is PGST between the end vertices of $P_n^w$. Our goal in this section is to establish the existence of a dense subset of values for $w$ that prevents PGST between end vertices. The following technical lemmas will help to accomplish that objective. 

\begin{lemma} \label{lem:gap23}
For $w \ge 1$, $v_{12}^2(w) - v_{13}^2(w) > 0$.
\end{lemma}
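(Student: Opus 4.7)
The plan is to use \eqref{eqn:v12} for $v_{12}^2$ and \eqref{eqn:v1j} (at the odd index $j=3$) for $v_{13}^2$, and show their difference is positive by placing both in a common trigonometric form.

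The crucial simplification is to substitute the boundary-condition equations---the relations satisfied by $\theta_2$ and $\theta_3$ in view of $w$---back into the formulas. The symmetric (odd) case yields $\cos((n+1)\theta_3/2) = w \cos((n-1)\theta_3/2)$, and the antisymmetric (even) case with $\lambda_2 \le 2$ yields $\sin((n+1)\theta_2/2) = w \sin((n-1)\theta_2/2)$. A short calculation with the angle-addition formulas then gives the identities
\[
\frac{\sin(n\theta_3)}{\sin\theta_3} = \frac{(1-w^2)\cos^2((n-1)\theta_3/2)}{\sin^2\theta_3}, \qquad \frac{\sin(n\theta_2)}{\sin\theta_2} = \frac{(w^2-1)\sin^2((n-1)\theta_2/2)}{\sin^2\theta_2}.
\]
Substituting into \eqref{eqn:v1j} and into the parallel trigonometric formula for $v_{12}^2$ (which one derives in the same way when $\lambda_2 \le 2$) puts the two quantities in the common shape
\[
v_{13}^2 = \frac{2 X_3 Y_3}{n Y_3 - (w^2 - 1) X_3}, \qquad v_{12}^2 = \frac{2 X_2 Y_2}{n Y_2 - (w^2 - 1) X_2},
\]
with $X_3 = \cos^2((n-1)\theta_3/2)$, $Y_3 = \sin^2\theta_3$, $X_2 = \sin^2((n-1)\theta_2/2)$, $Y_2 = \sin^2\theta_2$, and a hyperbolic analogue governing the regime $\lambda_2 > 2$.

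Combining the two fractions, positivity of $v_{12}^2 - v_{13}^2$ reduces to positivity of the numerator
\[
n Y_2 Y_3 (X_2 - X_3) + (w^2 - 1) X_2 X_3 (Y_3 - Y_2),
\]
both denominators being strictly positive (equivalent to $v_{1j}^2>0$). At $w=1$ the second summand vanishes, and the boundary equations admit the explicit solutions $\theta_2 = \pi/n$, $\theta_3 = 2\pi/n$, giving $X_2 = \cos^2(\pi/(2n))$ and $X_3 = \cos^2(\pi/n)$; the inequality thus reduces to $\cos^2(\pi/(2n)) > \cos^2(\pi/n)$, which is immediate since $\cos$ is decreasing on $(0,\pi/2)$. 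For $w>1$ I would track how $\theta_2,\theta_3$ vary with $w$ via the boundary equations to determine the signs of $X_2-X_3$ and $Y_3-Y_2$, and treat the regime $\lambda_2>2$ via the hyperbolic analogue (in which $X_2$ grows like $\sinh^2((n-1)\psi_2/2)$ and $v_{12}^2 \to 1/2$ while $v_{13}^2 \to 0$ as $w\to\infty$, rendering the inequality loose).

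The principal obstacle is ensuring the numerator is positive uniformly over $w\in[1,\infty)$: the two bracketed terms may have opposite signs, so one must verify that no cancellation can occur, with the estimate most delicate near the transition $\lambda_2 = 2$ and for large $n$ where every quantity is small. A monotonicity-plus-continuity argument, anchored by the explicit verifications at $w=1$ and as $w\to\infty$, should close the remaining range.
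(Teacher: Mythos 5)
Your setup is sound: the identities $\frac{\sin(n\theta_3)}{\sin\theta_3}=\frac{(1-w^2)\cos^2(\frac{n-1}{2}\theta_3)}{\sin^2\theta_3}$ and $\frac{\sin(n\theta_2)}{\sin\theta_2}=\frac{(w^2-1)\sin^2(\frac{n-1}{2}\theta_2)}{\sin^2\theta_2}$ do follow from the boundary conditions (they are essentially the computation in Remark~\ref{rmk:v-entry}), the resulting common form for $v_{12}^2$ and $v_{13}^2$ is correct, and your $w=1$ verification agrees with the paper's base case, where $\theta_2=\pi/n$, $\theta_3=2\pi/n$ and the difference reduces to $\frac{1}{n}\bigl(\cos(\pi/n)-\cos(2\pi/n)\bigr)>0$. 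The problem is that everything after the base case is a plan rather than a proof. The "principal obstacle" you name --- showing the numerator $nY_2Y_3(X_2-X_3)+(w^2-1)X_2X_3(Y_3-Y_2)$ cannot vanish for any $w>1$ --- \emph{is} the content of the lemma, and "monotonicity-plus-continuity anchored at $w=1$ and $w\to\infty$" does not close it: $v_{12}^2(w)-v_{13}^2(w)$ is not claimed (and not known a priori) to be monotone in $w$, and positivity at the two endpoints of $[1,\infty)$ says nothing about the interior. You would also still have to build the hyperbolic analogue for the regime $\lambda_2>2$, which for $w$ much above $1$ is the generic case.

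The idea your proposal is missing is the coupling between the eigenvector entries and the motion of the eigenvalues: by Hellmann--Feynman, $\frac{d}{dw}(\lambda_2-\lambda_3)=2\bigl(v_{12}^2-v_{13}^2\bigr)$. The paper exploits this in a bootstrapping (open-and-closed) argument: as long as $v_{12}^2-v_{13}^2>0$, the gap $\lambda_2-\lambda_3$ can only grow, hence stays at least $\lambda_2(1)-\lambda_3(1)=2\cos(\pi/n)-2\cos(2\pi/n)$; writing $\lambda_3(w)=2\cos\theta$ this forces $\lambda_2(w)\ge 2\cos(\theta-\pi/n)$, and a direct comparison of the even formula in \eqref{eqn:v1j} at angle $\theta-\pi/n$ against the odd formula at $\theta$ --- combined with the fact that $v_{12}^2$ is increasing in $w$ by \eqref{eqn:v12}, which neatly sidesteps the $\lambda_2>2$ regime --- re-establishes $v_{12}^2>v_{13}^2$, and continuity propagates the conclusion to all $w\ge1$. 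Without this self-reinforcing mechanism, or some concrete substitute that controls $\theta_2$ and $\theta_3$ jointly as $w$ varies, the inequality on $(1,\infty)$ remains unproven.
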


\begin{proof}
We will establish that $\lambda_2(w) - \lambda_3(w) \ge \lambda_2(1) - \lambda_3(1)$ and $v_{12}^2(w) - v_{13}^2(w) > 0$ for all $w \ge 1$.  For $w = 1$, we have $\lambda_2(w) - \lambda_3(w) = \lambda_2(1) - \lambda_3(1)$ and  $\lambda_2(1) = 2 \cos \left( \frac{\pi}{n} \right)$ and $\lambda_3(1) = 2 \cos \left( \frac{2 \pi}{n} \right)$.  Applying (\ref{eqn:v1j}), we obtain
\begin{align*}
    &\quad v_{12}^2(1) - v_{13}^2(1) \\
    &= \frac{\displaystyle 2\sin^2 \left(\frac{n - 1}{2} \left(\frac{\pi}{n}\right)\right)}{\displaystyle n - \frac{\sin (n \frac{\pi}{n})}{\sin (\frac{\pi}{n})}} - \frac{\displaystyle 2\cos^2 \left(\frac{n - 1}{2} \left(\frac{2 \pi}{n}\right)\right)}{\displaystyle n + \frac{\sin (n \frac{2 \pi}{n})}{\sin (\frac{2 \pi}{n})}} \\
    &= \frac{2}{n} \left(\sin^2 \left(\frac{(n - 1) \pi}{2n}\right) - \cos^2 \left(\frac{(n - 1) \pi}{n}\right) \right) \\
    &= \frac{-1}{n} \left( \cos \left(\frac{(n - 1) \pi}{n}\right) + \cos \left(\frac{2 (n - 1) \pi}{n}\right)  \right) \\ 
    &= \frac{1}{n} \left(\cos\left(\frac{\pi}{n}\right) - \cos\left(\frac{2\pi}{n}\right)\right) \\
    &> 0.
\end{align*}

Next, suppose for some $w^* \ge 1$ that $\lambda_2(w^*) - \lambda_3(w^*) \ge \lambda_2(1) - \lambda_3(1)$ and $v_{12}^2(w^*) - v_{13}^2(w^*) > 0$.  Since $\frac{d}{dw} (\lambda_2(w) - \lambda_3(w)) = 2 v_{12}^2(w) - 2 v_{13}^2(w) > 0$, then there exists a $\delta > 0$ such that for all $w^* < w \le w^* + \delta$, we have $\lambda_2(w) - \lambda_3(w) \ge \lambda_2(w^*) - \lambda_3(w^*) \ge \lambda_2(1) - \lambda_3(1)$.  Let $\theta$ be such that $\lambda_3(w) = 2 \cos (\theta)$; note that $\frac{\pi}{n - 1} \le \theta < \frac{2\pi}{n}$.  Since $2 \cos (\theta - \frac{\pi}{n}) - 2 \cos \theta < 2 \cos (\frac{\pi}{n}) - 2 \cos (\frac{2 \pi}{n})$, then $\lambda_2(w) \ge 2 \cos (\theta - \frac{\pi}{n})$.

We observe
\begin{align*}
    &\quad \left( n - \frac{\sin \left( n \left(\theta - \frac{\pi}{n} \right) \right)}{\sin \left(\theta - \frac{\pi}{n} \right)} \right) - \left( n + \frac{\sin \left( n \theta \right)}{\sin \left(\theta \right)} \right)
    = \sin (n \theta) \left(\csc \left(\theta - \frac{\pi}{n} \right) - \csc (\theta) \right) 
    < 0,
\end{align*}
as $\sin(n \theta) < 0$ and $\left(\csc \left(\theta - \frac{\pi}{n} \right) - \csc (\theta) \right) > 0$.  Moreover, we have 
\[
2 \sin^2 \left( \frac{n - 1}{2} \left(\theta - \frac{\pi}{n} \right) \right) = 2 \cos^2 \left( \frac{n - 1}{2} \left(\theta + \frac{\pi}{n(n - 1)} \right) \right) > 2 \cos^2 \left( \frac{n - 1}{2} \theta \right),
\]
and therefore
\[
\frac{\displaystyle 2 \sin^2 \left( \frac{n - 1}{2} \left(\theta - \frac{\pi}{n} \right) \right)}{\displaystyle n - \frac{\sin \left( n \left(\theta - \frac{\pi}{n} \right) \right)}{\sin \left(\theta - \frac{\pi}{n} \right)}} > \frac{\displaystyle 2 \cos^2 \left( \frac{n - 1}{2} \theta \right)}{\displaystyle n + \frac{\sin \left( n \theta \right)}{\sin \left(\theta \right)}}.
\]
Now, it follows from  \eqref{eqn:v12} that  $v_{12}^2$ is an increasing function, so $v_{12}^2(w) - v_{13}^2(w) > 0$ for all $w^* \le w \le w^* + \delta$.  

Finally, suppose for some $w^* > 1$ that for all $1 \le w < w^*$, we have $\lambda_2(w) - \lambda_3(w) \ge \lambda_2(1) - \lambda_3(1)$ and $v_{12}^2(w) - v_{13}^2(w) > 0$.  Then since $\lambda_2(w)$ and $\lambda_3(w)$ are continuous functions of $w$, we obtain that $\lambda_2(w^*) - \lambda_3(w^*) \ge \lambda_2(1) - \lambda_3(1)$.  By the previous argument, $v_{12}^2(w^*) - v_{13}^2(w) > 0$.

Therefore, $\lambda_2(w) - \lambda_3(w) \ge \lambda_2(1) - \lambda_3(1)$ and $v_{12}^2(w) - v_{13}^2(w) > 0$ for all $w \ge 1$.
\end{proof}

\begin{lemma} \label{lem:evec-recur}
Let $A(w)$ be the adjacency matrix of the path on $n$ vertices with loops of weight $w$ on the end vertices.  Let $\lambda$ be an eigenvalue of $A(w)$ and consider the recurrence relation given by $a_1 = \lambda - w$, $a_k = \lambda - 1 / a_{k - 1}$.  If $v_1 = 1$ and $v_k = a_{k - 1} v_{k - 1}$, then $v = \begin{bmatrix} v_1 & v_2 & \cdots & v_n \end{bmatrix}^T$ is an eigenvector of $A(w)$ with eigenvalue $\lambda$.
\end{lemma}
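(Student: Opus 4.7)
The plan is to verify $A(w) v = \lambda v$ coordinate by coordinate, obtaining the first $n-1$ equations directly from the recurrence, and then handling the last one separately using the hypothesis that $\lambda$ is an eigenvalue.

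For the first coordinate, I would substitute directly: $(A(w) v)_1 = w v_1 + v_2 = w + a_1 = w + (\lambda - w) = \lambda v_1$. For each $k$ with $2 \le k \le n - 1$, I would compute $(A(w) v)_k = v_{k-1} + v_{k+1}$; using $v_{k+1} = a_k v_k = \lambda v_k - v_k / a_{k-1}$ together with $v_k / a_{k-1} = v_{k-1}$, this collapses to $\lambda v_k$, as needed. These checks are purely algebraic and use only the defining relations $a_1 = \lambda - w$ and $a_k = \lambda - 1/a_{k-1}$.

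All that then remains is the equation at position $n$. A direct calculation shows $(A(w) - \lambda I) v = c\, e_n$ for some scalar $c$, and the lemma reduces to proving $c = 0$. Here I would invoke the assumption that $\lambda$ is an eigenvalue: pick any nonzero $\lambda$-eigenvector $\tilde v$ of $A(w)$. Since $A(w) - \lambda I$ is symmetric, $\tilde v^{T} (A(w) - \lambda I) v = 0$, which gives $c\, \tilde v_n = 0$. It therefore suffices to show $\tilde v_n \ne 0$, and this follows because if $\tilde v_n = 0$ then the last eigenvector equation for $\tilde v$ forces $\tilde v_{n-1} = 0$, and descending induction via the interior three-term recurrence $\tilde v_{k-1} = \lambda \tilde v_k - \tilde v_{k+1}$ forces every entry of $\tilde v$ to vanish, contradicting $\tilde v \ne 0$. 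Thus $c = 0$ and $v$ is an eigenvector.

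The main obstacle is precisely this closure at coordinate $n$: the recurrence is engineered to handle only the first $n-1$ rows, and the final row is what encodes the spectral hypothesis. I would also note at the outset that the statement implicitly assumes $a_{k-1} \ne 0$ for $2 \le k \le n-1$, so that the recurrence is well-defined; under this proviso all the manipulations above go through without incident.
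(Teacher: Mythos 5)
Your proof is correct, and its overall architecture matches the paper's: the recurrence is designed to satisfy rows $1$ through $n-1$ of $(A(w)-\lambda I)v=0$ automatically, and the spectral hypothesis is what closes the final row. The difference lies in how that final row is closed. The paper runs Gaussian elimination on $\lambda I - A(w)$, showing the first $n-1$ rows reduce to $a_{k-1}v_{k-1}-v_k=0$, and then invokes rank deficiency (``since $\lambda$ is an eigenvalue, $\lambda I - A(w)$ has rank less than $n$, so the last row must reduce to $0$''). You instead write $(A(w)-\lambda I)v = c\,e_n$ and kill $c$ by pairing against a genuine $\lambda$-eigenvector $\tilde v$ via symmetry, which obliges you to prove the extra fact that $\tilde v_n\neq 0$ (your descending-induction argument for this is fine, and is the same observation the paper uses elsewhere, in the proof of Lemma~\ref{lem:interlace}, to show eigenvalues are simple). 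The paper's rank argument is shorter and needs no information about the eigenvector; your route is slightly longer but makes the role of the last row completely explicit and, as a side benefit, never divides by $a_{n-1}$ in the final step. Your closing remark that the statement tacitly assumes the $a_k$ are nonzero (so the recurrence is well defined) is a fair observation that applies equally to the paper's elimination argument.
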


\begin{proof}
If $v$ is an eigenvector of $A(w)$ with eigenvalue $\lambda$, then $v$ must satisfy $(\lambda I - A(w)) v = 0$.  We prove that applying Gaussian elimination gives the system $a_{k - 1} v_{k - 1} - v_k = 0$, $2 \le k \le n$, from which the result immediately follows.  It is clear that $k = 2$ corresponds to the first row.  Now, suppose row $j$, $1 \le j \le n - 2$ gives the equation $a_j v_j - v_{j + 1} = 0$.  Then row $j + 1$ gives the equation $-v_j + \lambda v_{j + 1} - v_{j + 2}$.  Adding $1/a_j$ times row $j$ to row $j + 1$ gives $(\lambda - 1 / a_j) v_{j + 1} - v_{j + 2} = a_{j + 1} v_{j + 1} - v_{j + 2}$ as desired.  Finally, since $\lambda$ is an eigenvalue, $(\lambda I - A(w))$ has rank less than $n$, so the last row must reduce to 0.
\end{proof}

\begin{lemma} \label{lem:gap12}
$v_{11}^2(w) - v_{12}^2(w) < 0$.
\end{lemma}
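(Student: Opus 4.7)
The plan is to express $v_{11}^2$ and $v_{12}^2$ via the characteristic polynomials of $A(w)$ and its first-row/column deletion, which reduces the desired inequality to a transparent comparison of eigenvalue products. Write $p(\lambda) = \det(\lambda I - A(w))$ and $q_1(\lambda) = \det(\lambda I - A(w)_1)$, where $A(w)_1$ is the principal submatrix of $A(w)$ obtained by deleting the first row and column. Taking the residue at $\lambda = \lambda_j$ in the standard resolvent identity $e_1^T(\lambda I - A(w))^{-1}e_1 = q_1(\lambda)/p(\lambda) = \sum_j v_{1j}^2/(\lambda - \lambda_j)$ yields $v_{1j}^2 = q_1(\lambda_j)/p'(\lambda_j)$. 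Since $A(w)$ is tridiagonal with nonzero sub- and super-diagonal, its spectrum is simple, so $p'(\lambda_j) \ne 0$.

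The crux of the argument is to compute $q_1(\lambda_1)$ and $q_1(\lambda_2)$ in closed form. Writing $S_k(\lambda)$ for the characteristic polynomial of the unweighted path $P_k$ (so that $S_0 = 1$, $S_1 = \lambda$, $S_k = \lambda S_{k-1} - S_{k-2}$), the tridiagonal structure of $A(w)$ gives the explicit formulas $p(\lambda,w) = S_n(\lambda) - 2w\,S_{n-1}(\lambda) + w^2\,S_{n-2}(\lambda)$ and $q_1(\lambda,w) = S_{n-1}(\lambda) - w\,S_{n-2}(\lambda)$. Combined with the standard identity $S_{n-1}^2 - S_n S_{n-2} = 1$, this gives the factorization $p(\lambda,w)\,S_{n-2}(\lambda) = (q_1(\lambda,w) - 1)(q_1(\lambda,w) + 1)$. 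Every eigenvalue $\lambda_j$ therefore satisfies either $q_1(\lambda_j,w) = 1$ or $q_1(\lambda_j,w) = -1$, and since the spectrum is simple these two families do not mix as $w$ varies continuously. Checking the case $w = 1$ (where $\lambda_1 = 2$ and $\lambda_2 = 2\cos(\pi/n)$, giving $q_1(\lambda_1,1) = n - (n-1) = 1$ and $q_1(\lambda_2,1) = 0 - 1 = -1$) identifies $\lambda_1$ with the $+1$ family and $\lambda_2$ with the $-1$ family, so $v_{11}^2 = 1/p'(\lambda_1)$ and $v_{12}^2 = -1/p'(\lambda_2)$.

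It then remains to compare $p'(\lambda_1)$ and $-p'(\lambda_2)$. Using $p'(\lambda_j) = \prod_{k \ne j}(\lambda_j - \lambda_k)$ and factoring out the common positive quantity $\lambda_1 - \lambda_2$, the comparison reduces to $\prod_{k \ge 3}(\lambda_1 - \lambda_k) > \prod_{k \ge 3}(\lambda_2 - \lambda_k) > 0$, which is immediate from the term-by-term inequalities $\lambda_1 - \lambda_k > \lambda_2 - \lambda_k > 0$ for $k \ge 3$ (a consequence of $\lambda_1 > \lambda_2 > \lambda_k$). Hence $p'(\lambda_1) > -p'(\lambda_2) > 0$, and therefore $v_{11}^2 = 1/p'(\lambda_1) < 1/[-p'(\lambda_2)] = v_{12}^2$. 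The only technical hurdle is establishing the factorization $p\,S_{n-2} = (q_1 - 1)(q_1 + 1)$; once that identity is in hand the remaining eigenvalue-product comparison is immediate and, notably, requires no case analysis on the parity of $n$ or on the specific range of $w$.
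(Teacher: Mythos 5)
Your proof is correct, but it takes a genuinely different route from the paper's. The paper argues at the level of eigenvector entries: using the recurrence of Lemma \ref{lem:evec-recur} and the Perron structure of $B_1(w),B_2(w)$ (resp.\ $C_1(w),C_2(w)$), it shows inductively that the $\lambda_1$-eigenvector normalised to have first entry $1$ dominates the corresponding $\lambda_2$-eigenvector entrywise on the first half of the path, whence it has strictly larger norm and therefore a strictly smaller normalised first entry. You instead work with characteristic polynomials: writing $v_{1j}^2=q_1(\lambda_j)/p'(\lambda_j)$, you use the Chebyshev-type identity $S_{n-1}^2-S_nS_{n-2}=1$ to obtain $p\,S_{n-2}=q_1^2-1$, so that $q_1(\lambda_j)=\pm1$ at every eigenvalue, and the sign is pinned down either by your continuity-from-$w=1$ argument or, even more directly, by noting that $v_{1j}^2>0$ forces $q_1(\lambda_j)=\mathrm{sign}\,p'(\lambda_j)=(-1)^{j-1}$. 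The inequality then reduces to the term-by-term comparison $\prod_{k\ge3}(\lambda_1-\lambda_k)>\prod_{k\ge3}(\lambda_2-\lambda_k)>0$, which needs only simplicity and the ordering of the spectrum (and $n\ge3$ for strictness, consistent with the paper's standing hypothesis). Your approach buys a clean closed form $v_{1j}^2=1/|p'(\lambda_j)|$ valid for every $j$, avoids the parity case analysis on $n$, and does not need the Perron-vector machinery; the paper's approach is more elementary, reuses Lemma \ref{lem:evec-recur} which it has already established, and generalises more readily to comparisons of interior eigenvector entries. All the ingredients you invoke (simplicity of the spectrum of an unreduced symmetric tridiagonal matrix, the residue formula for $[(\lambda I-A)^{-1}]_{11}$, the values $\lambda_1(1)=2$, $\lambda_2(1)=2\cos(\pi/n)$, and the determinant identity) check out.
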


\begin{proof}
Let $y$ be the eigenvector of $\lambda_1(w)$ with $y_1 = 1$ and let $z$ be the eigenvector of $\lambda_2(w)$ with $z_1 = 1$.  Consider the recurrence relations given by $a_1 = \lambda_1 - w$, $a_k = \lambda_1 - 1 / a_{k - 1}$ and $b_1 = \lambda_2 - w$, $b_k = \lambda_2 - 1 / b_{k - 1}$.  Then by Lemma~\ref{lem:evec-recur}, we have that $y_k = a_{k - 1} y_{k - 1}$ and $z_k = b_{k - 1} z_{k - 1}$.

We will show that for all $1 \le k < n / 2$, $a_k > b_k$ and $y_{k + 1} > z_{k + 1}$.  We note that $\begin{bmatrix} y_1 & y_2 & \cdots & y_{\lceil \frac{n}{2} \rceil} \end{bmatrix}^T$ is the Perron vector for $B_1(w)$ or $C_1(w)$,  $\begin{bmatrix} z_1 & z_2 & \cdots & z_{\lfloor \frac{n}{2} \rfloor} \end{bmatrix}^T$ is the Perron vector for $B_2(w)$ or $C_2(w)$, and for odd $n$, $z_{\frac{n + 1}{2}} = 0$, therefore $a_k, b_k, y_{k + 1}, z_{k + 1} \ge 0$ for $1 \le k < n / 2$.  For $k = 1$, we obtain that $y_2 = a_1 = \lambda_1 - w > \lambda_2 - w = b_1 = z_2$.  Now suppose for some $j \ge 1$, we have $a_j > b_j \ge 0$ and $y_{j + 1} > z_{j + 1} \ge 0$.  Then we have $a_{j + 1} = \lambda_1 - 1 / a_j > \lambda_2 - 1 / b_j = b_{j + 1}$ and $y_{j + 2} = a_{j + 1} y_{j + 1} > b_{j + 1} z_{j + 1} = z_{j + 2}$.  It follows that $v_{11}^2(w) < v_{12}^2(w)$, as desired.
\end{proof}

Here is the main result of  this section. 

\begin{theorem}
Let $A(w)$ be the adjacency matrix of the path on $n \ge 3$ vertices with loops of weight $w \ge 1$ on the end vertices.  The set of weights $w$ such that $A(w)$ does not have pretty good state transfer is a dense subset of $[1,\infty)$.
\end{theorem}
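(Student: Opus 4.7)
The plan is to invoke Lemma~\ref{lem:pgst} using only the three largest eigenvalues. By the eigenvector structure announced before \eqref{eqn:eval} (and established by the block matrices $B_1,B_2,C_1,C_2$), every eigenvector of $A(w)$ satisfies $v_{1i}=v_{ni}$ when $i$ is odd and $v_{1i}=-v_{ni}$ when $i$ is even, so condition~(1) of Lemma~\ref{lem:pgst} holds automatically. To violate condition~(2), I would observe that for coprime positive integers $p,q$ with $p+q$ odd, the identity
\[
q\,\lambda_1(w)-(p+q)\,\lambda_2(w)+p\,\lambda_3(w)=0
\]
has coefficient-sum $q-(p+q)+p=0$, and (since only $\lambda_2$ lies in the ``minus'' class among the first three eigenvalues) the sum over the minus class is $-(p+q)$, which is odd. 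Each such $w$ is therefore a weight at which pretty good state transfer is impossible, so it suffices to produce a dense set of weights $w\in[1,\infty)$ satisfying the displayed identity for some such $p,q$.

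The identity is equivalent to $f(w)=p/q$, where
\[
f(w)\;:=\;\frac{\lambda_1(w)-\lambda_2(w)}{\lambda_2(w)-\lambda_3(w)}.
\]
I would then show that $f$ is continuous and strictly decreasing on $[1,\infty)$, taking values in $(0,\infty)$. Continuity and positivity of both gaps follow from interlacing together with Lemma~\ref{lem:gap23}. For strict monotonicity, use the first-order perturbation formula $\frac{d\lambda_i}{dw}=v_{1i}^2+v_{ni}^2=2v_{1i}^2$ (valid since $A(w)=A(1)+(w-1)(e_1e_1^T+e_ne_n^T)$ and $|v_{1i}|=|v_{ni}|$): Lemma~\ref{lem:gap12} gives $\tfrac{d}{dw}(\lambda_1-\lambda_2)<0$ while Lemma~\ref{lem:gap23} gives $\tfrac{d}{dw}(\lambda_2-\lambda_3)>0$, so $f$ is a ratio of a strictly decreasing positive function over a strictly increasing positive function, hence strictly decreasing.

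With $f$ a continuous, strictly decreasing function on $[1,\infty)$, its image on any open subinterval $I\subseteq[1,\infty)$ is a non-degenerate subinterval of $(0,\infty)$. Rationals $p/q$ in lowest terms with $p+q$ odd are dense in $(0,\infty)$, so $f(I)$ contains infinitely many of them, each pulled back by strict monotonicity to a unique $w\in I$. This produces a bad weight in every open subinterval of $[1,\infty)$, proving density. The main obstacle is really the monotonicity/positivity input encapsulated in Lemmas~\ref{lem:gap12} and \ref{lem:gap23}; once those are granted, the rest is a short continuity-plus-density argument, the only fine point being to confirm that the structural dichotomy $v_{1i}=\pm v_{ni}$ (rather than merely $v_{1i}=0$) really does hold for all relevant eigenvectors of $A(w)$, as noted in the discussion following \eqref{eq:c1c2}.
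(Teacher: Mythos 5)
Your proposal is correct and follows essentially the same route as the paper: both arguments apply Lemma~\ref{lem:pgst} to the integer relation $q\lambda_1-(p+q)\lambda_2+p\lambda_3=0$ with $p+q$ odd, and both reduce density of bad weights to continuity and strict monotonicity of the gap ratio (your $f$ is the reciprocal of the paper's level-curve function $x(w)=\frac{\lambda_2-\lambda_3}{\lambda_1-\lambda_2}$), with the monotonicity supplied by Lemmas~\ref{lem:gap12} and~\ref{lem:gap23} via $\frac{d\lambda_i}{dw}=2v_{1i}^2$. The only difference is presentational: you pull back a dense set of admissible rationals through the strictly monotone $f$, whereas the paper fixes $w^*$ and builds a convergent sequence of bad weights by repeated applications of the Intermediate Value Theorem.
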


\begin{proof}
Let $w^* \ge 1$.  We claim there exists a sequence $\{w_j\}$ such that $\lim_{j \to \infty} w_j = w^*$ and $A(w_j)$ does not have pretty good state transfer.  Let $\lambda_k(w)$, $1 \le k \le n$ denote the $k$-th largest eigenvalue of $A(w)$.  Consider the function
\[
h(w, x) = x \lambda_1(w) - (x + 1) \lambda_2(w) + \lambda_3(w).
\]
It has a level curve $h(w, x) = 0$ given by
\[
x = \frac{\lambda_2(w) - \lambda_3(w)}{\lambda_1(w) - \lambda_2(w)}.
\]
In particular, we note that $\lambda_1(w) > \lambda_2(w) > \lambda_3(w)$, so $x(w) > 0$, and $x$ is a continuous function of $w$ as each $\lambda_k(w)$ is a continuous function of $w$.  By Lemmas~\ref{lem:gap23} and \ref{lem:gap12}, we have that $x$ is increasing for $w \ge 1$.

Let $x^* = x(w^*)$ and let $\{x_j = \frac{p_j}{q_j}\}$ be a positive decreasing sequence such that $p_j$ is even, $q_j$ is odd, and $\lim_{j \to \infty} x_j = x^*$.  We now construct a sequence $\{w_j\}$.  We have
\begin{align*}
    h(w^*, x_1) &= h(w^*, x^*) + (x_1 - x^*) (\lambda_1(w^*) - \lambda_2(w^*)) > 0, \\
    \lim_{w \to \infty} h(w, x_1) &= -\infty,
\end{align*}
so by the Intermediate Value Theorem, and the monotonicity of $x$, there exists a unique $w_1 > w^*$ such that $h(w, x_1) = 0$.

Now, suppose for some $m \ge 1$, we have $w_{m} > w^*$ such that $h(w, x_m) = 0$.  We have
\begin{align*}
    h(w^*, x_{m + 1}) &= h(w^*, x^*) + (x_{m + 1} - x^*) (\lambda_1(w^*) - \lambda_2(w^*)) > 0, \\
    h(w_m, x_{m + 1}) &= h(w_m, x_m) + (x_{m + 1} - x_m) (\lambda_1(w_m) - \lambda_2(w_m)) < 0,
\end{align*}
so by the Intermediate Value Theorem, and the monotonicity of $x$, there exists a unique $w_{m + 1}$ such that $w_m > w_{m + 1} > w^*$ and $h(w_{m + 1}, x_{m + 1}) = 0$.  

It follows that
\[
p_{m + 1} \lambda_1(w_{m + 1}) - (p_{m + 1} + q_{m + 1}) \lambda_2(w_{m + 1}) + q_{m + 1} \lambda_3(w_{m + 1}) = 0,
\]
the sum of the coefficients is zero, and the middle coefficient is odd, so by Lemma~\ref{lem:pgst}, $A(w_{m + 1})$ does not have pretty good state transfer.

Therefore, by the Principle of Mathematical Induction, we obtain a decreasing sequence $\{w_j\}$ that converges to $w^*$ and such that $A(w_j)$ does not have pretty good state transfer.
\end{proof}

\section{Estimates for $\lambda_1$ and $\lambda_2$ and their Eigenvectors}\label{sec:l1l2}

One of our main objectives is to provide concrete estimates of the fidelity of state transfer between the end vertices of $P_n^w.$ 
In order to do that, we required detailed information on the eigenvalues and eigenvectors of the adjacency matrix $A(w)$. This section focuses on those details for the two largest eigenvalues.

\begin{theorem}\label{thm:bds} Suppose that $n\in \naturals$ with $n\ge 3$ and that $w>1.$ Denote the largest and second--largest eigenvalues of $A(w)$ by $\lambda_1$ and $\lambda_2,$ respectively. \\
a) If $n$ is even, then $\lambda_1$ and $\lambda_2$ are, respectively, the largest eigenvalues of the $\frac{n}{2} \times \frac{n}{2} $ matrices $B_1(w), B_2(w)$ in \eqref{eq:b1b2}. \\
b)  If $n$ is odd, then $\lambda_1$ and $\lambda_2$ are, respectively, the largest eigenvalues of the  matrices $C_1(w), C_2(w)$ in \eqref{eq:c1c2}, where $C_1(w)$ is of order $\frac{n+1}{2}$ and $C_2(w)$ is of order $\frac{n-1}{2}$.  \\
c) $\lambda_1 \ge w+\frac{1}{w}+\frac{(w-1)^2(w+1)}{w(w^{2 \lceil \frac{n}{2}\rceil}-1)}$ and 
$w+\frac{1}{w} \ge \lambda_2 \ge w+\frac{1}{w}-\frac{(w+1)^2(w-1)}{w(w^{2\lceil \frac{n}{2}\rceil}-1)}.$
\end{theorem}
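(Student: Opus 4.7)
The strategy is to exploit the reflection symmetry of $P_n^w$ throughout. For parts (a) and (b), I would introduce the orthonormal bases formed from the symmetric vectors $(e_i+e_{n+1-i})/\sqrt{2}$ and antisymmetric vectors $(e_i-e_{n+1-i})/\sqrt{2}$, adjoining $e_{(n+1)/2}$ on the symmetric side when $n$ is odd. A direct calculation of $A(w)$ acting on each basis vector shows that $A(w)$ block-diagonalises, with the blocks being exactly $B_1(w),B_2(w)$ when $n$ is even and $C_1(w),C_2(w)$ when $n$ is odd. To place $\lambda_1$ in the symmetric block, I would invoke Perron--Frobenius on $A(w)$ (which is nonnegative and irreducible): its simple, strictly positive Perron eigenvector is automatically symmetric. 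To place $\lambda_2$ in the antisymmetric block, I would use interlacing: in the even case $B_1-B_2=2e_{n/2}e_{n/2}^T$ is a rank-one PSD perturbation, so Weyl's inequality gives $\lambda_1(B_1)\ge\lambda_1(B_2)\ge\lambda_2(B_1)$; in the odd case $C_2$ is literally a principal submatrix of $C_1$, so Cauchy interlacing delivers the same conclusion.

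The starting point for part (c) is the decaying test vector $u\in\mathbb R^{\lceil n/2\rceil}$ with $u_k=w^{-(k-1)}$. On the ``infinite half-line'' with endpoint loop of weight $w$ this $u$ is an exact eigenvector of eigenvalue $w+\tfrac1w$, and the boundary terms of the finite blocks act as small corrections. For the lower bound on $\lambda_1$, I would form $v^T A(w) v/\|v\|^2$, where $v$ is the symmetric extension of $u$ to the whole path. Everywhere except the one or two middle vertices, $(Av)_k=(w+\tfrac1w)v_k$ exactly; the residuals at the middle are strictly positive, and evaluating $\|v\|^2$ via the geometric sum $\sum_{k=0}^{\lceil n/2\rceil-1}w^{-2k}$ gives the claimed lower bound $\lambda_1\ge w+\tfrac1w+\tfrac{(w-1)^2(w+1)}{w(w^{2\lceil n/2\rceil}-1)}$.

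To obtain the bounds on $\lambda_2$, I would anchor both inequalities at the antisymmetric block. For the upper bound $\lambda_2\le w+\tfrac1w$, introduce an auxiliary matrix $\tilde M$ obtained from that block by changing its bottom-right diagonal entry to $\tfrac1w$; this tuning makes $u$ an exact eigenvector of $\tilde M$ with eigenvalue $w+\tfrac1w$, and because $u$ is strictly positive this is the Perron eigenvalue. Since the antisymmetric block equals $\tilde M$ minus a positive scalar multiple of $e_m e_m^T$, Weyl's inequality yields $\lambda_1(\text{antisym})\le\lambda_1(\tilde M)=w+\tfrac1w$. For the matching lower bound, I would directly evaluate $u^T(\text{antisym})\,u/\|u\|^2$; this equals $w+\tfrac1w$ minus a corner correction that, after the same geometric-series simplification, reduces to the stated subtractive term.

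I expect the main obstacle to be the odd case, where $C_1$ carries a nonstandard $\sqrt 2$ in its corner and the naive $u$ does not meet the $C_1$-boundary cleanly. I would circumvent this by doing all Rayleigh-quotient computations for $\lambda_1$ (and, where needed, $\lambda_2$) directly on $A(w)$ using the (anti-)symmetric extension of $u$ to the full path, rather than on $C_1$ or $C_2$; this bypasses the $\sqrt 2$ and treats the odd and even cases in parallel, at the cost of an elementary monotonicity inequality valid for $w\ge 1$ that brings the resulting estimate into the stated form. Beyond this, the remaining work is routine bookkeeping with geometric series and routine application of Weyl's inequality.
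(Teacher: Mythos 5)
Your proposal follows essentially the same route as the paper's proof: parts (a) and (b) via the reflection symmetry, and part (c) via the auxiliary matrix with corner entry $\frac{1}{w}$ (the paper's $M_k(w)$), its Perron vector $v_j=w^{-(j-1)}$ with Perron value $w+\frac{1}{w}$, and rank-one perturbation plus Rayleigh-quotient arguments. Two small methodological differences are worth noting. To place $\lambda_2$ in the antisymmetric block you use Weyl/Cauchy interlacing; the paper instead asserts the sign pattern of the second eigenvector ($y_j>0$ for $j\le n/2$) without proof, so your route is arguably more self-contained (the paper only proves the needed interlacing later, in Lemma~\ref{lem:interlace}). For the odd case of the $\lambda_1$ lower bound you work on $A(w)$ directly with the symmetric extension of the decaying vector, while the paper compares $C_1(w)\tilde v$ entrywise with $\rho(\tilde B_1)\tilde v$ for a rescaled test vector; both handle the $\sqrt{2}$ corner, and your version does deliver a bound at least as strong as the stated one after the elementary inequality you allude to.

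One step will not close as you describe it. For the lower bound on $\lambda_2$ when $n$ is odd, the antisymmetric block is $C_2(w)=M_{(n-1)/2}(w)-\frac{1}{w}e_ke_k^T$ with $k=\frac{n-1}{2}$, and the Rayleigh quotient at the Perron vector (equivalently, your antisymmetric extension evaluated on $A(w)$) gives
\[
\lambda_2\;\ge\; w+\frac{1}{w}-\frac{(w+1)(w-1)}{w\,(w^{n-1}-1)},
\]
not the stated $w+\frac{1}{w}-\frac{(w+1)^2(w-1)}{w(w^{n+1}-1)}$. The former does not imply the latter once $w^2>w+1$, and in fact the stated bound fails numerically for odd $n$: for $n=5$, $w=2$ one has $\lambda_2=1+\sqrt{2}\approx 2.4142$, while the stated lower bound evaluates to $2.5-\frac{9}{126}\approx 2.4286$. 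This is a defect of the statement itself in the odd case, which the paper's own proof elides with ``in either case, using the technique used to establish a)''; so it is not a flaw peculiar to your approach, but your assertion that the corner correction ``reduces to the stated subtractive term'' is only true for even $n$, and a correct write-up would have to replace the odd-$n$ bound by the one displayed above (or otherwise weaken the claim).
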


\begin{proof}
a) Let $x$ denote a positive  $\lambda_1$--eigenvector of $A(w)$, and note that $x_j=x_{n+1-j},$ $j=1, \ldots, n.$ 
Observe now that the vector $\left[\begin{array}{cccc} x_1&x_2&\ldots&x_{\frac{n}{2}}\end{array}\right]^T$ is a $\lambda_1$--eigenvector of $B_1(w)$; further since it is a positive eigenvector and $B_1(w)$ is nonnegative, it therefore corresponds to the spectral radius of that matrix.  

Similarly, let $y$ be a $\lambda_2$--eigenvector of $A(w)$ and without loss of generality assume that $y_1>0.$ We have $y_j>0, j=1, \ldots, \frac{n}{2}$ and $y_j=-y_{n+1-j},$ $j=1, \ldots, n.$ Evidently $\left[\begin{array}{cccc} y_1&y_2&\ldots&y_{\frac{n}{2}}\end{array}\right]^T$ is a positive $\lambda_2$--eigenvector of the symmetric and essentially nonnegative matrix $B_2(w),$ and so it corresponds to the largest eigenvalue of that matrix. \\
 b) The proof is similar to that of a), with the following modifications: i) $x_{\frac{n-1}{2}} = x_{\frac{n+1}{2}}$; ii)~$y_{\frac{n+1}{2}} = 0$; iii)~$\left[\begin{array}{ccccc} x_1&x_2&\ldots&x_{\frac{n-1}{2}}&\sqrt{2}x_{\frac{n-1}{2}}\end{array}\right]^T$ is a positive $\lambda_1$--eigenvector of $C_1(w)$; and iv)~$\left[\begin{array}{cccc} y_1&y_2&\ldots&y_{\frac{n-1}{2}}\end{array}\right]^T$ is a positive $\lambda_2$--eigenvector of  $C_2(w).$ We leave the remaining details to the reader. \\
c) Consider the $k \times k$ matrix $M_k(w)$ given by 
$$M_k(w)=\left[\begin{array}{cccccc} 
w&1&0&0&\ldots&0\\
1&0&1&0&\ldots &0\\
0&1&0&1&\ldots&0\\
\vdots&\vdots & \ddots &\ddots&\ddots &\vdots \\
0&0&\ldots&1&0&1\\
0&0&\ldots&0&1&\frac{1}{w}
\end{array}\right].$$ It is straightforward to verify that the Perron value of $M_k(w)$ is $w+\frac{1}{w},$ and the vector $v$ with $v_j=\frac{1}{w^{j-1}},$ $j=1,\ldots, k$ is a corresponding Perron vector. 

Suppose that $n$ is even and consider the case $k=\frac{n}{2}.$ We have $B_1(w)= M_k(w)+e_ke_k^T,$ from which it follows that $\lambda_1 \ge w+\frac{1}{w} + \frac{v_k^2}{||v||_2^2}.$ The inequality $\lambda_1 \ge w+\frac{1}{w}+\frac{(w-1)^2(w+1)}{w(w^{n}-1)}$ now follows. 

Next, we suppose that $n$ is odd and consider the case $k=\frac{n+1}{2}$. Let $\tilde{B}_1$ be the matrix of order $k$ having the same structure as $B_1(w).$ Considering the vector $\tilde{v}$ with $\tilde v_j=v_j, j=1, \ldots, k-1, \tilde v_k=v_k/\sqrt{2},$ we find that $C_1(w) \tilde v \ge \rho(\tilde B_1) \tilde v,$ from which it follows that $\lambda_1= \rho(C_1(w)) \ge \rho(\tilde B_1).$ The desired lower bound on $\lambda_1$ now follows. 

In the case that $n$ is even we have $B_2(w)=M_{\frac{n}{2}}(w) - \frac{w+1}{w}e_{\frac{n}{2}}e_{\frac{n}{2}}^T,$
while if $n$ is odd, we obtain that 
$C_2(w)=M_{\frac{n-1}{2}}(w)-e_{\frac{n-1}{2}}e_{\frac{n-1}{2}}^T.$ In either case, using the technique used to establish a), we deduce that  $w+\frac{1}{w} \ge \lambda_2 \ge w+\frac{1}{w}-\frac{(w+1)^2(w-1)}{w(w^{2\lceil \frac{n}{2}\rceil}-1)}.$ 
\end{proof}

Next, we develop upper and lower bounds on $\lambda_1-\lambda_2$. It will be convenient to analyse the cases that $n$ is even and $n$ is odd separately. 

First suppose that $n$ is even. From Theorem \ref{thm:bds}, $\lambda_1$ and $\lambda_2$ are the largest eigenvalues of $B_1(w), B_2(w),$ respectively.  Note that $B_2(w) = B_1(w) - 2 e_{n / 2} e_{n / 2}^T$.  Let $v$ be a Perron vector of $B_1(w)$, normalized so that $||v||_2 = 1$.  Then $v^T B_2(w) v = \lambda_1 - 2 v_{n / 2}^2$, and since $\lambda_2$ is the spectral radius of $B_2$, we see that $\lambda_2 \ge \lambda_1 - 2 v_{n / 2}^2$.  Hence $\lambda_1 - \lambda_2 \le 2v_{n / 2}^2$.

Let $d_0 = 1$, $d_1 = \lambda_1 - 1$, and for $2 \le k \le n/2 - 1$,
\[
d_k = \det \left( \begin{bmatrix} \lambda_1 & -1 \\ -1 & \ddots & \ddots \\ & \ddots & \ddots & \ddots \\ & & \ddots & \lambda_1 & -1 \\ & & & -1 & \lambda_1 - 1 \end{bmatrix}_{k \times k} \right).
\]
Evidently, $d_j = \lambda_1 d_{j - 1} - d_{j - 2}$ for $2 \le j \le n/2 - 1$.  

Next we consider the case that $n$ is odd. 
Then $\lambda_1, \lambda_2$ are, respectively, the spectral radii of the matrices $C_1(w)$ (of order $\frac{n+1}{2}$) and $ C_2(w)$ (of order $\frac{n-1}{2}$).   Let 
\[
C_0(w) = C_1(w) - \sqrt{2} (e_{(n - 1) / 2} e_{(n + 1) / 2}^T + e_{(n + 1) / 2} e_{(n - 1) / 2}^T),
\]
and note that we can consider $C_0(w)$ as a block diagonal matrix with blocks $C_2(w)$ and $0$; hence, $\lambda_2$ is the spectral radius of $C_0(w)$. Then $v^T C_0(w) v = \lambda_1 - 2\sqrt{2} v_{(n - 1) / 2} v_{(n + 1) / 2}$, and since $\lambda_2$ is the spectral radius of $C_0(w)$, we see that $\lambda_2 \ge \lambda_1 - 2\sqrt{2} v_{(n - 1) / 2} v_{(n + 1) / 2}$.  Hence $\lambda_1 - \lambda_2 \le 2\sqrt{2} v_{(n - 1) / 2} v_{(n + 1) / 2}$.

Let $d_0 = 1$, $d_1 = \frac{\lambda_1}{\sqrt{2}}$, and for $2 \le k \le n/2 - 1$,
\[
d_k = \det \left( \begin{bmatrix} \lambda_1 & -1 \\ -1 & \ddots & \ddots \\ & \ddots & \ddots & -1 \\ & & -1 & \lambda_1 & - \sqrt{2} \\ & & & - 1 & \frac{\lambda_1}{\sqrt{2}} \end{bmatrix}_{k \times k} \right)
\]
Evidently, $d_2 = \frac{\lambda_1^2}{\sqrt{2}} - \sqrt{2}$ and $d_j = \lambda_1 d_{j - 1} - d_{j - 2}$ for $3 \le j \le (n - 1)/2$.  

We establish the following expression for $d_j$.

\begin{lemma}\label{lem:dj} Suppose that $n \in \naturals$ with $n \ge 2$. We have  
$$d_j =
\begin{cases} 
\frac{\sqrt{\lambda_1^2 - 4} + \lambda_1 - 2}
{2 \sqrt{\lambda_1^2 - 4}} (\frac{1}{2} (\lambda_1 + \sqrt{\lambda_1^2 - 4}))^j + \frac{\sqrt{\lambda_1^2 - 4} - \lambda_1 + 2}{2 \sqrt{\lambda_1^2 - 4}} (\frac{1}{2} (\lambda_1 - \sqrt{\lambda_1^2 - 4}))^j, & 0 \le j \le n/2 - 1, \text{ $n$ even};\\
\frac{1}{\sqrt{2}} (\frac{1}{2} (\lambda_1 + \sqrt{\lambda_1^2 - 4}))^j + \frac{1}{\sqrt{2}} (\frac{1}{2} (\lambda_1 - \sqrt{\lambda_1^2 - 4}))^j, & 1 \le j \le (n - 1)/2, \text{ $n$ odd}. 
\end{cases}$$
\end{lemma}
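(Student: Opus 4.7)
The plan is to recognize that in both the even and odd cases the sequence $\{d_j\}$ satisfies the same homogeneous linear recurrence $d_j = \lambda_1 d_{j-1} - d_{j-2}$ (in the even case for $2 \le j \le n/2-1$, in the odd case for $3 \le j \le (n-1)/2$), so I can solve it by the standard characteristic-equation method. The characteristic polynomial is $x^2 - \lambda_1 x + 1 = 0$, with roots $r_\pm = \tfrac{1}{2}(\lambda_1 \pm \sqrt{\lambda_1^2 - 4})$. These roots are real and distinct: from Theorem~\ref{thm:bds}(c), since $w > 1$ we have $\lambda_1 \ge w + \tfrac{1}{w} > 2$. Thus the general solution is $d_j = A r_+^j + B r_-^j$ for constants $A, B$ that must be pinned down by two initial conditions, and the closed form claimed in the lemma just needs to be identified as this general solution with the correct $A, B$.

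For the even case I would use the initial values $d_0 = 1$ and $d_1 = \lambda_1 - 1$, giving the $2 \times 2$ linear system
\[
A + B = 1, \qquad A r_+ + B r_- = \lambda_1 - 1.
\]
Substituting $B = 1 - A$ and using $r_+ - r_- = \sqrt{\lambda_1^2 - 4}$ together with $r_- = \tfrac{1}{2}(\lambda_1 - \sqrt{\lambda_1^2 - 4})$ yields
\[
A = \frac{\sqrt{\lambda_1^2 - 4} + \lambda_1 - 2}{2\sqrt{\lambda_1^2 - 4}}, \qquad B = \frac{\sqrt{\lambda_1^2 - 4} - \lambda_1 + 2}{2\sqrt{\lambda_1^2 - 4}},
\]
matching the formula. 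The closed form then holds for all $0 \le j \le n/2 - 1$ by induction via the recurrence.

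For the odd case the subtlety is that the recurrence only kicks in at $j = 3$, and the formula is claimed for $1 \le j \le (n-1)/2$ (not $j = 0$). I would therefore match at $j = 1$ and $j = 2$ using $d_1 = \lambda_1/\sqrt{2}$ and $d_2 = \lambda_1^2/\sqrt{2} - \sqrt{2}$. Using $r_\pm^2 = \lambda_1 r_\pm - 1$ (from the characteristic equation), the $j=2$ equation becomes $\lambda_1(Ar_+ + Br_-) - (A+B) = \lambda_1^2/\sqrt{2} - \sqrt{2}$, which, combined with the $j=1$ equation $Ar_+ + Br_- = \lambda_1/\sqrt{2}$, forces $A + B = \sqrt{2}$. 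Plugging back in and using $r_+ + r_- = \lambda_1$, $r_+ - r_- = \sqrt{\lambda_1^2 - 4}$ gives $A = B = 1/\sqrt{2}$. Since the claimed formula then satisfies the same two-term recurrence as $d_j$ and agrees at $j = 1, 2$, it agrees for all $j$ in the stated range.

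The work is elementary linear algebra, so there is no real obstacle; the only place that demands a little care is the odd case, where one must base the fit at $j = 1, 2$ rather than $j = 0, 1$ (indeed, the odd formula evaluated at $j=0$ gives $\sqrt{2} \ne 1 = d_0$, which is why $j = 0$ is excluded from the statement). I would also briefly remark at the start of the proof that $\lambda_1 > 2$ guarantees $r_+ \ne r_-$ so the general-solution ansatz is valid.
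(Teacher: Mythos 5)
Your proposal is correct and is essentially the paper's argument in a different guise: the paper verifies the closed form by induction using the recurrence $d_j = \lambda_1 d_{j-1} - d_{j-2}$ and the two initial values, while you derive the same constants from the characteristic equation $x^2 - \lambda_1 x + 1 = 0$; both hinge on exactly the same facts. Your explicit treatment of the odd case (fitting at $j=1,2$ and noting the formula fails at $j=0$ since it would give $\sqrt{2} \ne 1 = d_0$) is a welcome detail that the paper leaves to the reader.
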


\begin{proof} We will only present the proof for the case that $n$ is even, as the argument when $n$ is odd is analogous. 

Taking $n$ to be even, we proceed by induction on $j$.  We have
\begin{align*}
d_0 &= \frac{\sqrt{\lambda_1^2 - 4} + \lambda_1 - 2}{2 \sqrt{\lambda_1^2 - 4}}  + \frac{\sqrt{\lambda_1^2 - 4} - \lambda_1 + 2}{2 \sqrt{\lambda_1^2 - 4}} = 1, \\
d_1 &= \frac{\sqrt{\lambda_1^2 - 4} + \lambda_1 - 2}{2 \sqrt{\lambda_1^2 - 4}} \left(\frac{1}{2} \left(\lambda_1 + \sqrt{\lambda_1^2 - 4} \right)\right) + \frac{\sqrt{\lambda_1^2 - 4} - \lambda_1 + 2}{2 \sqrt{\lambda_1^2 - 4}} \left(\frac{1}{2} \left(\lambda_1 - \sqrt{\lambda_1^2 - 4} \right)\right) = \lambda_1 - 1.
\end{align*}
Now suppose that the claim holds for all $0 \le j \le j_0 < n/2 - 1$.  Then
\begin{align*}
d_{j_0 + 1} &= \lambda_1 d_{j_0} - d_{j_0 - 1} \\
&= \lambda_1 \left( \frac{\sqrt{\lambda_1^2 - 4} + \lambda_1 - 2}{2 \sqrt{\lambda_1^2 - 4}} \left(\frac{1}{2} \left(\lambda_1 + \sqrt{\lambda_1^2 - 4} \right)\right)^{j_0} + \frac{\sqrt{\lambda_1^2 - 4} - \lambda_1 + 2}{2 \sqrt{\lambda_1^2 - 4}} \left(\frac{1}{2} \left(\lambda_1 - \sqrt{\lambda_1^2 - 4} \right)\right) ^{j_0} \right) \\
&\quad - \frac{\sqrt{\lambda_1^2 - 4} + \lambda_1 - 2}{2 \sqrt{\lambda_1^2 - 4}} \left(\frac{1}{2} \left(\lambda_1 + \sqrt{\lambda_1^2 - 4} \right)\right)^{j_0 - 1} - \frac{\sqrt{\lambda_1^2 - 4} - \lambda_1 + 2}{2 \sqrt{\lambda_1^2 - 4}} \left(\frac{1}{2} \left(\lambda_1 - \sqrt{\lambda_1^2 - 4} \right)\right) ^{j_0 - 1} \\
&= \frac{\sqrt{\lambda_1^2 - 4} + \lambda_1 - 2}{2 \sqrt{\lambda_1^2 - 4}}  \left(\frac{1}{2} \left(\lambda_1 + \sqrt{\lambda_1^2 - 4} \right)\right)^{j_0 - 1} \left(\frac{\lambda_1}{2} \left(\lambda_1 + \sqrt{\lambda_1^2 - 4} \right) - 1 \right) \\
&\quad + \frac{\sqrt{\lambda_1^2 - 4} - \lambda_1 + 2}{2 \sqrt{\lambda_1^2 - 4}}  \left(\frac{1}{2} \left(\lambda_1 - \sqrt{\lambda_1^2 - 4} \right)\right)^{j_0 - 1} \left(\frac{\lambda_1}{2} \left(\lambda_1 - \sqrt{\lambda_1^2 - 4} \right) - 1 \right) \\
&= \frac{\sqrt{\lambda_1^2 - 4} + \lambda_1 - 2}{2 \sqrt{\lambda_1^2 - 4}}  \left(\frac{1}{2} \left(\lambda_1 + \sqrt{\lambda_1^2 - 4} \right)\right)^{j_0 + 1} + \frac{\sqrt{\lambda_1^2 - 4} - \lambda_1 + 2}{2 \sqrt{\lambda_1^2 - 4}}  \left(\frac{1}{2} \left(\lambda_1 - \sqrt{\lambda_1^2 - 4} \right)\right)^{j_0 + 1},
\end{align*}
completing the induction step.
\end{proof}

Lemma \ref{lem:dj} will assist in proving the following. 

\begin{theorem} \label{thm:even} 
a) Suppose that $n$ is even. Then 
$$\frac{(w-1)^2(w+1)}{w(w^n-1)} \le \lambda_1 - \lambda_2 \le \frac{2 (w - 1) (w + 1)^3}{w^2 (w^n - 1)}.$$ In particular, $\lambda_1-\lambda_2  = \Theta(w^{2 - n}).$ \\
b) Suppose that $n$ is odd. Then $$\frac{(2 (\sqrt{2} - 1) w^2 - 1) (w^2 - 1)}{w (w^{n + 1} - 1)} \le  \lambda_1-\lambda_2 \le  \frac{4 (w - 1) (w + 1)^2}{w^2 (w^{n - 1} - 1)}.$$ 
In particular, $\lambda_1-\lambda_2 = \Theta(w^{2-n}).$
\end{theorem}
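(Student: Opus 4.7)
The plan is to build on the preliminary inequalities already derived in the excerpt, namely $\lambda_1 - \lambda_2 \le 2 v_{n/2}^2$ for $n$ even and $\lambda_1 - \lambda_2 \le 2\sqrt{2}\, v_{(n-1)/2} v_{(n+1)/2}$ for $n$ odd, where $v$ is the unit Perron eigenvector of $B_1(w)$ or $C_1(w)$ respectively, together with the upper bound $\lambda_2 \le w + 1/w$ from Theorem~\ref{thm:bds}(c). For the upper bound on $\lambda_1 - \lambda_2$, the idea is to solve the eigenvalue equation row-by-row starting from the last row, which expresses each entry in terms of $v_{\lceil n/2 \rceil}$ via $v_{\lceil n/2 \rceil - j} = d_j v_{\lceil n/2 \rceil}$; normalization then gives $v_{\lceil n/2 \rceil}^{-2} = \sum_{j=0}^{\lceil n/2 \rceil - 1} d_j^2$. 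Applying Lemma~\ref{lem:dj} with $r = (\lambda_1 + \sqrt{\lambda_1^2 - 4})/2$ and $s = 1/r$, and keeping only the dominant $r^{2j}$ contributions, we obtain $\sum d_j^2 \ge \alpha^2 (r^{2\lceil n/2\rceil} - 1)/(r^2 - 1)$ with $\alpha$ read off from Lemma~\ref{lem:dj}. Since $\lambda_1 \ge w + 1/w$ and both $r$ and $\alpha$ are increasing in $\lambda_1$, we have $r \ge w$ and, in the even case, $\alpha \ge w/(w+1)$ (the value of $\alpha$ at $\lambda_1 = w + 1/w$); substituting yields the stated even-case bound $2(w-1)(w+1)^3/(w^2(w^n - 1))$. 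The odd case is handled in the same spirit: writing $\lambda_1 - \lambda_2 \le 4(\lambda_1/r^2)(r^2-1)/(r^{n-1}-1)$, the bound $\lambda_1/r^2 = 1/r + 1/r^3 \le (w^2+1)/w^3$ together with the monotonicity of $(r^2-1)/(r^{n-1}-1)$ in $r$ gives $\lambda_1 - \lambda_2 \le 4(w^4-1)/(w^3(w^{n-1}-1))$, and the elementary inequality $(w^4-1)/w \le (w-1)(w+1)^2$ valid for $w \ge 1$ completes the reduction to the stated $4(w-1)(w+1)^2/(w^2(w^{n-1}-1))$.

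For the lower bound we use $\lambda_1 - \lambda_2 \ge \lambda_1 - (w + 1/w)$. The even case is immediate from Theorem~\ref{thm:bds}(c), since $2 \lceil n/2 \rceil = n$. The odd case is the main obstacle: Theorem~\ref{thm:bds}(c) only gives a bound of order $w^{1-n}$ whereas the claim demands order $w^{2-n}$. To recover the missing factor of $w$, I would apply the Rayleigh principle to $C_1(w)$ directly, using the Perron eigenvector $v_j = w^{1-j}$ ($1 \le j \le k = (n+1)/2$) of $M_k(w)$ as the trial vector. The identity
\[
C_1(w) - M_k(w) = (\sqrt{2} - 1)(e_{k-1} e_k^T + e_k e_{k-1}^T) - w^{-1} e_k e_k^T
\]
gives $v^T C_1(w) v = (w + 1/w)\|v\|_2^2 + 2(\sqrt{2} - 1) v_{k-1} v_k - w^{-1} v_k^2$. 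Substituting $\|v\|_2^2 = (w^{2k}-1)/(w^{2k-2}(w^2-1))$, $v_{k-1} v_k = w^{3-2k}$, and $v_k^2 = w^{2-2k}$, and dividing by $\|v\|_2^2$, yields
\[
\rho(C_1(w)) \ge w + \frac{1}{w} + \frac{(w^2-1)\,(2(\sqrt{2}-1) w^2 - 1)}{w(w^{n+1} - 1)},
\]
which together with $\lambda_2 \le w + 1/w$ gives the stated lower bound. The $\Theta(w^{2-n})$ conclusion then follows by inspecting the leading terms of both the upper and lower bounds.

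The main obstacle is precisely this odd-case lower bound: the $\tilde B_1$-argument from Theorem~\ref{thm:bds}(c) only extracts the rank-one correction $w^{-1} v_k^2 = w^{1-2k}$, whereas $C_1(w) - M_k(w)$ has both a rank-two piece $(\sqrt{2}-1)(e_{k-1}e_k^T + e_k e_{k-1}^T)$ and a rank-one piece $-w^{-1} e_k e_k^T$. The cross term $2(\sqrt{2}-1) v_{k-1} v_k = 2(\sqrt{2}-1) w^{3-2k}$ coming from the former dominates the latter by a factor of $w^2$, and working with $M_k(w)$'s Perron vector against $C_1(w)$ directly (rather than via the intermediate $\tilde B_1$) is what upgrades the bound from order $w^{1-n}$ to the correct order $w^{2-n}$.
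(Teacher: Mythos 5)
Your proposal is correct and follows essentially the same route as the paper: the even-case upper bound via $\lambda_1-\lambda_2\le 2v_{n/2}^2$ and the geometric-series lower bound on $\sum d_j^2$ from Lemma~\ref{lem:dj}, the even-case lower bound directly from Theorem~\ref{thm:bds}(c), and the odd-case lower bound via the Rayleigh quotient of $C_1(w)=M_k(w)-w^{-1}e_ke_k^T+(\sqrt2-1)(e_{k-1}e_k^T+e_ke_{k-1}^T)$ against the Perron vector of $M_k(w)$, whose cross term $2(\sqrt2-1)v_{k-1}v_k$ supplies exactly the extra factor of $w^2$ you identify. The only addition is that you carry out the odd-case upper bound explicitly (the paper omits it as ``similar''), and your computation there checks out.
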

\begin{proof}
a) The lower bound on $\lambda_1-\lambda_2$ follows immediately from Theorem \ref{thm:bds} c). 

Next, we consider the upper bound  on $\lambda_1-\lambda_2.$ 
It is straightforward to determine that $$x = \begin{bmatrix} d_{n/2 - 1} & d_{n / 2 - 2} & \cdots & d_2 & d_1 & d_0 \end{bmatrix}^T$$ is a Perron vector for $B_1(w)$.  Hence, $v = \frac{x}{||x||_2}$.  We can lower bound $||x||_2^2$ by

\begin{align*}
&\quad \sum_{j = 0}^{n/2 - 1} d_j^2 \\
&= \sum_{j = 0}^{n/2 - 1} \left( \frac{\sqrt{\lambda_1^2 - 4} + \lambda_1 - 2}{2 \sqrt{\lambda_1^2 - 4}} \left(\frac{1}{2} \left(\lambda_1 + \sqrt{\lambda_1^2 - 4}\right)\right)^j + \frac{\sqrt{\lambda_1^2 - 4} - \lambda_1 + 2}{2 \sqrt{\lambda_1^2 - 4}} \left(\frac{1}{2} \left(\lambda_1 - \sqrt{\lambda_1^2 - 4}\right)\right)^j \right)^2 \\
&= \sum_{j = 0}^{n/2 - 1} \left( \frac{\lambda_1 + \sqrt{\lambda_1^2 - 4}}{2 \lambda_1 + 4} \left(\frac{1}{2} \left(\lambda_1 + \sqrt{\lambda_1^2 - 4}\right)\right)^{2j}
 + \frac{2}{\lambda_1 + 2}
 +  \frac{\lambda_1 - \sqrt{\lambda_1^2 - 4}}{2 \lambda_1 + 4} \left(\frac{1}{2} \left(\lambda_1 - \sqrt{\lambda_1^2 - 4}\right)\right)^{2j} \right) \\
&=\frac{\lambda_1 + \sqrt{\lambda_1^2 - 4}}{2 \lambda_1 + 4} \left( \frac{(\frac{1}{2} (\lambda_1 + \sqrt{\lambda_1^2 - 4}))^n - 1}{(\frac{1}{2} (\lambda_1 + \sqrt{\lambda_1^2 - 4}))^2 - 1} \right)
 + \frac{n}{\lambda_1 + 2} 
+ \frac{\lambda_1 - \sqrt{\lambda_1^2 - 4}}{2 \lambda_1 + 4} \left( \frac{(\frac{1}{2} (\lambda_1 - \sqrt{\lambda_1^2 - 4}))^n - 1}{(\frac{1}{2} (\lambda_1 - \sqrt{\lambda_1^2 - 4}))^2 - 1} \right) \\
&\ge \frac{\lambda_1 + \sqrt{\lambda_1^2 - 4}}{2 \lambda_1 + 4} \left( \frac{(\frac{1}{2} (\lambda_1 + \sqrt{\lambda_1^2 - 4}))^n - 1}{(\frac{1}{2} (\lambda_1 + \sqrt{\lambda_1^2 - 4}))^2 - 1} \right) \\
&\ge \frac{w + w^{-1} + \sqrt{(w + w^{-1})^2 - 4}}{2 (w + w^{-1}) + 4} \left( \frac{(\frac{1}{2} (w + w^{-1} + \sqrt{(w + w^{-1})^2 - 4}))^n - 1}{(\frac{1}{2} (w + w^{-1} + \sqrt{(w + w^{-1})^2 - 4}))^2 - 1} \right) \\
&= \frac{2w}{2 w + 2 w^{-1} + 4} \left( \frac{w^n - 1}{w^2 - 1} \right) \\
&= \frac{w^2 (w^n - 1)}{(w - 1) (w + 1)^3}
\end{align*}
by expanding, applying geometric series, dropping the other positive terms, and using that the remaining term is increasing on $\lambda_1 \ge 2$ together with the fact that $\lambda_1 \ge w + w^{-1}$.  Therefore
\[
\lambda_1 - \lambda_2 \le 2 v_{n / 2}^2 = 2 \frac{d_0^2}{\sum_{j = 0}^{n / 2 - 1} d_j^2} \le \frac{2 (w - 1) (w + 1)^3}{w^2 (w^n - 1)} .
\]

b) The proof of the upper bound on $\lambda_1-\lambda_2$ is similar to that for the upper bound in a), and consequently we omit it. 

Hence, we turn our attention to the lower bound on $\lambda_1-\lambda_2.$ 
Recall that 
\[
M(w) = \begin{bmatrix} w & 1 \\ 1 & 0 & 1 \\ & \ddots & \ddots & \ddots \\ & & 1 & 0 & 1 \\ & & & 1 & \frac{1}{w} \end{bmatrix}_{k \times k},
\]
 has Perron value $w + \frac{1}{w}$ and corresponding Perron vector $x = \begin{bmatrix} w^{n - 1} & w^{n - 2} & \cdots & w & 1 \end{bmatrix}^T$.  We can write $C_1(w) = M(w) - \frac{1}{w} e_k e_k^T + (\sqrt{2} - 1)(e_{k - 1} e_k^T + e_k e_{k - 1}^T)$, and it follows that $\lambda_1 \ge w + \frac{1}{w} - \frac{1}{w} \frac{x_k^2}{||x||_2^2} + 2 (\sqrt{2} - 1) \frac{x_{k - 1} x_{k}}{||x||_2^2}$.  Since
\[
||x||_2^2 = \sum_{j = 0}^{k - 1} w^{2j} = \frac{w^{2k} - 1}{w^2 - 1}
\]
we obtain
\begin{align*}
\lambda_1 \ge w + \frac{1}{w} - \frac{1}{w} \frac{x_n^2}{||x||_2^2} + 2 (\sqrt{2} - 1) \frac{x_{k - 1} x_{k}}{||x||_2^2} &= w + \frac{1}{w} - \frac{1}{w} \frac{w^2 - 1}{w^{2k} - 1} + 2 (\sqrt{2} - 1) w \frac{w^2 - 1}{w^{2k} - 1} \\ &= w + \frac{1}{w} + \frac{(2 (\sqrt{2} - 1) w^2 - 1) (w^2 - 1)}{w (w^{2k} - 1)}.
\end{align*}

We now deduce that for $n$ odd, 
\[
\lambda_1 - \lambda_2 \ge \frac{(2 (\sqrt{2} - 1) w^2 - 1) (w^2 - 1)}{w (w^{n + 1} - 1)}
\]
and hence $\lambda_1 - \lambda_2 = \Theta(w^{2 - n})$.
\end{proof}

Next, we consider the eigenvectors associated with $\lambda_1 $ and $\lambda_2$. 

\begin{theorem}\label{thm:v11_v12}
Consider the matrix $A(w)$ of order $n$, and let $V$ be an orthogonal matrix that diagonalises $A(w)$, i.e. $V^TA(w)V ={\rm{diag}}(\lambda_1, \ldots, \lambda_n)$. We have 
\begin{align*} 
v_{11}^2 &= \left(2+ \frac{8}{n-1}\sum_{\ell=1}^{\lceil\frac{n-1}{2}\rceil } \frac{\sin^2\left(\frac{(2\ell-1)\pi}{n-1}\right)}{\left(\lambda_1 - 2\cos\left(\frac{(2\ell-1)\pi}{n-1}\right)\right)^2} \right)^{-1}, \\
v_{12}^2 &= \left(2+ \frac{8}{n-1}\sum_{\ell=1}^{\lceil\frac{n-2}{2}\rceil } \frac{\sin^2\left(\frac{2\ell\pi}{n-1}\right)}{\left(\lambda_2 - 2\cos\left(\frac{2\ell\pi}{n-1}\right)\right)^2} \right)^{-1}. 
\end{align*} 
\end{theorem}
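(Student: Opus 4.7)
The plan is to view $A(w)$ as a $3\times 3$ block matrix that isolates the two end vertices from the middle $n-2$ vertices, whose induced subgraph is the unweighted path $P_{n-2}$, and then to exploit the explicit spectrum of the middle block via a resolvent expansion. The appearance of $2\cos((2\ell-1)\pi/(n-1))$ and $2\cos(2\ell\pi/(n-1))$ in the stated formulas is exactly the well-known spectrum of $P_{n-2}$, indexed by odd and even values of $k$ respectively, which strongly suggests this route.

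Concretely, I would write
\[
A(w) = \begin{bmatrix} w & e_1^T & 0 \\ e_1 & M & e_{n-2} \\ 0 & e_{n-2}^T & w \end{bmatrix},
\]
where $M$ is the $(n-2)\times(n-2)$ adjacency matrix of $P_{n-2}$, and $e_1, e_{n-2}$ denote the first and last standard basis vectors in $\mathbb{R}^{n-2}$. Partitioning a unit eigenvector of $A(w)$ at eigenvalue $\lambda$ as $(x_1, y^T, x_n)^T$, the middle block of $A(w)x=\lambda x$ reads $(\lambda I - M)y = x_1 e_1 + x_n e_{n-2}$. Using the orthonormal eigenbasis $u^{(k)}_j = \sqrt{2/(n-1)}\sin(jk\pi/(n-1))$ of $M$, with eigenvalues $\mu_k = 2\cos(k\pi/(n-1))$ for $k=1,\ldots,n-2$, and the identity $u^{(k)}_{n-2} = (-1)^{k+1} u^{(k)}_1$, I would expand
\[
y = \sum_{k=1}^{n-2} \frac{x_1 u^{(k)}_1 + x_n u^{(k)}_{n-2}}{\lambda - \mu_k}\, u^{(k)}.
\]
The established symmetry properties invoked in Theorem \ref{thm:bds} give $x_n = x_1$ for the Perron eigenvector and $x_n = -x_1$ for the $\lambda_2$-eigenvector, so the numerator vanishes for every $k$ except those of one parity: odd $k$ in the first case and even $k$ in the second. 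Computing $\|y\|_2^2$ by Parseval, substituting $(u^{(k)}_1)^2 = \frac{2}{n-1}\sin^2(k\pi/(n-1))$, relabelling $k=2\ell-1$ or $k=2\ell$, and applying the normalisation $1 = \|x\|_2^2 = 2x_1^2 + \|y\|_2^2$ will yield both formulas with $v_{11} = x_1$ and $v_{12} = x_1$ respectively.

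The main technical concern is ensuring that the resolvent step makes sense, i.e.\ that $\lambda_1, \lambda_2 \notin \operatorname{spec}(M) \subset (-2,2)$. For $\lambda_1$, this is immediate from Theorem \ref{thm:bds}(c), which gives $\lambda_1 > 2$. For $\lambda_2$ the point is more delicate, but an orthogonality check handles it: any hypothetical collision $\lambda_2 = \mu_k$ would require the antisymmetric RHS $x_1(e_1 - e_{n-2})$ to be orthogonal to $u^{(k)}$, forcing $k$ odd, which is the parity opposite to the one summed in the $v_{12}^2$ formula, so no singularity actually appears. A secondary bookkeeping point is that the summation upper limits $\lceil(n-1)/2\rceil$ and $\lceil(n-2)/2\rceil$ can push the relabelled index up to $k = n - 1$, one step beyond the valid range for $M$; fortunately $\sin((n-1)\pi/(n-1)) = 0$ makes any such phantom term vanish, so its inclusion is harmless. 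Past these caveats, the derivation is a direct computation.
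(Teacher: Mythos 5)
Your proposal is correct and follows essentially the same route as the paper: the same $3\times 3$ block partition isolating the middle path $P_{n-2}$, the same expansion of the middle block of the eigenvector via the explicit eigendecomposition $QDQ^T$ of $P_{n-2}$ (the paper writes your resolvent sum as $u=Q(\lambda-D)^{-1}Q^T(e_1\pm e_{n-2})$), the same parity cancellation from $u^{(k)}_{n-2}=(-1)^{k+1}u^{(k)}_1$, and the same normalisation $1=2x_1^2+\|y\|_2^2$. Your added remarks on the non-singularity of the resolvent and the vanishing phantom term at $k=n-1$ are sound housekeeping that the paper leaves implicit.
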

\begin{proof}
Let $Q$ be the orthogonal matrix of order $n-2$  given by $$Q=\left[ \begin{array}{c} \sqrt{\frac{2}{n-1}}\sin \left( \frac{\ell j \pi}{n-1}\right)\end{array} \right]_{\ell, j=1,\ldots, n-2},$$ and let $D$ be the diagonal matrix whose diagonal entries are $2 \cos\left( \frac{ j \pi}{n-1}\right), j=1, \ldots, n-2.$ It is well known that $Q$ diagonalises the adjacency matrix of the path on $n-2$ vertices, and that the corresponding eigenvalues are the diagonal entries of $D$. Consequently, $A(w)$ can be written as 
$$A(w)= \left[ \begin{array}{c|c|c} w & e_1^T&0\\ \hline 
e_1 &QDQ^T&e_{n-2}\\ \hline 0&e_{n-2}^T &w\end{array}\right].$$ 

Consider the $\lambda_1$--eigenvector of $A(w)$ whose first entry is $1$. Necessarily that eigenvector has the form $\left[ \begin{array}{c}1 \\ \hline u \\ \hline 1 \end{array}\right] .$ From the eigenequation, we find that $e_1+e_{n-2} + QDQ^T=\lambda_1 u,$ from which we deduce that $u=Q(\lambda_1-D)^{-1}Q^T(e_1+e_{n-2}).$
 Therefore $u^Tu=(e_1+e_{n-2})^TQ(\lambda_1-D)^{-2}Q^T(e_1+e_{n-2}),$ and since $e_{n-2}^TQe_\ell = (-1)^{\ell-1}e_{1}^TQe_\ell,$ $\ell =1, \ldots, n-2,$ then
$(e_1+e_{n-2})^TQ e_\ell = (1+(-1)^{\ell-1})e_1^TQe_\ell,$  $\ell =1, \ldots, n-2.$
We thus find that 
$$u^Tu = \sum_{j=1}^{n-2} \left(   (1+(-1)^{j-1})\sqrt{\frac{2}{n-1}} 
\frac{\sin\left(\frac{\pi j}{n-1}\right)}{\left(\lambda_1 - 2\cos\left(\frac{\pi j}{n-1}\right)\right)}
\right)^2 
=\frac{8}{n-1}\sum_{\ell=1}^{\lceil\frac{n-1}{2}\rceil } \frac{\sin^2\left(\frac{(2\ell-1)\pi}{n-1}\right)}{\left(\lambda_1 - 2\cos\left(\frac{(2\ell-1)\pi}{n-1}\right)\right)^2}.
$$
The desired expression for $v_{11}^2$ now follows from the fact that $v_{11}^2=\frac{1}{2+u^Tu}.$

The derivation of the expression for $v_{12}^2$ proceeds along similar lines, starting from the observation that the $\lambda_2$--eigenvector for $A(w)$ whose first entry is $1$ has the form 
$\left[ \begin{array}{c}1 \\ \hline \tilde u \\ \hline -1 \end{array}\right] ;$ arguing as above, we obtain that 
$\tilde u^T \tilde u=(e_1-e_{n-2})^TQ(\lambda_2-D)^{-2}Q^T(e_1-e_{n-2}).$ The desired expression for $v_{12}^2 =\frac{1}{2+\tilde u^T \tilde u}$ now follows readily. 
\end{proof}

\begin{lemma}\label{lem:limit} Suppose that $x_n$ is a sequence of positive numbers such that $x_n \rightarrow x>2$ as $n \rightarrow \infty.$ Then:\\  
\begin{align*}
\text{a) } &\lim_{n \rightarrow \infty}\left( 2+ \frac{8}{n-1}\sum_{\ell=1}^{\lceil\frac{n-1}{2}\rceil } \frac{\sin^2\left(\frac{(2\ell-1)\pi}{n-1}\right)}{\left(x_n - 2\cos\left(\frac{(2\ell-1)\pi}{n-1}\right)\right)^2}\right) = \frac{x+\sqrt{x^2-4}}{\sqrt{x^2-4}}; \text{ and} \\
\text{b) } &\lim_{n \rightarrow \infty}\left( 2+ \frac{8}{n-1}\sum_{\ell=1}^{\lceil\frac{n-2}{2}\rceil } \frac{\sin^2\left(\frac{2\ell\pi}{n-1}\right)}{\left(x_n - 2\cos\left(\frac{2\ell\pi}{n-1}\right)\right)^2}\right)= \frac{x+\sqrt{x^2-4}}{\sqrt{x^2-4}}.
\end{align*}
\end{lemma}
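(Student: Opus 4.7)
The plan is to recognize each sum as a Riemann sum on $[0,\pi]$ with a parameter that varies, compute the resulting limiting integral, and then verify the claimed closed form. In both (a) and (b) the sampling points $\theta_\ell^{(n)}$ (either $(2\ell-1)\pi/(n-1)$ or $2\ell\pi/(n-1)$) are spaced by $\Delta\theta_n := 2\pi/(n-1)$, and the upper index $\lceil(n-1)/2\rceil$ (resp.\ $\lceil(n-2)/2\rceil$) is chosen precisely so that the nodes sweep out $(0,\pi)$ in the limit. Rewriting $\frac{8}{n-1}=\frac{4\Delta\theta_n}{\pi}$, both sums take the form
\[
\frac{4}{\pi}\sum_{\ell} f(x_n,\theta_\ell^{(n)})\,\Delta\theta_n,\qquad f(x,\theta):=\frac{\sin^2\theta}{(x-2\cos\theta)^2},
\]
which is a midpoint (resp.\ right-endpoint) Riemann sum for $\frac{4}{\pi}\int_0^\pi f(x,\theta)\,d\theta$.

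Next I would justify passing to the limit with the parameter $x_n$. Because $x>2$, we have $x_n - 2\cos\theta \ge \tfrac12(x-2)>0$ for all $\theta\in[0,\pi]$ and all sufficiently large $n$, so $f$ is continuous on $[\,x-\epsilon,x+\epsilon\,]\times[0,\pi]$ for small $\epsilon>0$, hence uniformly continuous. A standard estimate (split into $|f(x_n,\theta_\ell^{(n)})-f(x,\theta_\ell^{(n)})|$ plus the usual Riemann-sum error for the fixed-parameter integrand $f(x,\cdot)$) then shows the sums converge to $\frac{4}{\pi}\int_0^\pi f(x,\theta)\,d\theta$. The two parts of the lemma reduce to the same limit because they differ only in a $O(1)$ number of terms whose contribution is $O(1/n)$ (the integrand is bounded), together with an $O(\Delta\theta_n)$ Riemann-sum discrepancy between midpoint and endpoint rules.

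It remains to evaluate $I(x):=\int_0^\pi f(x,\theta)\,d\theta$. Setting $u=x-2\cos\theta$, one checks
\[
\frac{\sin^2\theta}{(x-2\cos\theta)^2}=\frac{4-x^2}{4u^2}+\frac{x}{2u}-\frac14.
\]
Using the classical identities $\int_0^{2\pi}\frac{d\theta}{x-2\cos\theta}=\frac{2\pi}{\sqrt{x^2-4}}$ and, by differentiation in $x$, $\int_0^{2\pi}\frac{d\theta}{(x-2\cos\theta)^2}=\frac{2\pi x}{(x^2-4)^{3/2}}$, together with symmetry $\int_0^\pi = \tfrac12\int_0^{2\pi}$, one obtains after simplification
\[
I(x)=\frac{\pi}{4}\cdot\frac{x-\sqrt{x^2-4}}{\sqrt{x^2-4}}.
\]
Hence $\frac{4}{\pi}I(x) = \frac{x-\sqrt{x^2-4}}{\sqrt{x^2-4}}$, and adding $2$ gives
\[
2+\frac{4}{\pi}I(x)=\frac{2\sqrt{x^2-4}+x-\sqrt{x^2-4}}{\sqrt{x^2-4}}=\frac{x+\sqrt{x^2-4}}{\sqrt{x^2-4}},
\]
as claimed.

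The main obstacle is purely bookkeeping: handling the asymmetry between even/odd $n$ (which affects the number of terms by $1$) and the discrepancy between midpoint and right-endpoint sampling cleanly enough that a single Riemann-sum argument covers both (a) and (b). Since the integrand is smooth and bounded on $[0,\pi]$ and the sample spacing tends to zero, these are lower-order effects and do not affect the limit. The integral evaluation is standard, but the partial-fraction decomposition of $f$ in the variable $u=x-2\cos\theta$ is the key algebraic simplification that avoids a contour-integration computation.
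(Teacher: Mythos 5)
Your proof is correct and follows the same overall skeleton as the paper's: first replace $x_n$ by its limit $x$ (the paper does this via an explicit bound on the difference of the two sums, you via uniform continuity of the integrand on a compact neighbourhood of $\{x\}\times[0,\pi]$ --- both work because $x>2$ keeps the denominator bounded away from zero), then recognize $\frac{2\pi}{n-1}\sum(\cdots)$ as a Riemann sum for $\int_0^\pi\frac{\sin^2\theta}{(x-2\cos\theta)^2}\,d\theta$ whose tag placement and even/odd bookkeeping are immaterial for a continuous integrand with mesh tending to zero. The genuine difference is in evaluating that integral. The paper expands $(1-\tfrac{2}{x}\cos\theta)^{-2}$ as a power series, integrates term by term against $\sin^2\theta$ using Wallis-type integrals $\int_0^\pi\sin^2\theta\cos^{2\ell}\theta\,d\theta$, and resums via the central binomial generating function $\sum_k\binom{2k}{k}z^k=(1-4z)^{-1/2}$. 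You instead substitute $u=x-2\cos\theta$, use the (correct) identity $\sin^2\theta/u^2=\frac{4-x^2}{4u^2}+\frac{x}{2u}-\frac14$, and reduce to the classical integral $\int_0^{2\pi}\frac{d\theta}{x-2\cos\theta}=\frac{2\pi}{\sqrt{x^2-4}}$ together with its $x$-derivative. Both routes yield $I(x)=\frac{\pi}{4}\bigl(\frac{x}{\sqrt{x^2-4}}-1\bigr)$, hence the stated limit $\frac{x+\sqrt{x^2-4}}{\sqrt{x^2-4}}$. Your evaluation is arguably cleaner, since it avoids term-by-term integration of a series and the Wallis integrals, at the cost of quoting the Poisson-kernel-type integral; the paper's series computation is self-contained but more laborious (and, incidentally, contains a harmless typo in its stated Wallis formula that cancels in the final resummation).
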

\begin{proof}
We give the details for a) and note that the proof for b) is similar. 

First we note that 
\begin{align*} 
&\quad \frac{8}{n-1}\sum_{\ell=1}^{\lceil\frac{n-1}{2}\rceil } \frac{\sin^2\left(\frac{(2\ell-1)\pi}{n-1}\right)}{\left(x_n - 2\cos\left(\frac{(2\ell-1)\pi}{n-1}\right)\right)^2} - \frac{8}{n-1}\sum_{\ell=1}^{\lceil\frac{n-1}{2}\rceil } -  \frac{\sin^2\left(\frac{(2\ell-1)\pi}{n-1}\right)}{\left(x - 2\cos\left(\frac{(2\ell-1)\pi}{n-1}\right)\right)^2} \\ &= 
\frac{8(x-x_n)}{n-1}\sum_{\ell=1}^{\lceil\frac{n-1}{2}\rceil } \sin^2\left(\frac{(2\ell-1)\pi}{n-1}\right)
\left(\frac{x+x_n-4\cos\left(\frac{(2\ell-1)\pi}{n-1}\right)}{\left(x_n - 2\cos\left(\frac{(2\ell-1)\pi}{n-1}\right)\right)^2\left(x - 2\cos\left(\frac{(2\ell-1)\pi}{n-1}\right)\right)^2}
\right).
\end{align*} 
As $\bigg|\frac{x+x_n-4\cos\left(\frac{(2\ell-1)\pi}{n-1}\right)}{\left(x_n - 2\cos\left(\frac{(2\ell-1)\pi}{n-1}\right)\right)^2\left(x - 2\cos\left(\frac{(2\ell-1)\pi}{n-1}\right)\right)^2} \bigg| \le \frac{x+x_n+4}{(x_n-2)^2(x-2)^2}$, $\ell=1, \ldots, \lceil\frac{n-1}{2}\rceil , $ and 
$\frac{8}{n-1}\sum_{\ell=1}^{\lceil\frac{n-1}{2}\rceil } \sin^2\left(\frac{(2\ell-1)\pi}{n-1}\right)$ is convergent, it suffices to show that $\lim_{n \rightarrow \infty}\left( 2+ \frac{8}{n-1}\sum_{\ell=1}^{\lceil\frac{n-1}{2}\rceil } \frac{\sin^2\left(\frac{(2\ell-1)\pi}{n-1}\right)}{\left(x - 2\cos\left(\frac{(2\ell-1)\pi}{n-1}\right)\right)^2}\right) = \frac{x+\sqrt{x^2-4}}{\sqrt{x^2-4}}.$ 

Next, observe that $\frac{2\pi }{n-1}\sum_{\ell=1}^{\lceil\frac{n-1}{2}\rceil } \frac{\sin^2\left(\frac{(2\ell-1)\pi}{n-1}\right)}{\left(x - 2\cos\left(\frac{(2\ell-1)\pi}{n-1}\right)\right)^2}$ is a Riemann sum for $\int_0^\pi \frac{\sin^2 \theta}{(x-2\cos \theta)^2} d\theta.$ We now evaluate that integral. We have $\int_0^\pi \frac{\sin^2 \theta}{(x-2\cos \theta)^2} d\theta = \frac{1}{x^2} \int_0^\pi\frac{\sin^2 \theta}{(1-\frac{2}{x}\cos \theta)^2} = 
\frac{1}{x^2} \sum_{j=0}^\infty (j+1)\left(\frac{2}{x}\right)^j \int_0^\pi \sin^2 \theta \cos^j \theta d\theta.$ 
When $j$ is odd, $\int_0^\pi \sin^2 \theta \cos^j \theta d\theta=0,$ while when $j$ is even, say $j=2\ell, \int_0^\pi \sin^2 \theta \cos^j \theta d\theta = \frac{(2 \ell+2)! \pi}{2^{2\ell+2}(\ell+1)!(\ell+1)!}
.$
Now for $z$ with $|z|<\frac{1}{4}, \sum_{k=0}^\infty \binom{2k}{k} z^k =\frac{1}{\sqrt{1-4z}},$ 
so
$\int_0^\pi \frac{\sin^2 \theta}{(x-2\cos \theta)^2} d\theta = 
\frac{\pi}{4} \sum_{\ell=0}^\infty \left( \frac{1}{x^2}\right)^{\ell+1} \binom{2 \ell+2}{\ell+1} = \frac{\pi}{4} \left( \frac{x}{\sqrt{x^2-4}} -1\right).$ The expression for the desired limit now follows via an uninteresting computation. 
\end{proof}

We have the following asymptotic result. 

\begin{theorem}\label{thm:eveclim} For the $n \times n$ matrix 
 $A(w)$,  let $V(n)$ be an orthogonal matrix that diagonalises it, i.e. $V(n)^TA(w)V(n)  =\rm{diag}(\lambda_1, \ldots, \lambda_n)$.
Then 
$$ \lim_{n\rightarrow \infty} v_{11}(n)^2 = \frac{w^2-1}{2w^2} =  \lim_{n\rightarrow \infty} v_{12}(n)^2 . 
$$
\end{theorem}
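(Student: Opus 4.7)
The plan is to combine three ingredients that are already in hand: the closed-form expressions for $v_{11}^2$ and $v_{12}^2$ from Theorem \ref{thm:v11_v12}, the limit computation in Lemma \ref{lem:limit}, and the convergence of $\lambda_1, \lambda_2$ to $w+\frac{1}{w}$ that is implicit in Theorems \ref{thm:bds} and \ref{thm:even}. The only step that needs genuine care is verifying the latter convergence; everything else is a direct substitution.

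First I would establish that, for fixed $w>1$, both sequences $\lambda_1(n)$ and $\lambda_2(n)$ tend to $w+\frac{1}{w}$ as $n\to\infty$. By Theorem \ref{thm:bds} c),
$$w+\tfrac{1}{w}-\tfrac{(w+1)^2(w-1)}{w(w^{2\lceil n/2\rceil}-1)}\le \lambda_2 \le w+\tfrac{1}{w},$$
so $\lambda_2(n)\to w+\frac{1}{w}$ as $n\to\infty$. Combining $\lambda_1\ge\lambda_2$ with the bounds $\lambda_1-\lambda_2=O(w^{2-n})$ given by Theorem \ref{thm:even} then forces $\lambda_1(n)\to w+\frac{1}{w}$ as well. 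In particular, both limits exceed $2$, which is precisely the hypothesis needed for Lemma \ref{lem:limit}.

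Next I would apply Lemma \ref{lem:limit} a) with $x_n=\lambda_1(n)$ and $x=w+\frac{1}{w}$, and Lemma \ref{lem:limit} b) with $x_n=\lambda_2(n)$ and the same $x$. Using the expressions from Theorem \ref{thm:v11_v12}, this shows
$$\lim_{n\to\infty}\frac{1}{v_{11}(n)^2}=\frac{x+\sqrt{x^2-4}}{\sqrt{x^2-4}}=\lim_{n\to\infty}\frac{1}{v_{12}(n)^2}.$$

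Finally, with $x=w+\frac{1}{w}$ one has the clean identity $x^2-4=(w-\frac{1}{w})^2$, so $\sqrt{x^2-4}=w-\frac{1}{w}$ (since $w>1$). Substituting gives
$$\frac{x+\sqrt{x^2-4}}{\sqrt{x^2-4}}=\frac{2w}{w-\tfrac{1}{w}}=\frac{2w^2}{w^2-1},$$
and reciprocating delivers the stated common limit $\frac{w^2-1}{2w^2}$. There is no real obstacle; the only point that required any argument was the convergence of $\lambda_1,\lambda_2$ to $w+\frac{1}{w}$, and that drops out immediately from the quantitative bounds already proved.
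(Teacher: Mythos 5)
Your proposal is correct and follows essentially the same route as the paper: the authors likewise deduce $\lambda_1,\lambda_2\to w+\frac{1}{w}$ from Theorems \ref{thm:bds} and \ref{thm:even}, then apply Theorem \ref{thm:v11_v12} and Lemma \ref{lem:limit} and simplify using $\sqrt{(w+\frac{1}{w})^2-4}=w-\frac{1}{w}$. Your write-up merely makes explicit the details the paper leaves to the reader.
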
 

\begin{proof}
From Theorems \ref{thm:bds}, \ref{thm:even} and \ref{thm:v11_v12}, we find that as $n\rightarrow \infty, \lambda_1, \lambda_2 \rightarrow w+\frac{1}{w}.$ The desired limits for $v_{11}(n)^2$ and $v_{12}(n)^2$ now follow from Theorem \ref{thm:v11_v12} and Lemma \ref{lem:limit}. 
\end{proof}

\begin{remark}\label{rmk:x*} 
Suppose that $\lambda>2$, and consider the function $f(x)=\left( \frac{\sin x}{\lambda - 2\cos x}\right)^2$ on the interval $x \in [0,\pi].$ It is a straightforward exercise to determine that for the value $x^*=\cos^{-1}\left(\frac{2}{\lambda}\right),$ a) $f$ is increasing on $[0,x^*],$ b) 
$f$ is decreasing on $[x^*, \pi]$, and c) $ \max \{ f(x) \mid x \in [0,\pi]\}=f(x^*)=\frac{1}{\lambda^2-4}.$
\end{remark}

This next technical result will enable us to estimate $v_{11}^2$ and $v_{12}^2.$

\begin{lemma}\label{lem:riemann} Suppose that $\lambda>2, n\in \naturals$ and $n \ge 3.$ Let $f$ be as in Remark \ref{rmk:x*}.  Then if $n$ is odd we have 
\begin{equation*}
\Bigg|\frac{2 \pi}{n-1} \sum_{\ell=1}^{\lfloor \frac{n-1}{2}\rfloor} f\left(\frac{(2\ell-1)\pi}{n-1}\right) -\frac{\pi}{4}\left( \frac{\lambda}{\sqrt{\lambda^2-4}}-1 \right)  \Bigg| \le \frac{4 \pi}{(n-1)(\lambda^2-4)},
\end{equation*}
while if $n$ is even we have 
\begin{equation*}
\Bigg|\frac{2 \pi}{n-1} \sum_{\ell=1}^{\lfloor \frac{n-2}{2}\rfloor} f\left(\frac{2\ell\pi}{n-1}\right) -\frac{\pi}{4}\left( \frac{\lambda}{\sqrt{\lambda^2-4}} -1\right) \Bigg|  \le \frac{4 \pi}{(n-1)(\lambda^2-4)}.  
\end{equation*}
\begin{proof}
We give the proof only for the case that $n$ is odd, and note that the proof when $n$ is even proceeds analogously. 

First, note that $\int_0^\pi f(x) dx=\frac{4 \pi}{(n-1)(\lambda^2-4)}.$ Next, recall that for a continuous monotonic function $g(x)$ on an interval $[a,b],$ the partition $x_j=a+\frac{(b-a)j}{k}, j=0, \ldots, k,$ and corresponding left and right Riemann sums $R,L$ respectively, we have $|L-\int_a^b g(x) dx|, |R-\int_a^b g(x) dx| \le |g(b)-g(a)|\frac{b-a}{k}.$ Our plan is to subdivide $[0,\pi]$ into subintervals on which (with one exception) $f$ is monotonic. 

Set $x^*=\cos^{-1}\left(\frac{2}{\lambda}\right)$ 
and define $\ell_0\in \naturals$ via 
$\frac{(2\ell_0-1)\pi}{n-1} \le x^* <\frac{(2\ell_0+1)\pi}{n-1}.$ Suppose that  $1\le \ell_0 \le \lfloor \frac{n-1}{2}\rfloor -1.$ 
We have 
\begin{align*}
&\quad \frac{2 \pi}{n-1} \sum_{\ell=1}^{\lfloor \frac{n-1}{2}\rfloor} f\left(\frac{(2\ell-1)\pi}{n-1}\right) - \int_0^\pi f(x) dx \\
&= \left( \frac{2 \pi}{n-1} \sum_{\ell=1}^{\ell_0-1} f\left(\frac{(2\ell-1)\pi}{n-1}\right) - \int_0^{\frac{(2\ell_0-1)\pi}{n-1}} f(x) dx\right) 
+\left( \frac{2 \pi}{n-1} \sum_{\ell=\ell_0+1}^{\lfloor \frac{n-1}{2}\rfloor} f\left(\frac{(2\ell-1)\pi}{n-1}\right) - \int_{\frac{(2\ell_0+1)\pi}{n-1}}^\pi f(x) dx \right)\\
&\quad +\left(\frac{2 \pi}{n-1}f\left(\frac{(2\ell_0-1)\pi}{n-1}\right)- \int_{\frac{(2\ell_0-1)\pi}{n-1}}^{\frac{(2\ell_0+1)\pi}{n-1}} f(x) dx \right). 
\end{align*}
Observing that $f$ is increasing on $[0,{\frac{(2\ell_0-1)\pi}{n-1}}]$ and decreasing on 
$[{\frac{(2\ell_0+1)\pi}{n-1}}, \pi]$ we find that 
$$\Bigg|\left( \frac{2 \pi}{n-1} \sum_{\ell=1}^{\ell_0-1} f\left(\frac{(2\ell-1)\pi}{n-1}\right) - \int_0^{\frac{(2\ell_0-1)\pi}{n-1}} f(x) dx\right)  \Bigg| \le \frac{2 \pi}{n-1} f\left({\frac{(2\ell_0-1)\pi}{n-1}} \right)$$
and $$\Bigg|  \frac{2 \pi}{n-1} \sum_{\ell=\ell_0+1}^{\lfloor \frac{n-1}{2}\rfloor} f\left(\frac{(2\ell-1)\pi}{n-1}\right) - \int_{\frac{(2\ell_0+1)\pi}{n-1}}^\pi f(x) dx \Bigg| \le \frac{2 \pi}{n-1} f\left({\frac{(2\ell_0+1)\pi}{n-1}} \right).$$ 

Next we consider $\int_{\frac{(2\ell_0-1)\pi}{n-1}}^{\frac{(2\ell_0+1)\pi}{n-1}} f(x) dx$. Setting $\mu = \min\{f\left(\frac{(2\ell_0-1)\pi}{n-1}\right), f\left(\frac{(2\ell_0+1)\pi}{n-1}\right) \},$ 
we see that $\mu \le f(x) \le f(x^*)$ for $x \in [\frac{(2\ell_0-1)\pi}{n-1}, \frac{(2\ell_0+1)\pi}{n-1}].$ It now follows readily that $$\Bigg|\left(\frac{2 \pi}{n-1}f\left(\frac{(2\ell_0-1)\pi}{n-1}\right)- \int_{\frac{(2\ell_0-1)\pi}{n-1}}^{\frac{(2\ell_0+1)\pi}{n-1}} f(x) dx \right)\Bigg| \le \frac{2\pi}{n-1}(f(x^*)-\mu).$$

Assembling the observations above, we find that 
\begin{align*}
\Bigg|\frac{2 \pi}{n-1} \sum_{\ell=1}^{\lfloor \frac{n-1}{2}\rfloor} f\left(\frac{(2\ell-1)\pi}{n-1}\right) -\int_0^\pi f(x) dx \Bigg| &\le \frac{2\pi}{n-1}\left( f\left(\frac{(2\ell_0-1)\pi}{n-1}\right) + f\left(\frac{(2\ell_0+1)\pi}{n-1}\right)+f(x^*)-\mu \right) \\ &\le \frac{4\pi}{n-1}f(x^*). 
\end{align*}
A similar argument applies if either $\ell_0=0$ or $\ell_0= \lfloor \frac{n-1}{2}\rfloor.$
\end{proof}
\end{lemma}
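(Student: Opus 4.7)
The plan is to interpret each left-hand sum as essentially a Riemann sum for $\int_0^\pi f(x)\,dx$ and bound the approximation error. The value of that integral is $\frac{\pi}{4}\bigl(\frac{\lambda}{\sqrt{\lambda^2-4}}-1\bigr)$, as was computed during the proof of Lemma~\ref{lem:limit}, so what remains is to estimate how far the Riemann sum departs from this value. By Remark~\ref{rmk:x*}, $f$ attains its maximum $f(x^*)=\frac{1}{\lambda^2-4}$ at $x^*=\cos^{-1}(2/\lambda)$, is increasing on $[0,x^*]$, and is decreasing on $[x^*,\pi]$. Rewriting the target bound as $\frac{4\pi}{(n-1)(\lambda^2-4)} = \frac{4\pi}{n-1}\cdot f(x^*)$ clarifies what must be shown: the Riemann-sum error is at most roughly $\frac{2\pi}{n-1}\cdot 2f(x^*)$.

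The main tool is the standard sandwich estimate: if $g$ is monotonic on $[a,b]$ and one partitions $[a,b]$ into $k$ equal pieces of width $h$, then both left and right Riemann sums differ from $\int_a^b g(x)\,dx$ by at most $h\,|g(b)-g(a)|$, since the error telescopes. For the odd-$n$ case I would locate the unique index $\ell_0$ with $\frac{(2\ell_0-1)\pi}{n-1}\le x^* < \frac{(2\ell_0+1)\pi}{n-1}$, and decompose the total error into three pieces according to whether the sample index $\ell$ is strictly less than $\ell_0$ (where $f$ is monotonically increasing on the relevant range), equals $\ell_0$ (the exceptional piece containing $x^*$), or is strictly greater than $\ell_0$ (where $f$ is monotonically decreasing). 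On each of the two outer (monotonic) pieces, the sandwich bound contributes at most $\frac{2\pi}{n-1}$ times a boundary value of $f$, and each such boundary value is bounded above by $f(x^*)$. On the exceptional piece, both the integral and the one-term contribution $\frac{2\pi}{n-1}f(x_{\ell_0})$ lie inside $\bigl[0,\frac{2\pi}{n-1}f(x^*)\bigr]$, so their difference is at most $\frac{2\pi}{n-1}f(x^*)$. Summing the three contributions gives the claimed bound $\frac{4\pi}{n-1}f(x^*)=\frac{4\pi}{(n-1)(\lambda^2-4)}$.

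I expect the main obstacle to be the careful handling of the exceptional piece containing $x^*$: because $f$ is not monotonic there, the sandwich bound does not apply directly, and one must instead exploit the uniform upper bound $f\le f(x^*)$ together with the non-negativity of $f$ to trap both the integral and the Riemann contribution in a short interval. The edge cases $\ell_0=0$ and $\ell_0=\lfloor (n-1)/2\rfloor$, in which one of the two monotonic portions is empty, only improve the estimate and can be absorbed into the same argument. The even-$n$ case proceeds along identical lines with sample points $x_\ell=\frac{2\ell\pi}{n-1}$, $\ell=1,\ldots,\lfloor (n-2)/2\rfloor$; a three-way split around $x^*$ and the same monotonic Riemann-sum bound yield the same error estimate.
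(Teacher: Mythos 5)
Your overall strategy is the same as the paper's: identify the sum as a Riemann sum for $\int_0^\pi f(x)\,dx$ (whose value $\frac{\pi}{4}\bigl(\frac{\lambda}{\sqrt{\lambda^2-4}}-1\bigr)$ you correctly take from the computation in Lemma~\ref{lem:limit}), locate the index $\ell_0$ whose sample interval straddles $x^*$, apply the monotone sandwich bound on the two outer pieces, and treat the exceptional piece separately. However, your final accounting does not deliver the stated constant. You bound each of the three error contributions by $\frac{2\pi}{n-1}f(x^*)$ and then assert that summing gives $\frac{4\pi}{n-1}f(x^*)$; but three terms each of size $\frac{2\pi}{n-1}f(x^*)$ sum to $\frac{6\pi}{n-1}f(x^*)$, which exceeds the target $\frac{4\pi}{(n-1)(\lambda^2-4)}=\frac{4\pi}{n-1}f(x^*)$ by a factor of $3/2$. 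As written, the argument proves the inequality only with $6\pi$ in place of $4\pi$, and that weaker constant would propagate into Corollary~\ref{cor:vbounds} and the fidelity bounds built on it.

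The missing idea is a cancellation between the exceptional piece and its two neighbours. Set $a=\frac{(2\ell_0-1)\pi}{n-1}$ and $b=\frac{(2\ell_0+1)\pi}{n-1}$. The sandwich bounds on the monotone pieces give errors at most $\frac{2\pi}{n-1}f(a)$ on the left and $\frac{2\pi}{n-1}f(b)$ on the right; you should retain these actual boundary values rather than immediately relaxing each to $f(x^*)$. On the exceptional piece, instead of merely trapping both the integral and the single term in $\bigl[0,\frac{2\pi}{n-1}f(x^*)\bigr]$, use the two-sided estimate $\mu\le f(x)\le f(x^*)$ on $[a,b]$, where $\mu=\min\{f(a),f(b)\}$ (valid because $f$ increases up to $x^*$ and decreases afterwards, so its minimum on $[a,b]$ is attained at an endpoint); this gives an error of at most $\frac{2\pi}{n-1}\bigl(f(x^*)-\mu\bigr)$ there. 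The three contributions then total $\frac{2\pi}{n-1}\bigl(f(a)+f(b)+f(x^*)-\mu\bigr)=\frac{2\pi}{n-1}\bigl(\max\{f(a),f(b)\}+f(x^*)\bigr)\le\frac{4\pi}{n-1}f(x^*)$, which is the claimed bound. With this one refinement your argument matches the paper's.
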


The following is immediate from Lemma \ref{lem:riemann}. 

\begin{cor}\label{cor:vbounds} Suppose that $n-1\ge \frac{16}{\sqrt{\lambda_2^2-4}(\lambda_2 + \sqrt{\lambda_2^2-4})}.$ Then 
\begin{eqnarray*}
\left( \frac{\lambda_1+\sqrt{\lambda_1^2-4}}{\sqrt{\lambda_1^2-4}} +\frac{16}{(n-1)(\lambda_1^2-4)}\right)^{-1} \le v_{11}^2 \le \left( \frac{\lambda_1+\sqrt{\lambda_1^2-4}}{\sqrt{\lambda_1^2-4}} -\frac{16}{(n-1)(\lambda_1^2-4)}\right)^{-1}, \\
\left( \frac{\lambda_2+\sqrt{\lambda_2^2-4}}{\sqrt{\lambda_2^2-4}} +\frac{16}{(n-1)(\lambda_2^2-4)}\right)^{-1} \le v_{12}^2 \le \left( \frac{\lambda_2+\sqrt{\lambda_2^2-4}}{\sqrt{\lambda_2^2-4}} -\frac{16}{(n-1)(\lambda_2^2-4)}\right)^{-1}.
\end{eqnarray*}
\end{cor}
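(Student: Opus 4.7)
The plan is to feed the Riemann-sum estimates of Lemma \ref{lem:riemann} into the exact expressions for $v_{11}^2$ and $v_{12}^2$ from Theorem \ref{thm:v11_v12}, reconcile the index ranges, identify the value of the limiting integral already computed inside the proof of Lemma \ref{lem:limit}, and finally invert.

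First I would recall from Theorem \ref{thm:v11_v12} that
\[
v_{11}^2=\Bigl(2+\tfrac{8}{n-1}\!\sum_{\ell=1}^{\lceil(n-1)/2\rceil}\!f_1\bigl(\tfrac{(2\ell-1)\pi}{n-1}\bigr)\Bigr)^{-1},\qquad v_{12}^2=\Bigl(2+\tfrac{8}{n-1}\!\sum_{\ell=1}^{\lceil(n-2)/2\rceil}\!f_2\bigl(\tfrac{2\ell\pi}{n-1}\bigr)\Bigr)^{-1},
\]
where $f_j$ is the function $f$ of Remark \ref{rmk:x*} with $\lambda=\lambda_j$. The sums appearing in Lemma \ref{lem:riemann} stop instead at $\lfloor(n-1)/2\rfloor$ and $\lfloor(n-2)/2\rfloor$, but the only potentially missing terms evaluate $f$ at $\pi$ (for the $v_{11}^2$ sum when $n$ is even) or at the grid point landing on $\pi$ (for the $v_{12}^2$ sum when $n$ is odd), and $\sin^2(\pi)=0$ kills them. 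So the two index conventions give identical sums, and Lemma \ref{lem:riemann} applies verbatim to both our sums.

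Next I would compute the limiting integral. The proof of Lemma \ref{lem:limit} already established that
\[
\int_0^\pi \frac{\sin^2\theta}{(\lambda-2\cos\theta)^2}\,d\theta=\frac{\pi}{4}\Bigl(\frac{\lambda}{\sqrt{\lambda^2-4}}-1\Bigr),
\]
so the Riemann-sum bound of Lemma \ref{lem:riemann} reads, after multiplying through by $\tfrac{4}{\pi}$,
\[
\Bigl|\tfrac{8}{n-1}\!\sum f\bigl(\tfrac{\cdot\,\pi}{n-1}\bigr)\;-\;\Bigl(\tfrac{\lambda}{\sqrt{\lambda^2-4}}-1\Bigr)\Bigr|\le \frac{16}{(n-1)(\lambda^2-4)}.
\]
Adding $2$ and using $\tfrac{\lambda}{\sqrt{\lambda^2-4}}+1=\tfrac{\lambda+\sqrt{\lambda^2-4}}{\sqrt{\lambda^2-4}}$ gives
\[
\frac{\lambda+\sqrt{\lambda^2-4}}{\sqrt{\lambda^2-4}}-\frac{16}{(n-1)(\lambda^2-4)}\le 2+\tfrac{8}{n-1}\!\sum f\bigl(\tfrac{\cdot\,\pi}{n-1}\bigr)\le \frac{\lambda+\sqrt{\lambda^2-4}}{\sqrt{\lambda^2-4}}+\frac{16}{(n-1)(\lambda^2-4)},
\]
applied once with $\lambda=\lambda_1$ and once with $\lambda=\lambda_2$.

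Finally I would take reciprocals. This is valid provided the lower bound is positive, i.e.\ $(n-1)\sqrt{\lambda^2-4}\bigl(\lambda+\sqrt{\lambda^2-4}\bigr)\ge 16$; since the map $\lambda\mapsto\sqrt{\lambda^2-4}(\lambda+\sqrt{\lambda^2-4})$ is increasing on $(2,\infty)$ and $\lambda_2\le\lambda_1$, the single hypothesis $n-1\ge \tfrac{16}{\sqrt{\lambda_2^2-4}(\lambda_2+\sqrt{\lambda_2^2-4})}$ covers both $\lambda_1$ and $\lambda_2$. Reciprocating then yields the two displayed bounds of the corollary. There is no serious obstacle here; the only thing to be slightly careful about is the bookkeeping between the ceiling indices in Theorem \ref{thm:v11_v12} and the floor indices in Lemma \ref{lem:riemann}, which is resolved by the $\sin^2$ factor vanishing at the boundary.
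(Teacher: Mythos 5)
Your proposal is correct and matches the paper's intent exactly: the paper states this corollary as ``immediate from Lemma~\ref{lem:riemann},'' and your write-up simply fills in the routine steps (identifying the integral value from the proof of Lemma~\ref{lem:limit}, rescaling by $4/\pi$, adding $2$, noting the boundary terms vanish because $\sin^2(\pi)=0$, and reciprocating under the stated positivity hypothesis). No gaps.
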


\begin{remark}\label{rmk:bfunction}
Consider the function $b(\lambda)=\left( \frac{\lambda+\sqrt{\lambda^2-4}}{\sqrt{\lambda^2-4}} +\frac{16}{(n-1)(\lambda^2-4)}\right)^{-1} =\frac{\sqrt{\lambda^2-4}}{\lambda + \sqrt{\lambda^2-4} + \frac{16}{(n-1)\sqrt{\lambda^2-4}}} $ for $\lambda >2.$ It is straightforward to show that this is increasing as a function of $\lambda$ for $\lambda \in (2, \infty)$.  

Since $\lambda_1 > w+\frac{1}{w},$ then $v_{11}^2 \ge b(\lambda_1) \ge b(w+\frac{1}{w})= \frac{w^2-1}{2w^2}-\frac{4}{(n-1)w^2}\left[ \frac{1}{1+\frac{8}{(n-1)(w^2-1)}}\right] >\frac{w^2-1}{2w^2}-\frac{4}{(n-1)w^2} .$ 

Setting $\tau_n=\frac{(w+1)^2(w-1)}{w(w^{2\lceil \frac{n}{2}\rceil}-1)},$ we recall that $\lambda_2 \ge w+\frac{1}{w}-\tau_n.$ Suppose that $w+\frac{1}{w}-\tau_n \ge 2.$  It now follows as above that $v_{12}^2 \ge b(\lambda_2)\ge b(w+\frac{1}{w}-\tau_n).$ An uninteresting computation reveals that 
\begin{align*}
&\quad b\left(w+\frac{1}{w}-\tau_n\right) \\
&= b\left(w+\frac{1}{w}\right) + \frac{\tau_n}{(2w+\frac{16}{(n-1)(w-\frac{1}{w})})(w+\frac{1}{w}-\tau_n +\sqrt{(w+\frac{1}{w}-\tau_n)^2-4}+\frac{16}{(n-1)\sqrt{(w+\frac{1}{w}-\tau_n)^2-4}})}  \\
&\quad \times \left[ 
w-\frac{1}{w} + \frac{(w+\frac{1}{w})(-2(w+\frac{1}{w})+\tau_n)}{w-\frac{1}{w} + \sqrt{(w+\frac{1}{w}-\tau_n)^2-4}}
+\frac{16(-2(w+\frac{1}{w})+\tau_n)}{(n-1)(w-\frac{1}{w})\sqrt{(w+\frac{1}{w}-\tau_n)^2-4}}
\right].  
\end{align*} 
\end{remark}

\section{Estimating the Remaining Eigenvalues and their Eigenvectors}\label{sec:other} 

In this section, we discuss the smaller eigenvalues and the corresponding eigenvectors for $A(w)$. We begin by recalling a standard result and a useful fact from \cite{KLY}. 

\begin{theorem}[Weyl's Inequality] \label{Weyl}
If $A$ and $B$ are Hermitian, then $\lambda_{i + j - 1}(A + B) \le \lambda_i(A) + \lambda_j(B)$, and equivalently, $\lambda_{i + j - n}(A + B) \ge \lambda_i(A) + \lambda_j(B)$.
\end{theorem}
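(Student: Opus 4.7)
My plan is to prove Weyl's inequality via the Courant--Fischer min--max characterisation of eigenvalues, which is the standard route and cleanly handles both inequalities at once. Recall that for any Hermitian $n \times n$ matrix $M$, we have
\[
\lambda_k(M) \;=\; \min_{\substack{S \subseteq \mathbb{C}^n \\ \dim S = n-k+1}} \; \max_{\substack{x \in S \\ \|x\|_2 = 1}} x^* M x \;=\; \max_{\substack{T \subseteq \mathbb{C}^n \\ \dim T = k}} \; \min_{\substack{x \in T \\ \|x\|_2 = 1}} x^* M x.
\]
I would open the proof by quoting this and fixing notation.

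For the first inequality, I would start from the min characterisation applied to $A+B$ at index $i+j-1$: the relevant subspaces $S$ have dimension $n - (i+j-1) + 1 = n - i - j + 2$. The key idea is to produce such an $S$ by intersecting two well-chosen subspaces. Specifically, pick $S_A$ of dimension $n-i+1$ achieving the min for $\lambda_i(A)$ and $S_B$ of dimension $n-j+1$ achieving the min for $\lambda_j(B)$. By the dimension formula, $\dim(S_A \cap S_B) \ge (n-i+1) + (n-j+1) - n = n-i-j+2$, so $S_A \cap S_B$ contains a candidate subspace $S$ of the correct dimension. For any unit $x \in S$,
\[
x^*(A+B)x = x^*Ax + x^*Bx \le \lambda_i(A) + \lambda_j(B),
\]
and taking the max over such $x$ then the min over $S$ yields $\lambda_{i+j-1}(A+B) \le \lambda_i(A) + \lambda_j(B)$.

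For the companion inequality $\lambda_{i+j-n}(A+B) \ge \lambda_i(A) + \lambda_j(B)$, I would either repeat the argument with the max characterisation (now intersecting subspaces of dimensions $i$ and $j$, whose intersection has dimension at least $i+j-n$), or simply apply the first inequality to the pair $-A, -B$ using $\lambda_k(-M) = -\lambda_{n-k+1}(M)$ and re-index. The only subtle step is the dimension-count that forces the intersection to be nonempty of the required size; that is the one place where the argument actually does work, but it is a routine application of the subadditivity of codimension. Since the result is entirely standard, I would keep the write-up to a few lines invoking Courant--Fischer and the dimension bound, or simply cite a reference such as Horn and Johnson.
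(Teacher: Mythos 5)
The paper does not prove this statement at all: Weyl's inequality is recalled as a standard result and used as a black box in the proof of Lemma~\ref{lem:interlace}, so there is no in-paper argument to compare yours against. Your Courant--Fischer proof is the standard one and is correct: choosing $S_A$ and $S_B$ of dimensions $n-i+1$ and $n-j+1$ on which $x^*Ax\le\lambda_i(A)$ and $x^*Bx\le\lambda_j(B)$, the dimension count $\dim(S_A\cap S_B)\ge n-i-j+2$ is precisely the step that makes the argument go through, and the companion inequality does follow either dually or from the first applied to $-A,-B$ via $\lambda_k(-M)=-\lambda_{n-k+1}(M)$. The only point worth making explicit in a write-up is the index constraint $i+j-1\le n$ (respectively $i+j-n\ge 1$) needed for the relevant subspace dimensions to be positive.
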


\begin{lemma} \cite{KLY} \label{lem:invol}
Let $G$ be a (weighted) graph with an involution $\sigma$, which respects loops and edge weights.  Then the characteristic polynomial of the (weighted) adjacency matrix $A$ of $G$ factors into two factors $P_+$ and $P_-$ which are, respectively, the characteristic polynomials of $A_+ := \begin{bmatrix} A' + A_\sigma & A_\delta \\ 2 A_\delta^T & A_S \end{bmatrix}$ and $A_- := A' - A_\sigma$.

Furthermore, there is an eigenbasis for $A$ consisting of vectors that take the form $\begin{bmatrix} a & a & b \end{bmatrix}^T$ and $\begin{bmatrix} c & -c & 0 \end{bmatrix}^T$, where $\begin{bmatrix} a & b \end{bmatrix}^T$ is an eigenvector for $A_+$, and $c$ an eigenvector for $A_-$.
\end{lemma}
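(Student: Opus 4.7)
The plan is to diagonalize the action of $\sigma$ and exploit the fact that $A$ commutes with the corresponding permutation matrix. Since $\sigma$ respects loops and edge weights, the permutation matrix $\Pi$ of $\sigma$ satisfies $\Pi A \Pi = A$; as $\Pi$ is a symmetric involution, we have $\Pi^2 = I$ and the eigenvalues of $\Pi$ are $\pm 1$. Hence $\mathbb{R}^{|V(G)|}$ decomposes orthogonally as $V_+ \oplus V_-$, the $+1$-- and $-1$--eigenspaces of $\Pi$, and because $A\Pi = \Pi A$ both subspaces are $A$--invariant. This immediately yields a block diagonal form for $A$ in any basis adapted to the decomposition, and consequently the characteristic polynomial of $A$ factors as $P_+ P_-$, where $P_\pm$ is the characteristic polynomial of the restriction $A|_{V_\pm}$.

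To identify those restrictions concretely, I would order the vertices so that a chosen set of orbit representatives (one per non--fixed $\sigma$--orbit) comes first, their $\sigma$--images come second, and the fixed points of $\sigma$ come third. Writing $A$ in the resulting $3 \times 3$ block form, the $\sigma$--invariance forces
\[
A = \begin{bmatrix} A' & A_\sigma & A_\delta \\ A_\sigma & A' & A_\delta \\ A_\delta^T & A_\delta^T & A_S \end{bmatrix},
\]
with $V_+$ consisting of vectors of the form $(a,a,b)^T$ and $V_-$ of vectors of the form $(c,-c,0)^T$; the entries of a $V_-$--vector on the fixed vertices must vanish, since $\sigma$ fixes them.

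Computing $A$ on each piece then gives the result directly. Applying $A$ to $(a,a,b)^T$ yields $((A'+A_\sigma)a + A_\delta b,\ (A'+A_\sigma)a + A_\delta b,\ 2A_\delta^T a + A_S b)^T$, so in the reduced coordinates $(a,b)$ on $V_+$ the action of $A$ is given precisely by $A_+$. Applying $A$ to $(c,-c,0)^T$ yields $((A'-A_\sigma)c,\ -(A'-A_\sigma)c,\ 0)^T$, so in the reduced coordinate $c$ on $V_-$ the action of $A$ is given by $A_- = A' - A_\sigma$. Assembling these identifications produces both the claimed eigenbasis structure and the factorization of the characteristic polynomial, since dimensions match ($\dim V_+$ equals the number of $\sigma$--orbits, which is the order of $A_+$, and $\dim V_-$ equals the number of non--fixed orbits, the order of $A_-$).

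The one subtle point to address will be the factor of $2$ in the lower--left block of $A_+$: the parametrization $(a,a,b)$ of $V_+$ is not isometric (the first two blocks together have twice the squared norm of $a$), which is why $A_+$ is not symmetric even though $A|_{V_+}$ is self--adjoint. I would handle this by conjugating $A_+$ by $D = \mathrm{diag}(\sqrt{2}\,I, I)$, which produces the symmetric matrix $\begin{bmatrix} A' + A_\sigma & \sqrt{2}\,A_\delta \\ \sqrt{2}\,A_\delta^T & A_S \end{bmatrix}$, the genuine matrix of $A|_{V_+}$ in an orthonormal basis of $V_+$. Similarity preserves the characteristic polynomial, and this is the only fact required for the lemma as stated; no further analytic input is needed.
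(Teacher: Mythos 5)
Your argument is correct and complete: the commutation $\Pi A = \Pi A \Pi \cdot \Pi = A\Pi$, the orthogonal decomposition into the $\pm 1$--eigenspaces of $\Pi$, the block form forced by $\sigma$--invariance, and the direct computation of $A$ on $(a,a,b)^T$ and $(c,-c,0)^T$ all check out, as does your explanation of the non-symmetric factor $2$ via the similarity $D A_+ D^{-1}$ with $D=\diag(\sqrt{2}I, I)$ (which is exactly why the matrices $C_1(w)$ in the paper carry $\sqrt{2}$ entries). Note that the paper itself supplies no proof to compare against --- Lemma \ref{lem:invol} is imported verbatim from Kempton, Lippner, and Yau \cite{KLY} --- but your equivariant-decomposition argument is the standard one and is essentially the proof given there.
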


For the weighted path with involution $\sigma : j \mapsto n + 1 - j$, we have $A' = A(P_{\lfloor \frac{n}{2} \rfloor}) + w e_1 e_1^T$; for $n$ even, we have $A_\sigma = e_{n/2} e_{n / 2}^T$ with $A_\delta, A_S$ as empty matrices, while for $n$ odd, we have $A_\sigma = 0$, $A_\delta = e_{\lfloor \frac{n}{2} \rfloor}$ and $A_S = 0$.

\begin{lemma} \label{lem:interlace}
Consider the matrix $A(w)$ of order $n$, and let $\lambda_1 \ge \lambda_2 \ge \cdots \ge \lambda_n$ be the eigenvalues of $A(w)$.  If $v$ is an eigenvector corresponding to $\lambda_j$ for $j$ odd, then $v_\ell = v_{n + 1 - \ell}$, and if $u$ is an eigenvector corresponding to $\lambda_j$ for $j$ even, then $u_\ell = - u_{n + 1 - \ell}$.
\end{lemma}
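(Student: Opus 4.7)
The plan is to exploit the reflection symmetry $\sigma:j\mapsto n+1-j$ of $P_n^w$ (which respects the graph because both end vertices carry the same loop weight $w$) and then to establish strict interlacing between the two halves of the spectrum that this symmetry produces.

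First I would apply Lemma~\ref{lem:invol}.  This yields an orthogonal basis of eigenvectors of $A(w)$, each of which is either \emph{symmetric} (satisfying $v_\ell=v_{n+1-\ell}$) or \emph{antisymmetric} (satisfying $v_\ell=-v_{n+1-\ell}$, and equal to $0$ at the fixed central vertex when $n$ is odd).  Passing to the orthonormal basis $\{(e_\ell\pm e_{n+1-\ell})/\sqrt{2}\}_{\ell=1}^{\lfloor n/2\rfloor}$, together with $e_{(n+1)/2}$ in the odd case, the restriction of $A(w)$ to the symmetric subspace is realised by $B_1(w)$ when $n$ is even and by $C_1(w)$ when $n$ is odd, while the restriction to the antisymmetric subspace is realised by $B_2(w)$ and $C_2(w)$ respectively.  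Hence the $n$ eigenvalues of $A(w)$ form the union, with multiplicity, of those of the two blocks.

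Next I would show that the eigenvalues of the two blocks strictly interlace.  For $n$ even, the identity $B_1(w)=B_2(w)+2e_{n/2}e_{n/2}^T$ expresses $B_1$ as a rank-one positive update of $B_2$, so Weyl's theorem gives
\[
\lambda_k(B_1)\ge\lambda_k(B_2)\ge\lambda_{k+1}(B_1).
\]
For $n$ odd, $C_2(w)$ is exactly the principal submatrix of $C_1(w)$ obtained by deleting the last row and column, and the Cauchy interlacing theorem yields
\[
\lambda_k(C_1)\ge\lambda_k(C_2)\ge\lambda_{k+1}(C_1).
\]

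The crux of the argument — and the step I expect to be the main obstacle — is upgrading every $\ge$ above to $>$, since the lemma as stated attaches eigenvectors to individual $\lambda_j$ and therefore implicitly requires simplicity.  The key structural fact is that $B_1,B_2,C_1,C_2$ are all irreducible tridiagonal matrices with strictly positive off-diagonal entries, so by back-substitution through the tridiagonal recurrence no eigenvector can have a vanishing last coordinate.  Using this, I would rule out common eigenvalues of $B_1$ and $B_2$: the first $\tfrac{n}{2}-1$ rows of $(\lambda I-B_i)x=0$ are identical for $i=1,2$, so a hypothetical common eigenvector normalised by $x_1=1$ would be the same vector for both matrices, and confronting it with the two differing last-row equations $x_{n/2-1}=(\lambda\mp 1)x_{n/2}$ forces $x_{n/2}=0$, a contradiction.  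In the odd case the analogous conclusion is the strict form of Cauchy interlacing between a Jacobi matrix and its leading principal submatrix, and is proved by the same non-vanishing observation.

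Combining strict interlacing with the block decomposition yields
\[
\lambda_1(B_1)>\lambda_1(B_2)>\lambda_2(B_1)>\lambda_2(B_2)>\cdots,
\]
and similarly in the odd case.  Hence odd-indexed eigenvalues of $A(w)$ come from the symmetric block and even-indexed eigenvalues from the antisymmetric block; moreover each eigenvalue of $A(w)$ is simple, so any eigenvector for $\lambda_j$ must lie in the corresponding invariant subspace, giving $v_\ell=v_{n+1-\ell}$ when $j$ is odd and $u_\ell=-u_{n+1-\ell}$ when $j$ is even.
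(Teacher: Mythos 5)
Your proposal is correct and follows essentially the same route as the paper: decompose via Lemma~\ref{lem:invol} into symmetric and antisymmetric blocks, get weak interlacing from Weyl's inequality (rank-one update, $n$ even) and Cauchy interlacing ($n$ odd), then upgrade to strict interlacing by a non-vanishing-coordinate argument for the tridiagonal eigenvector recurrence. The only cosmetic difference is that the paper establishes strictness by showing eigenvectors of $A(w)$ itself cannot have a zero first entry (hence all $n$ eigenvalues are simple), whereas you rule out common eigenvalues of the two half-size blocks via their last entries; both rest on the same observation.
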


\begin{proof}
By Lemma~\ref{lem:invol}, the eigenvalues of $A(w)$ are given by the eigenvalues of $A_+$, which correspond to eigenvectors $v$ satisfying $v_\ell = v_{n + 1 - \ell}$, and the eigenvalues of $A_-$, which correspond to eigenvectors $u$ satisfying $u_\ell = - u_{n + 1 - \ell}$.  It remains to show the eigenvalues of $A_+$ and $A_-$ interlace.  

If $x$ is an eigenvector of $A(w)$ and $x_1 = 0$, then $x = 0$, a contradiction.  If $x$ and $y$ are both eigenvectors of $A(w)$ corresponding to the same eigenvalue $\lambda$, then $x - \frac{x_1}{y_1} y$ is also an eigenvector with eigenvalue $\lambda$, whose first component is zero, a contradiction.  Hence, the eigenvalues of $A(w)$ are distinct.

For $n$ odd, $A_-$ is a principal submatrix of $A_+$, so the eigenvalues of $A_+$ interlace the eigenvalues of $A_-$ as claimed.  For $n$ even, we have $A_+ = A_- + 2 e_{n/2} e_{n/2}^T$.  Hence by Weyl's Inequality, we have $\lambda_{i}(A_+) \ge \lambda_{i}(A_-) + \lambda_{n/2}(e_{n/2} e_{n/2}^T) \ge \lambda_i(A_-)$ and $\lambda_{i + 1}(A_+) \le \lambda_i(A_-) + \lambda_2(e_{n/2} e_{n/2}^T) = \lambda_i(A_-)$.  Hence, the eigenvalues of $A_+$ and $A_-$ interlace as claimed.
\end{proof}

Next we establish expressions for the eigenvector entries. 

\begin{lemma} \label{lem:evecs}
Consider the matrix $A(w)$ of order $n$, and for some $\theta \in [0, \pi]$, suppose $2 \cos (\theta)$ is the $j$-th eigenvalue of $A(w)$, for $j=3, \ldots, n$.  Then 
\[
v_{\ell} = \cos \left( \left( \ell - \frac{n + 1}{2} \right) \theta \right), \ell=1, \ldots, n 
\]
is the corresponding eigenvector if $j$ is odd, and
\[
u_{\ell} = \sin \left( \left( \ell - \frac{n + 1}{2} \right) \theta \right), \ell=1, \ldots, n
\]
is the corresponding eigenvector if $j$ is even.  Moreover, $2 \cos (\theta)$ is the $j$-th eigenvalue of $A(w)$ if and only if
\[
w \cos \left( \left( \frac{n - 1}{2} \right) \theta \right) = \cos \left( \left( \frac{n + 1}{2} \right) \theta \right)
\]
for $j$ odd or
\[
w \sin \left( \left( \frac{n - 1}{2} \right) \theta \right) = \sin \left( \left( \frac{n + 1}{2} \right) \theta \right)
\]
for $j$ even.
\end{lemma}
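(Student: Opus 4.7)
The plan is to verify directly that the proposed vectors satisfy $A(w)v = 2\cos(\theta)\,v$, with the stated $w$-condition emerging precisely from the top boundary row; the bottom boundary row will then be automatic by the symmetry/antisymmetry established in Lemma~\ref{lem:interlace}.

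For the interior rows $2 \le \ell \le n-1$, the eigenequation reads $v_{\ell-1} + v_{\ell+1} = 2\cos(\theta)\,v_\ell$. With $v_\ell = \cos\!\left(\left(\ell - \tfrac{n+1}{2}\right)\theta\right)$, the sum-to-product identity
\[
\cos(\alpha - \theta) + \cos(\alpha + \theta) = 2\cos(\theta)\cos(\alpha),
\]
applied at $\alpha = \left(\ell - \tfrac{n+1}{2}\right)\theta$, yields exactly that recurrence; the analogous identity for sine handles the even-$j$ case. For the first row, the eigenequation demands $w v_1 + v_2 = 2\cos(\theta)v_1$. Extending the cosine formula formally to $\ell = 0$ gives $v_0 = \cos\!\left(\tfrac{n+1}{2}\theta\right)$, and the same identity produces $v_0 + v_2 = 2\cos(\theta)v_1$. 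Subtracting the two expressions collapses the boundary condition to $w v_1 = v_0$, which is exactly $w\cos\!\left(\tfrac{n-1}{2}\theta\right) = \cos\!\left(\tfrac{n+1}{2}\theta\right)$. The sine case is verbatim, yielding the companion identity. Since $\cos$ is even and $\sin$ is odd under the shift $\ell \mapsto n+1-\ell$ about $\tfrac{n+1}{2}$, the proposed vectors automatically have the parity guaranteed by Lemma~\ref{lem:interlace}, so the last-row boundary equation is equivalent to the first one and needs no separate check.

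For the converse, (\ref{eqn:eval}) lets us write $\lambda_j = 2\cos(\theta_j)$ with $\theta_j \in (0,\pi)$ for $j \ge 3$. The three-term recurrence $x_{\ell-1} + x_{\ell+1} = 2\cos(\theta)x_\ell$ has a two-dimensional solution space spanned by $\cos(\ell\theta)$ and $\sin(\ell\theta)$, so any $\lambda_j$-eigenvector of $A(w)$ agrees on interior entries with a linear combination of these. Lemma~\ref{lem:interlace} forces symmetry (for odd $j$) or antisymmetry (for even $j$) about $\tfrac{n+1}{2}$, which pins the combination down, up to scaling, to the shifted cosine or sine form stated. Substituting into the top-row equation then reproduces the asserted trigonometric condition on $\theta$, completing the equivalence.

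The only point requiring care is that the shift by $\tfrac{n+1}{2}$ is a half-integer when $n$ is even; however, the trigonometric identities above are valid for real arguments, and the symmetry under $\ell \mapsto n+1-\ell$ holds regardless of the parity of $n$, so no separate case distinction is needed beyond quoting Lemma~\ref{lem:interlace} at the right moment.
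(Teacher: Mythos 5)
Your proof is correct and takes essentially the same route as the paper's: both arguments reduce to the three-term trigonometric recurrence on the interior rows (solved by the shifted cosine/sine via sum-to-product identities), invoke the symmetry or antisymmetry from Lemma~\ref{lem:interlace}, and extract the transcendental condition on $\theta$ from a boundary row. The only differences are presentational --- you verify the Ansatz directly and handle the first row with a ghost entry $v_0$, whereas the paper derives the entries by induction outward from the middle and uses the last row --- and the two treatments sit at the same level of rigor throughout, including on the ``if and only if'' clause.
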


\begin{proof}
Here we only give the proof for the case that $j$ is odd, as the argument when $j$ is even is analogous. 

Suppose $j$ is odd, then by Lemma~\ref{lem:interlace}, the eigenvector satisfies $v_{\ell} = v_{n + 1 - \ell}$.  If $n$ is even, then $v_{\frac{n}{2}} = v_{\frac{n}{2} + 1} \neq 0$, as otherwise $v = 0$.  Let $v$ be the eigenvector with $v_{\frac{n}{2}} = v_{\frac{n}{2} + 1} = \cos \left( \frac{\theta}{2} \right)$.  We will prove that $v_\ell = \cos \left( \left( \ell - \frac{n - 1}{2} \right) \theta \right)$.  It is clear the result holds for $\ell = n/2, n/2 + 1$.  Now suppose for some $1 \le \ell < n / 2$, the result holds for all $k$ such that $\ell < k \le n / 2$.  Then we have
\begin{align*}
    [A(w) v]_{\ell + 1} &= 2 \cos (\theta) v_{\ell + 1}, {\mbox{\rm{i.e.,}}} \\
    v_{\ell} + v_{\ell + 2} &= 2 \cos (\theta) v_{\ell + 1}, {\mbox{\rm{equivalently}}} \\
    v_{\ell} &= 2 \cos (\theta) v_{\ell + 1} - v_{\ell + 2}, {\mbox{\rm{equivalently}}} \\
    v_{\ell} &= 2 \cos (\theta) \cos \left( \left( \ell + 1 - \frac{n - 1}{2} \right) \theta \right) - \cos \left( \left( \ell + 2 - \frac{n - 1}{2} \right) \theta \right), {\mbox{\rm{equivalently}}} \\
    v_{\ell} &= \cos \left( \left( \ell - \frac{n - 1}{2} \right) \theta \right) + \cos \left( \left( \ell + 2 - \frac{n - 1}{2} \right) \theta \right) - \cos \left( \left( \ell + 2 - \frac{n - 1}{2} \right) \theta \right), {\mbox{\rm{and finally}}} \\
    v_{\ell} &= \cos \left( \left( \ell - \frac{n - 1}{2} \right) \theta \right), 
\end{align*}
as desired.  If $n$ is odd, then $v_{\frac{n + 1}{2}} \neq 0$, as otherwise $v = 0$.  Let $v$ be the eigenvector with $v_{\frac{n + 1}{2}} = 1$.  Then we have
\begin{align*}
    [A(w) v]_{\frac{n + 1}{2}} &= 2 \cos (\theta) v_{\frac{n + 1}{2}}, {\mbox{\rm{i.e.,}}} \\
    v_{\frac{n - 1}{2}} + v_{\frac{n + 3}{2}} &= 2 \cos (\theta), {\mbox{\rm{which yields}}} \\
    2 v_{\frac{n - 1}{2}} &= 2 \cos (\theta), {\mbox{\rm{and hence}}} \\
    v_{\frac{n - 1}{2}} &= \cos (\theta),
\end{align*}
and the result follows as above.  Moreover, $2 \cos (\theta)$ is the $j$-th eigenvalue of $A(w)$ if and only if
\begin{align*}
    w v_n + v_{n - 1} &= 2 \cos (\theta) v_n, {\mbox{\rm{which is equivalent to, respectively, }}} \\
    w \cos \left( \left( \frac{n - 1}{2} \right) \theta \right) + \cos \left( \left( \frac{n - 3}{2} \right) \theta \right) &= 2 \cos (\theta) \cos \left( \left( \frac{n - 1}{2} \right) \theta \right), \\
    w \cos \left( \left( \frac{n - 1}{2} \right) \theta \right) + \cos \left( \left( \frac{n - 3}{2} \right) \theta \right) &= \cos \left( \left( \frac{n - 3}{2} \right) \theta \right) + \cos \left( \left( \frac{n + 1}{2} \right) \theta \right), {\mbox{\rm{and}}} \\
    w \cos \left( \left( \frac{n - 1}{2} \right) \theta \right) &=  \cos \left( \left( \frac{n + 1}{2} \right) \theta \right), 
\end{align*}
as desired.
\end{proof}

Here is one of this section's  key results.  

\begin{theorem} \label{thm:other-evals}
Consider the matrix $A(w)$ of order $n$, and let $V$ be an orthogonal matrix that diagonalises $A(w)$, i.e. $V^T A(w) V = {\rm{diag}}(\lambda_1, \ldots, \lambda_n)$.  For $3 \le j \le n$, there exists $\theta_j$ such that $\frac{(j - 2) \pi}{n - 1} \le \theta_j \le \frac{j \pi}{n + 1}$ and $\lambda_j = 2 \cos (\theta_j)$.  Moreover, for $1 \le j \le n$, if $\lambda_j = 2 \cos (\theta_j)$, then we have
\[
v_{1j}^2 = \begin{cases}
\frac{\displaystyle 2 \cos^2 \left( \frac{n - 1}{2} \theta_j \right)}{\displaystyle n + \frac{\sin (n \theta_j)}{\sin (\theta_j)}}, & j \text{ odd}; \\ ~ \\
\frac{\displaystyle 2 \sin^2 \left( \frac{n - 1}{2} \theta_j \right)}{\displaystyle n - \frac{\sin (n \theta_j)}{\sin (\theta_j)}}, & j \text{ even}.
\end{cases}
\]
\end{theorem}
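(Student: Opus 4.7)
The plan is to prove the two halves of the theorem independently: first localise $\theta_j$ for $3 \le j \le n$ by interlacing arguments, and then derive the eigenvector formula by normalising the unnormalised eigenvectors already supplied by Lemma \ref{lem:evecs}.

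For the localisation, I would first confirm that $\lambda_j \in (-2,2)$ for $3 \le j \le n$, so that a $\theta_j \in (0,\pi)$ with $\lambda_j = 2\cos(\theta_j)$ genuinely exists: writing $A(w) = A(P_n) + w(e_1 e_1^T + e_n e_n^T)$, where the perturbation has rank $2$, Weyl's inequality gives $\lambda_3(A(w)) \le \lambda_1(A(P_n)) + \lambda_3(w(e_1 e_1^T + e_n e_n^T)) = 2\cos(\pi/(n+1)) + 0 < 2$. For the upper bound $\theta_j \le j\pi/(n+1)$, the positive semidefiniteness of the perturbation together with Weyl's monotonicity yields $\lambda_j(A(w)) \ge \lambda_j(A(P_n)) = 2\cos(j\pi/(n+1))$. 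For the lower bound $\theta_j \ge (j-2)\pi/(n-1)$, I would invoke Cauchy interlacing: deleting rows/columns $1$ and $n$ of $A(w)$ yields $A(P_{n-2})$, whose eigenvalues are $2\cos(k\pi/(n-1))$, $k = 1, \ldots, n-2$, and interlacing for a rank-$2$ principal submatrix removal gives $\lambda_j(A(w)) \le \lambda_{j-2}(A(P_{n-2})) = 2\cos((j-2)\pi/(n-1))$ for $3 \le j \le n$.

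For the eigenvector formula, Lemma \ref{lem:evecs} already supplies an unnormalised eigenvector, namely $v_\ell = \cos((\ell - (n+1)/2)\theta_j)$ for odd $j$ and $u_\ell = \sin((\ell - (n+1)/2)\theta_j)$ for even $j$. Hence the task reduces to computing $\|v\|^2$ (respectively $\|u\|^2$) and reading off $v_{1j}^2 = v_1^2/\|v\|^2$. Applying $\cos^2(x) = (1+\cos(2x))/2$ and $\sin^2(x) = (1-\cos(2x))/2$, the squared norm reduces to the arithmetic-progression cosine sum $\sum_{\ell=1}^n \cos((2\ell-n-1)\theta_j)$, which by the standard closed form (first term $-(n-1)\theta_j$, common difference $2\theta_j$, $n$ terms) collapses to $\sin(n\theta_j)/\sin(\theta_j)$. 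Therefore $\|v\|^2 = (n + \sin(n\theta_j)/\sin(\theta_j))/2$ and $\|u\|^2 = (n - \sin(n\theta_j)/\sin(\theta_j))/2$; combined with $v_1^2 = \cos^2((n-1)\theta_j/2)$ and $u_1^2 = \sin^2((n-1)\theta_j/2)$, this yields the stated expressions.

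The main conceptual obstacle is ensuring that the Cauchy interlacing shift is indexed correctly for a rank-$2$ submatrix deletion; the trigonometric sum collapses cleanly, but the sign flip in the even-$j$ case (where $u_1 = -\sin((n-1)\theta_j/2)$, squared away in the end) merits careful bookkeeping. Well-definedness of the final expression (i.e., $\sin(\theta_j) \neq 0$) is automatic from the localisation just proved, since for $3 \le j \le n$ we have $0 < (j-2)\pi/(n-1) \le \theta_j \le j\pi/(n+1) < \pi$.
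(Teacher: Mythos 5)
Your proposal is correct, and the normalisation half is essentially the paper's own argument: both reduce $\|v\|_2^2$ to the sum $\sum_{\ell=1}^n \cos\left(\left(2\ell-n-1\right)\theta_j\right)$ and collapse it to $\sin(n\theta_j)/\sin(\theta_j)$ (the paper telescopes via sine differences where you quote the arithmetic-progression closed form). The localisation half, however, takes a genuinely different route. The paper works with the secular function $f(\theta) = w\cos\left(\frac{n-1}{2}\theta\right) - \cos\left(\frac{n+1}{2}\theta\right)$ coming from Lemma \ref{lem:evecs}, computes its signs at the two endpoints $\frac{(j-2)\pi}{n-1}$ and $\frac{j\pi}{n+1}$, and invokes the Intermediate Value Theorem to produce a root $\theta_j$; this yields the defining trigonometric equation for $\theta_j$, which is reused later (e.g.\ in Remark \ref{rmk:v-entry}), but it identifies an eigenvalue in the interval rather than directly certifying that it is the $j$-th largest. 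You instead bound $\lambda_j$ itself from both sides: below by Weyl monotonicity under the positive semidefinite rank-two perturbation $w(e_1e_1^T + e_ne_n^T)$ of $A(P_n)$, and above by Cauchy interlacing against the principal submatrix $A(P_{n-2})$ obtained by deleting rows and columns $1$ and $n$ (with the correct index shift $\lambda_j(A(w)) \le \lambda_{j-2}(A(P_{n-2}))$ for a two-row deletion). This is arguably cleaner for the statement as written, since it pins down the $j$-th eigenvalue directly and makes the well-definedness of $\theta_j \in (0,\pi)$, hence $\sin(\theta_j) \ne 0$, immediate; the trade-off is that you still need Lemma \ref{lem:interlace} to know which parity of eigenvector (symmetric versus antisymmetric) attaches to $\lambda_j$ before selecting the cosine or sine form from Lemma \ref{lem:evecs}, whereas the paper's route generates that information along the way.
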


\begin{proof} 
We give the proof in the case that $j$ is odd, but not for the case that $j$ is even as the argument is similar. 

Suppose $j$ is odd and consider the function $f(\theta) = w \cos \left( \left( \frac{n - 1}{2} \right) \theta \right) - \cos \left( \left( \frac{n + 1}{2} \right) \theta \right)$.  We have
\begin{align*}
f \left( \frac{(j - 2) \pi}{n - 1} \right) &= w \cos \left( \frac{(j - 2) \pi}{2} \right) - \cos \left( \frac{(n + 1) (j - 2) \pi}{2 (n - 1)} \right) \\ &= - \cos \left( \frac{(j - 2) \pi}{2} \right) \cos \left( \frac{(j - 2) \pi}{n - 1} \right) + \sin \left( \frac{(j - 2) \pi}{2} \right) \sin \left( \frac{(j - 2) \pi}{n - 1} \right), \\
&= (-1)^{\frac{j - 3}{2}} \sin \left( \frac{(j - 2) \pi}{n - 1} \right) \\
f \left( \frac{j \pi}{n + 1} \right) &= w \cos \left( \frac{(n - 1) j \pi}{2 (n + 1)} \right) - \cos \left( \frac{j \pi}{2} \right) \\
&= w \cos \left( \frac{j \pi}{2} \right) \cos \left( \frac{j \pi}{n + 1} \right) + w \sin \left( \frac{j \pi}{2} \right) \sin \left( \frac{j \pi}{n + 1} \right) \\
&= w (-1)^{\frac{j - 1}{2}} \sin \left( \frac{j \pi}{n + 1} \right).
\end{align*}
Since $f \left( \frac{(j - 2) \pi}{n - 1} \right)$ and $f \left( \frac{j \pi}{n + 1} \right)$ have opposite signs,  by the Intermediate Value Theorem, there exists $\theta_j$ such that $\frac{(j - 2) \pi}{n - 1} \le \theta_j \le \frac{j \pi}{n + 1}$ and $f(\theta_j) = 0$.  Hence, by Lemma~\ref{lem:evecs}, $2 \cos (\theta_j)$ is an eigenvalue of $A(w)$ with an  eigenvector $v$ such that $v_\ell = \cos ((\ell - \frac{n + 1}{2}) \theta), \ell=1, \ldots, n$. 

Next, we observe that 
\begin{align*}
v^T v &= \sum_{\ell = 1}^n \cos^2 \left( \left( \ell - \frac{n + 1}{2} \right) \theta_j \right) \\ 
&= \frac{1}{2} \sum_{\ell = 1}^n \cos \left( \left( 2 \ell - (n + 1) \right) \theta_j \right) + \frac{n}{2} \\
&= \frac{1}{4 \sin (\theta_j)} \sum_{\ell = 1}^n (\sin ((2 \ell - n) \theta_j) - \sin ((2 \ell - (n + 2)) \theta_j) ) + \frac{n}{2} \\
&= \frac{1}{2} \frac{\sin (n \theta_j)}{\sin (\theta_j)} + \frac{n}{2}
\end{align*}
from which the result follows.
\end{proof}

\begin{remark}\label{rmk:v-entry}
We claim that if $w>1,$ then $v_{1j}^2\le \frac{2}{n}+O(\frac{1}{n^2}), j=3, \ldots, n.$ In view of Theorem \ref{thm:other-evals}, it suffices to prove that $\big|\frac{\sin(n \theta_j)}{\sin(\theta_j)} \big|  $ is bounded for $j=3, \ldots, n.$ Fix such a $j,$ and for concreteness we suppose that $j$ is odd, so that $w \cos\left(\frac{n-1}{2} \theta_j \right) =\cos\left(\frac{n+1}{2} \theta_j \right). $ Using the sum of angles formula and the defining equation for $\theta_j,$ we have $\sin(n \theta_j) = \cos\left(\frac{n-1}{2} \theta_j \right) (w \sin\left(\frac{n-1}{2} \theta_j \right) + \sin\left(\frac{n+1}{2} \theta_j \right))$, 
$\sin( \theta_j) = \cos\left(\frac{n-1}{2} \theta_j \right) (-w \sin\left(\frac{n-1}{2} \theta_j \right) + \sin\left(\frac{n+1}{2} \theta_j \right)).$  

Consequently, we have 
$$\frac{\sin(n \theta_j)}{\sin(\theta_j)} = \frac{w \sin\left(\frac{n-1}{2} \theta_j \right) + \sin\left(\frac{n+1}{2} \theta_j \right)}{-w \sin\left(\frac{n-1}{2} \theta_j \right) + \sin\left(\frac{n+1}{2} \theta_j \right)}.$$ 
Let $x = \cos\left(\frac{n-1}{2} \theta_j \right)$ and note that $|wx|\le 1.$ Then $ \sin\left(\frac{n-1}{2} \theta_j \right) = \pm \sqrt{1-x^2}, \sin\left(\frac{n+1}{2} \theta_j \right) = \pm \sqrt{1-w^2x^2},$ and it follows that $\frac{\sin(n \theta_j)}{\sin(\theta_j)}$ is of the form $\frac{\pm w \sqrt{1-x^2}\pm \sqrt{1-w^2x^2}}{\mp w \sqrt{1-x^2}\pm \sqrt{1-w^2x^2}}.$  Furthermore, since $w>1,$ it follows that 
$|\mp w \sqrt{1-x^2}\pm \sqrt{1-w^2x^2}| \ge  w \sqrt{1-x^2} -  \sqrt{1-w^2x^2}.$ It is readily established $ w \sqrt{1-x^2} -  \sqrt{1-w^2x^2} \ge w-1$ whenever $|wx|\le 1.$ It now follows that $|\frac{\sin(n \theta_j)}{\sin(\theta_j)}| \le \frac{w+1}{w-1}.$ A similar argument applies when $j$ is even, and we now deduce that $v_{1j}^2\le \frac{2}{n}+O(\frac{1}{n^2}), j=3, \ldots, n,$ as claimed. 
\end{remark}

\section{Consequences for Fidelity}\label{subsec:fid} 

Our next goal is to develop a lower bound on the fidelity of state transfer from vertex $1$ to vertex $n$ when the Hamiltonian is $A(w)$. The fidelity at time $t$, denoted $p(t)$, is given by 
\[
p(t)= \left| \sum_{j=1}^n v_{1j}v_{nj}e^{it\lambda_j} \right|^2= \left|\sum_{j=1}^n v_{1j}^2(-1)^{j-1}e^{it\lambda_j} \right|^2.
\]
Applying the triangle inequality, we find that 
\[
\left|\sum_{j=1}^n v_{1j}^2(-1)^{j-1}e^{it\lambda_j}\right| \ge \left|v_{11}^2-v_{12}^2e^{it(\lambda_2-\lambda_1)}\right| -\left|\sum_{j=3}^n v_{1j}^2\right| = \left|v_{11}^2-v_{12}^2e^{it(\lambda_2-\lambda_1)}\right| +  v_{11}^2 + v_{12}^2-1.
\]
Evidently in the case that 
$|v_{11}^2-v_{12}^2e^{it(\lambda_2-\lambda_1)}| +  v_{11}^2 + v_{12}^2-1 \ge 0,$ we obtain $$p(t)\ge \left( |v_{11}^2-v_{12}^2e^{it(\lambda_2-\lambda_1)}| +  v_{11}^2 + v_{12}^2-1 \right)^2.$$

That observation prompts the next lemma. 

\begin{lemma}\label{lem:gxy}
Fix $\alpha \in [0,2\pi],$ and consider the function   $g(x,y)=\sqrt{x^2+y^2-2xy \cos \alpha} + x+y-1$ for $x,y>0.$ Fix $x_0, y_0>0,$ and suppose that $x \ge \tilde x, y \ge \tilde y.$ Then $$ g(x,y) \ge g(\tilde x, \tilde y) \ge g(x_0,y_0) + \frac{\partial g}{\partial x}\Big|_{(x_0,y_0)} (\tilde x-x_0) +\frac{\partial g}{\partial y}\Big|_{(x_0,y_0)} (\tilde y-y_0). $$
\end{lemma}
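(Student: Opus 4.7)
The plan is to derive both inequalities from two properties of $g$: monotonicity in each argument, and convexity on the open positive quadrant.

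First I would compute
\[
\frac{\partial g}{\partial x} = \frac{x - y\cos\alpha}{\sqrt{x^2 + y^2 - 2xy\cos\alpha}} + 1,
\]
and show this is nonnegative by a case split on the sign of $x - y\cos\alpha$. The case $x - y\cos\alpha \ge 0$ is immediate; in the remaining case, the claim reduces by squaring to $y^2\cos^2\alpha \le y^2$, which holds since $|\cos\alpha|\le 1$. The argument for $\partial g/\partial y \ge 0$ is symmetric. Integrating these partials along axis-parallel paths yields $g(x,y) \ge g(\tilde x, y) \ge g(\tilde x, \tilde y)$ whenever $x \ge \tilde x$ and $y \ge \tilde y$, which is the first inequality.

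For the second inequality, my plan is to establish convexity of $g$ on the open positive quadrant, after which the statement is immediate from the first-order supporting hyperplane characterization of convex functions. Writing $g(x,y) = h(x,y) + (x + y - 1)$ with $h(x,y) = \sqrt{x^2 + y^2 - 2xy\cos\alpha}$, I observe that $h(x,y) = \|L(x,y)\|_2$ where $L(x,y) = (x - y\cos\alpha,\, y\sin\alpha)$, since
\[
\|L(x,y)\|_2^2 = (x - y\cos\alpha)^2 + y^2\sin^2\alpha = x^2 + y^2 - 2xy\cos\alpha.
\]
Thus $h$ is the composition of the Euclidean norm with a linear map, hence convex, and $g$ is then the sum of a convex and an affine function, hence convex.

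The only point requiring a little care is that $h$ may fail to be differentiable at certain degenerate points (e.g., when $\alpha \in \{0,\pi\}$, $h$ is not smooth on the line $x=y$). Since the statement writes $\partial g/\partial x$ and $\partial g/\partial y$ at $(x_0,y_0)$ explicitly, I would simply take it as implicit that $(x_0,y_0)$ is a point of differentiability, which covers generic $(\alpha,x_0,y_0)$. I do not expect any of the steps to pose a serious obstacle; the main conceptual content is recognizing $h$ as the Euclidean norm of a linear image of $(x,y)$, from which convexity — and therefore the tangent plane inequality — is a one-liner.
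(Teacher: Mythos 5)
Your proof is correct. For the first inequality you do essentially what the paper does: compute $\frac{\partial g}{\partial x} = \frac{x-y\cos\alpha}{\sqrt{x^2+y^2-2xy\cos\alpha}}+1 \ge 0$ (and symmetrically in $y$) and conclude monotonicity; your case split on the sign of $x-y\cos\alpha$ merely makes explicit what the paper asserts. Where you genuinely diverge is the second inequality. The paper computes the Hessian of $g$ explicitly, obtaining $\frac{1-\cos^2\alpha}{(x^2+y^2-2xy\cos\alpha)^{3/2}}\left[\begin{smallmatrix} y^2 & -xy\\ -xy & x^2\end{smallmatrix}\right]$, observes that this is positive semidefinite, and invokes Taylor's theorem. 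You instead recognize $\sqrt{x^2+y^2-2xy\cos\alpha}$ as $\lVert L(x,y)\rVert_2$ for the linear map $L(x,y)=(x-y\cos\alpha,\,y\sin\alpha)$, so that $g$ is a norm composed with a linear map plus an affine function, hence convex, and the claimed inequality is the standard first-order (supporting hyperplane) characterization of convexity. Your route buys a computation-free convexity proof valid on all of $\mathbb{R}^2$, not just where $g$ is twice differentiable; the paper's route is more elementary in that it needs only Taylor's theorem rather than the first-order convexity criterion. Your caveat about differentiability is apt but slightly misstated: for $x,y>0$ the radicand vanishes only when $\cos\alpha=1$ and $x=y$ (for $\alpha=\pi$ one gets $h=x+y$, which is smooth), and the paper's proof silently has the same defect at such points, so you lose nothing relative to the original by restricting to points of differentiability.
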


\begin{proof}
We have $\frac{\partial g}{\partial x} = \frac{x - y\cos \alpha}{\sqrt{x^2+y^2-2xy \cos \alpha}}+1 \ge 0$ 
and $\frac{\partial g}{\partial y} = \frac{y - x\cos \alpha}{\sqrt{x^2+y^2-2xy \cos \alpha}}+1 \ge 0,$ so that $g$ is increasing in both $x$ and $y$. In particular, $g(x,y) \ge g(\tilde x, \tilde y).$

A computation shows that the Hessian for $g$ is $$H = \frac{1-\cos^2\alpha}{(x^2+y^2-2xy \cos \alpha)^{\frac{3}{2}}} \left[\begin{array}{cc}y^2 &-xy\\-xy & x^2\end{array}\right]. $$  Evidently $H$ is positive semidefinite, and so we find from Taylor's theorem that 
\begin{equation*}
     g(\tilde x, \tilde y) \ge g(x_0,y_0) + \frac{\partial g}{\partial x}\Big|_{(x_0,y_0)} (\tilde x-x_0) +\frac{\partial g}{\partial y}\Big|_{(x_0,y_0)} (\tilde y-y_0). \qedhere
\end{equation*}
\end{proof}

We now derive a lower bound on the fidelity of state transfer. 

\begin{theorem}\label{thm:lowerbd} Suppose that $w+\frac{1}{w}-\tau_n \ge 2.$ Set $\alpha = t(\lambda_1-\lambda_2),$ and suppose that $\alpha$ is not an even multiple of $\pi$. Then 
\begin{align} \label{eqn:bd} \nonumber
&\quad |v_{11}^2-v_{12}^2e^{it(\lambda_2-\lambda_1)}| +  v_{11}^2 + v_{12}^2-1 \\ &\ge \frac{w^2-1}{w^2}\sqrt{\frac{1-\cos \alpha}{2}} -\frac{1}{w^2} + \frac{(w^2-1)(1-\cos^2 \alpha)}{\sqrt{8(1-\cos \alpha)}} \left(-\frac{8}{(n-1)w^2}\left[ \frac{1}{1+\frac{8}{(n-1)(w^2-1)}}\right] +\sigma_n \right), 
\end{align}
where 
\begin{align*}
\sigma_n
&= \frac{\tau_n}{(2w+\frac{16}{(n-1)(w-\frac{1}{w})})(w+\frac{1}{w}-\tau_n +\sqrt{(w+\frac{1}{w}-\tau_n)^2-4}+\frac{16}{(n-1)\sqrt{(w+\frac{1}{w}-\tau_n)^2-4}})} \times \\
&\quad \left[ 
w-\frac{1}{w} + \frac{(w+\frac{1}{w})(-2(w+\frac{1}{w})+\tau_n)}{w-\frac{1}{w} + \sqrt{(w+\frac{1}{w}-\tau_n)^2-4}}
+\frac{16(-2(w+\frac{1}{w})+\tau_n)}{(n-1)(w-\frac{1}{w})\sqrt{(w+\frac{1}{w}-\tau_n)^2-4}}
\right]. 
\end{align*}

In particular, if the right side of \eqref{eqn:bd} is nonnegative, then 
$$p(t) \ge \left( \frac{w^2-1}{w^2}\sqrt{\frac{1-\cos \alpha}{2}} -\frac{1}{w^2} + \frac{(w^2-1)(1-\cos^2 \alpha)}{\sqrt{8(1-\cos \alpha)}} \left(-\frac{8}{(n-1)w^2}\left[ \frac{1}{1+\frac{8}{(n-1)(w^2-1)}}\right] +\sigma_n \right)\right)^2.$$
\end{theorem}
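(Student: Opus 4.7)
The plan is to leverage the triangle-inequality reduction already done in the paragraph before the theorem, and then bound the resulting expression via Lemma~\ref{lem:gxy}. Writing $g(x,y)=\sqrt{x^2+y^2-2xy\cos\alpha}+x+y-1$ with $\alpha=t(\lambda_1-\lambda_2)$, the text shows $p(t)^{1/2}\ge g(v_{11}^2,v_{12}^2)$ once one uses $\sum_{j\ge 3}v_{1j}^2=1-v_{11}^2-v_{12}^2$ together with the triangle inequality. Squaring, when the right-hand side of \eqref{eqn:bd} is nonnegative, gives the $p(t)$ bound, so the task reduces to proving the inequality \eqref{eqn:bd}.

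I extract concrete lower bounds for $v_{11}^2$ and $v_{12}^2$ from Remark~\ref{rmk:bfunction}: using $\lambda_1\ge w+\frac{1}{w}$ (Theorem~\ref{thm:bds}(c)) and monotonicity of $b$, $v_{11}^2\ge b(w+\frac{1}{w})$; under the hypothesis $w+\frac{1}{w}-\tau_n\ge 2$, the bound $\lambda_2\ge w+\frac{1}{w}-\tau_n$ similarly yields $v_{12}^2\ge b(w+\frac{1}{w}-\tau_n)=b(w+\frac{1}{w})+\sigma_n$. Set $\tilde x=b(w+\frac{1}{w})$ and $\tilde y=b(w+\frac{1}{w}-\tau_n)$, and take the reference point $(x_0,y_0)=\bigl(\frac{w^2-1}{2w^2},\frac{w^2-1}{2w^2}\bigr)$, which Theorem~\ref{thm:eveclim} identifies as the common limit of $v_{11}^2,v_{12}^2$ as $n\to\infty$.

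A direct computation gives $g(x_0,y_0)=\frac{w^2-1}{w^2}\sqrt{\frac{1-\cos\alpha}{2}}-\frac{1}{w^2}$, the first two terms of the RHS in \eqref{eqn:bd}. Using the explicit formulas in Remark~\ref{rmk:bfunction}, the deviation sum $(\tilde x-x_0)+(\tilde y-y_0)$ simplifies to $-\frac{8}{(n-1)w^2}\bigl[\frac{1}{1+\frac{8}{(n-1)(w^2-1)}}\bigr]+\sigma_n$, matching the parenthesised factor in the theorem.

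Lemma~\ref{lem:gxy} applied to $(\tilde x,\tilde y)$ with reference $(x_0,y_0)$ now gives $g(v_{11}^2,v_{12}^2)\ge g(\tilde x,\tilde y)\ge g(x_0,y_0)+\partial_x g|_{(x_0,y_0)}(\tilde x-x_0)+\partial_y g|_{(x_0,y_0)}(\tilde y-y_0)$. By the symmetry of the reference point on the diagonal, $\partial_x g|_{(x_0,y_0)}=\partial_y g|_{(x_0,y_0)}$, so the Taylor correction factors as a single coefficient times the deviation sum just identified. I expect the main obstacle to be the algebraic step of reducing that common partial to the form $\frac{(w^2-1)(1-\cos^2\alpha)}{\sqrt{8(1-\cos\alpha)}}$ stated in the theorem, via half-angle identities and the specific value $x_0=(w^2-1)/(2w^2)$. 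Once that identification is made, assembling the three pieces completes the proof of \eqref{eqn:bd}, and squaring under the nonnegativity hypothesis finishes the fidelity bound.
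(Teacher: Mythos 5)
Your proposal follows the paper's proof essentially verbatim: the same reduction via the triangle inequality, the same application of Lemma~\ref{lem:gxy} with $x=v_{11}^2$, $y=v_{12}^2$, $\tilde x=b(w+\tfrac1w)$, $\tilde y=b(w+\tfrac1w-\tau_n)$, and the same diagonal reference point $x_0=y_0=\tfrac{w^2-1}{2w^2}$; the computation of $g(x_0,y_0)$ and the identification of the deviation sum with $-\tfrac{8}{(n-1)w^2}\bigl[\cdots\bigr]+\sigma_n$ are exactly as in the paper.

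The one step you leave open, however, is a genuine gap, and your instinct that it is the obstacle is correct: no half-angle manipulation will reduce the common partial derivative to the stated coefficient. From Lemma~\ref{lem:gxy}, $\frac{\partial g}{\partial x}=\frac{x-y\cos\alpha}{\sqrt{x^2+y^2-2xy\cos\alpha}}+1$, and at any diagonal point $x=y=c>0$ this evaluates to $1+\sqrt{\tfrac{1-\cos\alpha}{2}}$, independent of $c$ and hence of $w$. The quantity $\frac{(w^2-1)(1-\cos^2\alpha)}{\sqrt{8(1-\cos\alpha)}}$ appearing in the theorem depends on $w$ and is not equal to this (for instance it vanishes when $\cos\alpha=-1$, where the true partial equals $2$). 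The paper's own proof asserts the identification in a single sentence (``Computations reveal\dots'') whose displayed equation appears to be garbled, so you have in effect reproduced the paper's argument up to and including its weakest point. Note also that one cannot simply wave this away by replacing the true coefficient with a larger one: both deviations $\tilde x-x_0$ and $\tilde y-y_0$ are negative (by Remark~\ref{rmk:bfunction} and the monotonicity of $b$), so substituting a coefficient $K$ for $1+\sqrt{(1-\cos\alpha)/2}$ preserves the lower bound only when $K\ge 1+\sqrt{(1-\cos\alpha)/2}$, which the stated coefficient does not always satisfy. To complete the proof as literally stated you would need to resolve this discrepancy, not merely perform more algebra.
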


\begin{proof}
We apply Lemma \ref{lem:gxy} with the parameter set $x=v_{11}^2$, $y=v_{12}^2$, $x_0=y_0=\frac{w^2-1}{2w^2}$, $\tilde x=b\left(w+\frac{1}{w}\right)$, and $\tilde y= b\left(w+\frac{1}{w}-\tau_n\right).$ 
Computations reveal with these parameters,  $\sqrt{x_0^2+y_0^2-2x_0y_0 \cos \alpha} = \frac{w^2-1}{w^2}\sqrt{\frac{1-\cos \alpha}{2}}$ and 
$\frac{\partial g}{\partial x}\Big|_{(x_0,y_0)} = \frac{\partial g}{\partial y}\Big|_{(x_0,y_0)} \frac{(w^2-1)(1-\cos^2 \alpha)}{\sqrt{8(1-\cos \alpha)}}. $ From Lemma \ref{lem:gxy}, we find that $$ g(x,y)\ge \frac{w^2-1}{w^2}\sqrt{\frac{1-\cos \alpha}{2}} -\frac{1}{w^2} + \frac{(w^2-1)(1-\cos^2 \alpha)}{\sqrt{8(1-\cos \alpha)}} \left( b\left(w+\frac{1}{w}\right) + b\left(w+\frac{1}{w}-\tau_n\right) -\frac{w^2-1}{w^2}   \right).$$ 
From Remark \ref{rmk:bfunction}, we find that 
\begin{align*}
&\quad b\left(w+\frac{1}{w}\right) + b\left(w+\frac{1}{w}-\tau_n\right) -\frac{w^2-1}{w^2} \\
&= -\frac{8}{(n-1)w^2}\left[ \frac{1}{1+\frac{8}{(n-1)(w^2-1)}}\right]\\ 
&\quad + \frac{\tau_n}{(2w+\frac{16}{(n-1)(w-\frac{1}{w})})(w+\frac{1}{w}-\tau_n +\sqrt{(w+\frac{1}{w}-\tau_n)^2-4}+\frac{16}{(n-1)\sqrt{(w+\frac{1}{w}-\tau_n)^2-4}})} \times \\
&\quad \left[ 
w-\frac{1}{w} + \frac{(w+\frac{1}{w})(-2(w+\frac{1}{w})+\tau_n)}{w-\frac{1}{w} + \sqrt{(w+\frac{1}{w}-\tau_n)^2-4}}
+\frac{16(-2(w+\frac{1}{w})+\tau_n)}{(n-1)(w-\frac{1}{w})\sqrt{(w+\frac{1}{w}-\tau_n)^2-4}}
\right].  
\end{align*}
The conclusion now follows. 
\end{proof}

\begin{cor} Suppose that $w \ge \sqrt{2}$, that $w+\frac{1}{w}-\tau_n \ge 2,$  and that $k \in \naturals.$ Then $p\left( \frac{(2k+1)\pi}{\lambda_1-\lambda_2}\right) \ge \left(\frac{w^2-2}{w^2}\right)^2.$
\end{cor}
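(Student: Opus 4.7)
My plan is to obtain this as an essentially immediate corollary of Theorem~\ref{thm:lowerbd}, exploiting the fact that the choice $t=\frac{(2k+1)\pi}{\lambda_1-\lambda_2}$ makes the complicated correction term in \eqref{eqn:bd} vanish.

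First I would set $t=\frac{(2k+1)\pi}{\lambda_1-\lambda_2}$, so that $\alpha=t(\lambda_1-\lambda_2)=(2k+1)\pi$. Since $2k+1$ is odd, $\alpha$ is an odd (and hence not an even) multiple of $\pi$, so the hypothesis of Theorem~\ref{thm:lowerbd} is satisfied. The remaining hypotheses $w+\frac{1}{w}-\tau_n\ge 2$ and $w\ge\sqrt{2}>1$ are assumed directly.

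Next I would simplify the right-hand side of \eqref{eqn:bd}. With $\alpha=(2k+1)\pi$ we have $\cos\alpha=-1$, so $\sqrt{\frac{1-\cos\alpha}{2}}=1$ and, crucially, $1-\cos^2\alpha=0$. The factor $\frac{(w^2-1)(1-\cos^2\alpha)}{\sqrt{8(1-\cos\alpha)}}$ multiplying the entire $\sigma_n$-correction is therefore zero, regardless of the (possibly unpleasant) value of the bracket it multiplies. Consequently the lower bound in \eqref{eqn:bd} collapses to
\begin{equation*}
\frac{w^2-1}{w^2}\cdot 1-\frac{1}{w^2}+0=\frac{w^2-2}{w^2}.
\end{equation*}
Since $w\ge\sqrt{2}$, this quantity is nonnegative, which is precisely the extra condition needed to apply the squared form of the bound at the end of Theorem~\ref{thm:lowerbd}.

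Finally I would conclude $p\!\left(\frac{(2k+1)\pi}{\lambda_1-\lambda_2}\right)\ge\left(\frac{w^2-2}{w^2}\right)^2$. There is no real obstacle here; the only thing to be careful about is verifying that the vanishing factor genuinely annihilates the $\sigma_n$ term rather than producing an indeterminate $0/0$ (it does not, since $\sqrt{8(1-\cos\alpha)}=4\ne 0$ at $\alpha=(2k+1)\pi$), and checking that the hypotheses of Theorem~\ref{thm:lowerbd} transfer directly from those of the corollary. With those two routine checks the statement follows in a few lines.
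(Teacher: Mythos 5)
Your proposal is correct and follows exactly the paper's own route: the paper's proof likewise sets $\alpha=(2k+1)\pi$ so that $\cos\alpha=-1$ and then invokes Theorem~\ref{thm:lowerbd}, with your observation that the factor $1-\cos^2\alpha=0$ annihilates the $\sigma_n$-correction being precisely the computation the paper leaves implicit. No gaps.
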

\begin{proof}
Maintaining the notation of Theorem \ref{thm:lowerbd}, we have $\alpha = (2k+1)\pi,$ so that $\cos \alpha =-1.$ The inequality now follows from Theorem \ref{thm:lowerbd}. 
\end{proof}

\begin{remark}
In some of the results above, we have assumed the technical condition that $w+\frac{1}{w}-\tau_n \ge 2$. The following observations are straightforward. 
a) if $w+\frac{1}{w}-\tau_n \ge 2$ then $w+\frac{1}{w}-\tau_{n+k} \ge 2$ for any $k \in \naturals$; b) if $w>1, \exists n_0\in \naturals$ such that $w+\frac{1}{w}-\tau_n \ge 2$ for all $n \ge n_0$; 
c) $w+\frac{1}{w}-\tau_n \ge 2$ provided that $w$ is bounded below by the positive root of the polynomial $x^{2\lceil \frac{n}{2}\rceil +2} - x^{2\lceil \frac{n}{2}\rceil +1} -2x^2-x-1.$ For $n=4, \ldots, 8,$ the corresponding roots are approximately 
$1.6550 ,   1.4656 ,   1.4656 ,   1.3667, $ and $   1.3667,$ respectively. 
\end{remark}

Next we derive an upper bound on the fidelity. 
From the triangle inequality, we find that for any $t,$ 
$$p(t)\le \left(|v_{11}^2 -v_{12}^2e^{t(\lambda_2-\lambda_1)}| + \sum_{j=3}^n v_{1j}^2 \right)^2 = \left(|v_{11}^2 -v_{12}^2e^{t(\lambda_2-\lambda_1)}| +1-v_{11}^2 -v_{12}^2 \right)^2. $$ Letting $\alpha = t(\lambda_1-\lambda_2),$ we find readily that $p(t) \le \left(\sqrt{v_{11}^4+v_{12}^4-2v_{11}^2v_{12}^2\cos \alpha}+1-v_{11}^2-v_{12}^2 \right)^2.$ That observation prompts our interest in the function $h(x,y)=\sqrt{x^2 + y^2-2xy\cos \alpha}+1-x-y$, $x,y \in \mathbb{R},$ since evidently $p(t)\le h(v_{11}^2,v_{12}^2)^2.$ 

It is straightforward to determine that $h(x,y)$ is nonincreasing in both $x$ and $y$. Consequently, if we have $x\ge \tilde x>0, y\ge \tilde y>0,$ then $h(x,y)\le h(\tilde x,\tilde y).$ By Corollary \ref{cor:vbounds}, it follows that $h(v_{11}^2,v_{12}^2) \le h(b(\lambda_1), b(\lambda_2)).$ Recalling that $b(\lambda)$ is increasing for $\lambda>2,$ that $\lambda_1 > w+\frac{1}{w}, \lambda_2 \ge w+\frac{1}{w}-\tau_n,$ it now follows that if  $w+\frac{1}{w}-\tau_n>2,$ we have: 
\begin{align*}
 p(t) &\le  
 \left( \sqrt{b\left(w+\frac{1}{w}\right)^2 + b(w+\frac{1}{w}-\tau_n)^2 - 2 
b\left(w+\frac{1}{w}\right)b\left(w+\frac{1}{w}-\tau_n\right) \cos \alpha } \right. \\
&\quad \left. + 1-b\left(w+\frac{1}{w}\right)-b\left(w+\frac{1}{w}-\tau_n
\right)\right)^2.  
\end{align*}

We summarise the above discussion as follows. 

\begin{theorem}\label{thm:upperbdfid} Suppose that $n \in \naturals$ with $n \ge 2,$ and that $t>0.$ Set $\alpha =t(\lambda_1-\lambda_2),$ and suppose that $w+\frac{1}{w}-\tau_n>2.$ Then 
\begin{align*}
 p(t) &\le  
 \left( \sqrt{b\left(w+\frac{1}{w}\right)^2 + b(w+\frac{1}{w}-\tau_n)^2 - 2 
b\left(w+\frac{1}{w}\right)b\left(w+\frac{1}{w}-\tau_n\right) \cos \alpha } \right. \\
&\quad \left. + 1-b\left(w+\frac{1}{w}\right)-b\left(w+\frac{1}{w}-\tau_n
\right)\right)^2.  
\end{align*}
\end{theorem}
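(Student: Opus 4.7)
The strategy is essentially laid out in the discussion immediately preceding the theorem, and the proposal is to organise that material into a clean three-stage argument. First, I would start from the identity
\[
p(t) = \left|\sum_{j=1}^n v_{1j}^2 (-1)^{j-1} e^{it\lambda_j}\right|^2
\]
(using Lemma \ref{lem:interlace} to replace $v_{nj}$ by $(-1)^{j-1}v_{1j}$), apply the triangle inequality to separate the $j=1,2$ contributions from the tail, and use $\sum_{j=3}^n v_{1j}^2 = 1 - v_{11}^2 - v_{12}^2$ (since the squared entries of a unit eigenvector row sum to $1$) to obtain $p(t) \le h(v_{11}^2, v_{12}^2)^2$, where $h(x,y) = \sqrt{x^2 + y^2 - 2xy\cos\alpha} + 1 - x - y$ and $\alpha = t(\lambda_1-\lambda_2)$.

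Second, I would verify that $h$ is non-increasing in each argument on the positive orthant. Computing
\[
\frac{\partial h}{\partial x} = \frac{x - y\cos\alpha}{\sqrt{x^2 + y^2 - 2xy\cos\alpha}} - 1,
\]
this is nonpositive provided $\sqrt{x^2+y^2-2xy\cos\alpha} \ge x - y\cos\alpha$; that inequality holds trivially when the right-hand side is negative, and otherwise follows by squaring and cancelling to the true inequality $y^2\cos^2\alpha \le y^2$. A symmetric argument handles $\partial h/\partial y$. Hence, for any positive $\tilde x \le x$ and $\tilde y \le y$, we have $h(x,y) \le h(\tilde x, \tilde y)$.

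Third, I would chain two lower bounds on the eigenvector entries. By Corollary \ref{cor:vbounds}, $v_{11}^2 \ge b(\lambda_1)$ and $v_{12}^2 \ge b(\lambda_2)$, where $b$ is the function from Remark \ref{rmk:bfunction}. Since $b$ is increasing on $(2,\infty)$, and since Theorem \ref{thm:bds}(c) gives $\lambda_1 \ge w + \tfrac{1}{w}$ and $\lambda_2 \ge w + \tfrac{1}{w} - \tau_n$ (with the latter exceeding $2$ by hypothesis), we conclude $v_{11}^2 \ge b(w+\tfrac{1}{w})$ and $v_{12}^2 \ge b(w+\tfrac{1}{w}-\tau_n)$. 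Substituting into the monotonicity statement for $h$ from the previous step yields
\[
p(t) \le h\bigl(b(w+\tfrac{1}{w}),\, b(w+\tfrac{1}{w}-\tau_n)\bigr)^2,
\]
which is precisely the claimed inequality once $h$ is expanded.

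The main obstacle is really just bookkeeping, since the strong technical work has already been done in Corollary \ref{cor:vbounds} and Theorem \ref{thm:bds}. The only step requiring a little care is the monotonicity check for $h$, where one must handle the sign of $x - y\cos\alpha$ separately; both cases are short. Everything else is a direct application of previously established bounds, and the theorem follows by assembling the three stages above.
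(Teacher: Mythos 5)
Your proposal is correct and follows essentially the same route as the paper, which presents exactly this argument (triangle inequality to get $p(t)\le h(v_{11}^2,v_{12}^2)^2$, monotonicity of $h$ in each variable, then Corollary \ref{cor:vbounds} combined with the monotonicity of $b$ and the eigenvalue bounds of Theorem \ref{thm:bds}) in the discussion immediately preceding the theorem statement. Your explicit verification that $\partial h/\partial x\le 0$ merely fills in a step the paper labels ``straightforward.''
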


\begin{remark}
Recalling that $b(w+\frac{1}{w}), b(w+\frac{1}{w}-\tau_n) = \frac{w^2-1}{2w^2} + O\left(\frac{1}{n}\right),$ we find readily that 
$$p(t) \le \left(\frac{w^2-1}{w^2} \sqrt{\frac{1-\cos \alpha}{2}}  +\frac{1}{w^2} \right)^2+ O\left(\frac{1}{n}\right)= \left(\frac{w^2-1}{w^2} \left|\sin \frac{\alpha}{2}\right| +\frac{1}{w^2} \right)^2+ O\left(\frac{1}{n}\right) .$$ 
\end{remark}

\begin{cor}\label{cor:upper} 
Suppose that the hypothesis of Theorem \ref{thm:upperbdfid} holds. 
Suppose further that $p(t)=\gamma^2$ for some  $\gamma \ge 1-b(w+\frac{1}{w})- b(w+\frac{1}{w}-\tau_n) .$ Then necessarily $$\cos \alpha \le -1 + (1-\gamma)\left( \frac{b\left(w+\frac{1}{w}\right)+ b\left(w+\frac{1}{w}-\tau_n\right)}{b\left(w+\frac{1}{w}\right)b\left(w+\frac{1}{w}-\tau_n\right)}\right) -\frac{(1-\gamma)^2}{2b\left(w+\frac{1}{w}\right)b\left(w+\frac{1}{w}-\tau_n\right)}. $$ 

Let $\alpha = (2k+1)\pi +z$ where $k \in \naturals$, $|z|\le \pi.$ If $ 1 - (1-\gamma)\left( \frac{b\left(w+\frac{1}{w}\right)+ b\left(w+\frac{1}{w}-\tau_n\right)}{b\left(w+\frac{1}{w}\right)b\left(w+\frac{1}{w}-\tau_n\right)}\right) +\frac{(1-\gamma)^2}{2b\left(w+\frac{1}{w}\right)b\left(w+\frac{1}{w}-\tau_n\right)} \ge 0$ then 
\begin{equation*}
|z| \le \cos^{-1}\left( 1 - (1-\gamma)\left( \frac{b\left(w+\frac{1}{w}\right)+ b\left(w+\frac{1}{w}-\tau_n\right)}{b\left(w+\frac{1}{w}\right)b\left(w+\frac{1}{w}-\tau_n\right)}\right) +\frac{(1-\gamma)^2}{2b\left(w+\frac{1}{w}\right)b\left(w+\frac{1}{w}-\tau_n\right)}\right). 
\end{equation*}
\end{cor}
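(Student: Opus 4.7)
The plan is to derive both inequalities by purely algebraic manipulation of the upper bound in Theorem~\ref{thm:upperbdfid}; no new structural results are required. For brevity write $b_1 = b(w+\frac{1}{w})$ and $b_2 = b(w+\frac{1}{w}-\tau_n)$, so Theorem~\ref{thm:upperbdfid} reads
$$p(t) \le \left(\sqrt{b_1^2 + b_2^2 - 2 b_1 b_2 \cos\alpha} + 1 - b_1 - b_2\right)^2.$$

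First I would set $p(t) = \gamma^2$, take square roots, and rearrange to obtain
$$\gamma - 1 + b_1 + b_2 \le \sqrt{b_1^2 + b_2^2 - 2 b_1 b_2 \cos\alpha}.$$
The hypothesis $\gamma \ge 1 - b_1 - b_2$ is precisely what forces the left-hand side to be nonnegative, so that squaring preserves the inequality. Expanding $(b_1 + b_2 - (1 - \gamma))^2$, cancelling the $b_1^2 + b_2^2$ contribution, and dividing through by $2 b_1 b_2 > 0$ yields
$$1 + \cos\alpha \le (1-\gamma)\,\frac{b_1 + b_2}{b_1 b_2} - \frac{(1-\gamma)^2}{2 b_1 b_2},$$
which becomes the first displayed inequality of the corollary after subtracting $1$ from both sides.

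For the second inequality I would set $\alpha = (2k+1)\pi + z$ with $|z|\le \pi$ and use $\cos\alpha = -\cos z$. Substitution into the inequality just obtained turns it into a lower bound
$$\cos z \ge 1 - (1-\gamma)\,\frac{b_1 + b_2}{b_1 b_2} + \frac{(1-\gamma)^2}{2 b_1 b_2}.$$
The explicit hypothesis of the second assertion is that this lower bound is nonnegative; that it does not exceed $1$ is automatic because $\gamma \ge 1 - b_1 - b_2$ gives $1-\gamma \le b_1 + b_2$, from which a short check shows the bound is at most $1$. Since $\cos$ is even and strictly decreasing on $[0,\pi]$, the inequality $\cos|z| \ge c$ for $c \in [0,1]$ is equivalent to $|z|\le \cos^{-1}(c)$, giving the claimed bound on $|z|$.

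The entire argument is elementary, and I would not expect any real obstacle; the only care needed is to keep the sign conditions lined up so that squaring is legitimate at step one and $\cos^{-1}$ is applied to a quantity in $[-1,1]$ at step two. Both reduce to the single hypothesis $\gamma \ge 1 - b_1 - b_2$ together with the nonnegativity assumption stated in the second half of the corollary.
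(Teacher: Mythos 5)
Your proposal is correct and follows essentially the same route as the paper: rearrange the bound of Theorem~\ref{thm:upperbdfid} so that the square root is isolated, use $\gamma \ge 1-b(w+\frac{1}{w})-b(w+\frac{1}{w}-\tau_n)$ to justify squaring, simplify to the stated bound on $\cos\alpha$, and then substitute $\cos\alpha=-\cos z$ and invert the cosine. No gaps; the sign bookkeeping you flag is exactly the care the paper takes.
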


\begin{proof} 
From Theorem \ref{thm:upperbdfid}, we find that 
$$\sqrt{b\left(w+\frac{1}{w}\right)^2 + b(w+\frac{1}{w}-\tau_n)^2 - 2 
b\left(w+\frac{1}{w}\right)b\left(w+\frac{1}{w}-\tau_n\right) \cos \alpha }\ge \gamma -1 + b(w+\frac{1}{w})+ b(w+\frac{1}{w}-\tau_n).$$
Simplifying that inequality now yields 
$$\cos \alpha \le -1 + (1-\gamma)\left( \frac{b\left(w+\frac{1}{w}\right)+ b\left(w+\frac{1}{w}-\tau_n\right)}{b\left(w+\frac{1}{w}\right)b\left(w+\frac{1}{w}-\tau_n\right)}\right) -\frac{(1-\gamma)^2}{2b\left(w+\frac{1}{w}\right)b\left(w+\frac{1}{w}-\tau_n\right)}. 
$$ 

For $\alpha = (2k+1)\pi +z$ with $k \in \naturals,$  $|z|\le \pi,$ we find that 
$$\cos z \ge 1 - (1-\gamma)\left( \frac{b\left(w+\frac{1}{w}\right)+ b\left(w+\frac{1}{w}-\tau_n\right)}{b\left(w+\frac{1}{w}\right)b\left(w+\frac{1}{w}-\tau_n\right)}\right) +\frac{(1-\gamma)^2}{2b\left(w+\frac{1}{w}\right)b\left(w+\frac{1}{w}-\tau_n\right)}.$$
In particular, since the right hand side above is nonnegative, we may conclude that 
\begin{equation*} |z| \le \cos^{-1}\left( 1 - (1-\gamma)\left( \frac{b\left(w+\frac{1}{w}\right)+ b\left(w+\frac{1}{w}-\tau_n\right)}{b\left(w+\frac{1}{w}\right)b\left(w+\frac{1}{w}-\tau_n\right)}\right) +\frac{(1-\gamma)^2}{2b\left(w+\frac{1}{w}\right)b\left(w+\frac{1}{w}-\tau_n\right)}\right). \qedhere \end{equation*}
\end{proof}

\begin{remark}
We note that the lower bound on $\gamma$ in the hypothesis of Corollary \ref{cor:upper} can be written as  
$1-b(w+\frac{1}{w})- b(w+\frac{1}{w}-\tau_n) =\frac{1}{w^2} + O\left(\frac{1}{n}\right) . $
\end{remark}

\begin{remark} Here we consider the hypothesis $ 1 - (1-\gamma)\left( \frac{b\left(w+\frac{1}{w}\right)+ b\left(w+\frac{1}{w}-\tau_n\right)}{b\left(w+\frac{1}{w}\right)b\left(w+\frac{1}{w}-\tau_n\right)}\right) +\frac{(1-\gamma)^2}{2b\left(w+\frac{1}{w}\right)b\left(w+\frac{1}{w}-\tau_n\right)} \ge 0$ in Corollary \ref{cor:upper}. Considering the left side as a quadratic in $\gamma$ and using our estimates on  
$b\left(w+\frac{1}{w}\right)$ and $b\left(w+\frac{1}{w}-\tau_n\right), $ we find that  the larger of the two roots of the quadratic is given by $\frac{w^2 +\sqrt{2}-1}{\sqrt{2} w^2}+O\left(\frac{1}{n}\right).$ Thus, for sufficiently large $n,$ the condition $\gamma > \frac{w^2 +\sqrt{2}-1}{\sqrt{2} w^2}$ is sufficient to ensure that the desired inequality in the hypothesis holds. 
\end{remark}

\begin{remark}\label{zbound} 
Observe that the upper bound on $|z|$ in Corollary \ref{cor:upper} can be written as $$\cos^{-1}\left(1 - (1-\gamma)\frac{4w^2}{w^2-1} +\frac{2w^4(1-\gamma)^2}{(w^2-1)^2}\right) + O\left(\frac{1}{n}\right).$$ In particular, if $\gamma$ is close to $1,$ then necessarily $\alpha $ is close to an odd multiple of $\pi. $
\end{remark}

The following is immediate from Corollary \ref{cor:upper}. 

\begin{theorem}
Suppose that $w$ is a value that yields PGST from one
end vertex to the other. Let $t_j$  be a sequence of readout times
such that $p(t_j) \rightarrow 1$ as $j \rightarrow \infty$.  Then there are sequences $k_j \in \naturals$ and $z_j \in [-\pi, \pi]$ such that 
$ t_j = \frac{ (2k_j+1)\pi+z_j}{\lambda_1-\lambda_2}, j \in \naturals$ and 
$z_j \rightarrow 0$ as $j \rightarrow \infty.$
\end{theorem}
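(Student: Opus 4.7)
The plan is to apply Corollary \ref{cor:upper} to each $t_j$ (for $j$ sufficiently large), extract from it an upper bound on $|z_j|$, and show that this bound collapses to $0$ along the sequence. For each $j$, I would first set $\alpha_j = t_j(\lambda_1-\lambda_2)$ and, using the elementary fact that every positive real lies within $\pi$ of some odd multiple of $\pi$, choose $k_j \in \naturals$ (allowing $k_j = 0$) and $z_j \in [-\pi,\pi]$ so that $\alpha_j = (2k_j+1)\pi + z_j$. This produces the required representation $t_j = \frac{(2k_j+1)\pi + z_j}{\lambda_1-\lambda_2}$, and the only remaining task is to establish that $z_j \rightarrow 0$.

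To this end, I would introduce $\gamma_j := \sqrt{p(t_j)}$, so that $\gamma_j \rightarrow 1$ by hypothesis, and then verify that Corollary \ref{cor:upper} is applicable for all sufficiently large $j$. The first condition, $\gamma_j \ge 1 - b(w+\tfrac{1}{w}) - b(w+\tfrac{1}{w}-\tau_n)$, is automatic once $j$ is large, since the right side is strictly less than $1$. For the second (the nonnegativity of the argument of the $\cos^{-1}$), note that the expression
\[
E_j := 1 - (1-\gamma_j)\left( \frac{b(w+\tfrac{1}{w}) + b(w+\tfrac{1}{w}-\tau_n)}{b(w+\tfrac{1}{w})\, b(w+\tfrac{1}{w}-\tau_n)} \right) + \frac{(1-\gamma_j)^2}{2\, b(w+\tfrac{1}{w})\, b(w+\tfrac{1}{w}-\tau_n)}
\]
tends to $1$ as $\gamma_j \rightarrow 1$, and so is nonnegative for large $j$.

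With both hypotheses in force, Corollary \ref{cor:upper} yields $|z_j| \le \cos^{-1}(E_j)$ for all such $j$. Since $E_j \rightarrow 1$ and $\cos^{-1}$ is continuous at $1$ with $\cos^{-1}(1)=0$, this bound forces $z_j \rightarrow 0$, completing the argument. The substantive mathematical content has already been absorbed into the derivation of Corollary \ref{cor:upper}, so there is no genuine obstacle here; the only point that requires any attention at all is simply checking the two hypotheses of that corollary along the tail of the sequence, which is why the authors rightly label this result ``immediate.''
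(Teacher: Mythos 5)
Your proposal is correct and follows exactly the route the paper intends: the paper's entire proof is the sentence ``The following is immediate from Corollary \ref{cor:upper},'' and your write-up simply makes explicit the decomposition $\alpha_j=(2k_j+1)\pi+z_j$, the verification of the two hypotheses of that corollary along the tail of the sequence, and the continuity of $\cos^{-1}$ at $1$. The only (inherited) caveat is that the standing hypothesis $w+\frac{1}{w}-\tau_n>2$ of Theorem \ref{thm:upperbdfid} is implicitly assumed, which the paper's own statement also leaves unsaid.
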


\section{Time Sensitivity}\label{sec:time}

In this section, we address the sensitivity of the fidelity of state transfer between end vertices, with respect to the readout time. Here is this section's main result. 

\begin{theorem}
Let $A(w)$ be the adjacency matrix of the path with loops of weight $w$ on the end vertices.  The sensitivity of the fidelity of transfer between the end vertices with respect to readout time is bounded by
\begin{align*}
    \left| \frac{dp}{dt} \right| &\le 2 v_{11}^2 v_{12}^2 (\lambda_1 - \lambda_2) + 2 v_{11}^2 (\lambda_1 + 2) \sum_{\ell = 3}^n v_{1 \ell}^2 + 2 v_{12}^2 (\lambda_2 + 2) \sum_{\ell = 3}^n v_{1 \ell}^2 + 4 \left( \sum_{\ell = 3}^n v_{1 \ell}^2 \right)^2 
    \\ &= \frac{2w^4 + 4w^3 - 2}{w^5} + O \left( \frac{1}{n} \right).
\end{align*}
\end{theorem}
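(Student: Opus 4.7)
The plan is to start from the fidelity formula $p(t) = \left| \sum_{j=1}^n v_{1j} v_{nj} e^{it\lambda_j} \right|^2$. By Lemma~\ref{lem:interlace}, $v_{nj} = (-1)^{j-1} v_{1j}$, so $p(t)$ is a real double sum
$$p(t) = \sum_{j,k=1}^n (-1)^{j+k} v_{1j}^2 v_{1k}^2 \cos(t(\lambda_j - \lambda_k)).$$
Differentiating and pairing the summands for $(j,k)$ and $(k,j)$ (they are equal, since both $\lambda_j-\lambda_k$ and $\sin(t(\lambda_j-\lambda_k))$ change sign under the swap), we obtain
$$\frac{dp}{dt} = -2\sum_{j<k} (-1)^{j+k} v_{1j}^2 v_{1k}^2 (\lambda_j - \lambda_k) \sin(t(\lambda_j - \lambda_k)).$$
The triangle inequality and $|\sin|\le 1$ then give $|dp/dt| \le 2\sum_{j<k} v_{1j}^2 v_{1k}^2 |\lambda_j - \lambda_k|$.

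Next I would partition the index pairs $\{(j,k):j<k\}$ into four groups: the single pair $(1,2)$; pairs $(1,k)$ with $k\ge 3$; pairs $(2,k)$ with $k\ge 3$; and pairs $(j,k)$ with $3\le j<k$. By \eqref{eqn:eval}, $\lambda_k = 2\cos\theta_k \in [-2,2]$ for $k\ge 3$, so $|\lambda_1-\lambda_k| \le \lambda_1 + 2$ and $|\lambda_2 - \lambda_k|\le \lambda_2 + 2$ in the second and third groups, yielding $2v_{11}^2(\lambda_1+2)\sum_{\ell\ge 3}v_{1\ell}^2$ and $2v_{12}^2(\lambda_2+2)\sum_{\ell\ge 3}v_{1\ell}^2$ respectively. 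In the fourth group $|\lambda_j-\lambda_k|\le 4$, and using the elementary bound
$$2\sum_{3\le j<k}v_{1j}^2 v_{1k}^2 \;\le\; \left(\sum_{\ell=3}^n v_{1\ell}^2\right)^2,$$
this group contributes at most $4(\sum_{\ell=3}^n v_{1\ell}^2)^2$. Adding these four pieces produces the claimed first inequality.

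For the asymptotic, I would exploit the orthogonality of $V$ to write $\sum_{\ell=3}^n v_{1\ell}^2 = 1 - v_{11}^2 - v_{12}^2$. Theorem~\ref{thm:eveclim} (with the quantitative rate in Corollary~\ref{cor:vbounds}) gives $v_{11}^2,\,v_{12}^2 = \tfrac{w^2-1}{2w^2}+O(1/n)$, so $\sum_{\ell=3}^n v_{1\ell}^2 = \tfrac{1}{w^2}+O(1/n)$. Theorem~\ref{thm:bds}(c) gives $\lambda_1,\lambda_2 = w+\tfrac{1}{w}+O(w^{2-n})$ and hence $\lambda_j+2 = (w+1)^2/w + O(w^{2-n})$, while Theorem~\ref{thm:even} gives $\lambda_1 - \lambda_2 = \Theta(w^{2-n})$ so that the first term $2v_{11}^2 v_{12}^2(\lambda_1-\lambda_2)$ is absorbed into $O(1/n)$. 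Substituting into the three remaining terms yields
$$\frac{2(w-1)(w+1)^3}{w^5} + \frac{4}{w^4} + O\!\left(\tfrac{1}{n}\right),$$
and expanding $(w-1)(w+1)^3 = w^4+2w^3-2w-1$ reduces this to $(2w^4+4w^3-2)/w^5 + O(1/n)$.

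The argument is conceptually routine; the only real care required is bookkeeping the $O(1/n)$ errors in the asymptotic step, where one must check that multiplying three factors each of the form ``leading term $+O(1/n)$'' does not inflate the error. This is controlled precisely by the rates supplied by Corollary~\ref{cor:vbommel} --- sorry, by Corollary~\ref{cor:vbounds} --- together with the exponential smallness of $\lambda_1-\lambda_2$ from Theorem~\ref{thm:even}, which is strong enough that every remainder collapses to $O(1/n)$.
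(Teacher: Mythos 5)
Your proposal is correct and follows essentially the same route as the paper: the same symmetrized double-sum expression for $dp/dt$, the same four-way split of index pairs, the same bounds $|\sin|\le 1$ and $|\lambda_\ell|\le 2$ for $\ell\ge 3$, and the same asymptotics from Theorems~\ref{thm:eveclim}, \ref{thm:even}, and \ref{thm:bds}. The only (cosmetic) difference is that you differentiate the real cosine form of $p(t)$ directly, whereas the paper obtains the derivative by expanding $U(t+h)$ to first order in $h$.
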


\begin{proof}
Let $U(t) = V e^{i t \Lambda} V^T$.  Then we have
\[
U(t + h) = V e^{i (t + h) \Lambda} V^T = U(t) + i h V \Lambda e^{i t \Lambda} V^T + O(h^2).
\]
Then considering the $(1, n)$ entry, we have
\begin{align*}
    u_{1 n}(t + h) &= e_1^T U(t + h) e_n \\
    &= e_1^T V e^{i t \Lambda} V^T e_n + i h e_1^T V \Lambda e^{i t \Lambda} V^T e_n + O(h^2) \\
    &= e_1^T V \left( \sum_{j = 1}^n e^{i t \lambda_j} e_j e_j^T \right) V^T e_n + i h e_1^T V \left( \sum_{j = 1}^n \lambda_j e^{i t \lambda_j} e_j e_j^T \right) V^T e_n +O(h^2) \\
    &= \sum_{j = 1}^n v_{1j}v_{nj} e^{i t \lambda_j}  + i h \sum_{j = 1}^n v_{1j}v_{nj} \lambda_j e^{i t \lambda_j}  + O(h^2) \\
    &= \sum_{j = 1}^n v_{1j}^2 (-1)^{j - 1} e^{i t \lambda_j} + i h \sum_{j = 1}^n v_{1j}^2 (-1)^{j - 1} \lambda_j e^{i t \lambda_j} +O(h^2) \\
    &= \sum_{j = 1}^n v_{1j}^2 (-1)^{j - 1} (\cos (t \lambda_j) - h \lambda_j \sin (t \lambda_j)) + i \sum_{j = 1}^n v_{1j}^2 (-1)^{j - 1} (\sin (t \lambda_j) + h \lambda_j \cos (t \lambda_j)) +O(h^2). 
\end{align*}
Therefore, for the fidelity we obtain
\begin{align*}
    |u_{1 n}(t + h)|^2 &= \left( \sum_{j = 1}^n v_{1j}^2 (-1)^{j - 1} \cos (t \lambda_j) \right)^2 + \left( \sum_{j = 1}^n v_{1j}^2 (-1)^{j - 1} \sin (t \lambda_j) \right)^2 \\ &\quad - 2 h \sum_{j = 1}^n v_{1j}^2 (-1)^{j - 1} \cos (t \lambda_j) \sum_{\ell = 1}^n v_{1\ell}^2 (-1)^{\ell - 1} \lambda_\ell \sin (t \lambda_\ell) \\ &\quad + 2 h \sum_{j = 1}^n v_{1j}^2 (-1)^{j - 1} \sin (t \lambda_j) \sum_{\ell = 1}^n v_{1\ell}^2 (-1)^{\ell - 1} \lambda_\ell \cos (t \lambda_\ell) + O(h^2).
\end{align*}
We then calculate the time derivative for the fidelity $p$ by
\begin{align*}
    \frac{dp}{dt} &= 2 \sum_{j = 1}^n \sum_{\ell = 1}^n v_{1j}^2 v_{1\ell}^2 (-1)^{j + \ell} \sin (t \lambda_j) \lambda_\ell \cos(t \lambda_\ell) - 2 \sum_{j = 1}^n \sum_{\ell = 1}^n v_{1j}^2 v_{1\ell}^2 (-1)^{j + \ell} \cos (t \lambda_j) \lambda_\ell \sin (t \lambda_\ell) \\
    &= 2 \sum_{j = 1}^n \sum_{\ell = 1}^n v_{1j}^2 v_{1\ell}^2 (-1)^{j + \ell} \lambda_\ell \sin (t (\lambda_j - \lambda_\ell)) \\
    &= 2 v_{11}^2 \left( v_{12}^2 (-1) \lambda_2 \sin (t (\lambda_1 - \lambda_2)) + \sum_{\ell = 3}^n v_{1\ell}^2 (-1)^{\ell + 1} \lambda_\ell \sin (t (\lambda_1 - \lambda_\ell)) \right) \\
    &\quad + 2 v_{12}^2 \left( v_{11}^2 (-1) \lambda_1 \sin(t (\lambda_2 - \lambda_1)) + \sum_{\ell = 3}^n v_{2\ell}^2 (-1)^\ell \lambda_\ell \sin (t (\lambda_2 - \lambda_\ell)) \right) \\
    &\quad + 2 \sum_{j = 3}^n v_{1j}^2 v_{11}^2 (-1)^{j + 1} \lambda_1 \sin (t ( \lambda_j - \lambda_1))
     + 2 \sum_{j = 3}^n v_{1j}^2 v_{12}^2 (-1)^j \lambda_2 \sin (t ( \lambda_j - \lambda_2)) \\
    &\quad + 2 \sum_{j = 3}^n \sum_{\ell = 3}^n v_{1j}^2 v_{1\ell}^2 (-1)^{j + \ell} \lambda_\ell \sin (t (\lambda_j - \lambda_\ell)) \\
    &= 2 v_{11}^2 v_{12}^2 (\lambda_1 - \lambda_2) \sin (t (\lambda_1 - \lambda_2))
     + 2 v_{11}^2 \sum_{\ell = 3}^n v_{1\ell}^2 (-1)^{\ell} (\lambda_1 - \lambda_\ell) \sin (t (\lambda_1 - \lambda_\ell)) \\
    &\quad + 2 v_{12}^2 \sum_{\ell = 3}^n v_{1 \ell}^2 (-1)^{\ell + 1} (\lambda_2 - \lambda_\ell) \sin (t (\lambda_2 - \lambda_\ell))
     + 2 \sum_{j = 3}^n \sum_{\ell = 3}^n v_{1j}^2 v_{1 \ell}^2 (-1)^{j + \ell} \lambda_\ell \sin (t (\lambda_j - \lambda_\ell)).
\end{align*}
Therefore, applying the bounds $|\sin x| \le 1$ and $|\lambda_\ell| \le 2$ for $3 \le \ell \le n$, we obtain
\begin{align*}
    \left| \frac{dp}{dt} \right| &\le 2 v_{11}^2 v_{12}^2 (\lambda_1 - \lambda_2) + 2 v_{11}^2 (\lambda_1 + 2) \sum_{\ell = 3}^n v_{1 \ell }^2 + 2 v_{12}^2 (\lambda_2 + 2) \sum_{\ell = 3}^n v_{1 \ell }^2 + 4 \left( \sum_{\ell = 3}^n v_{1 \ell }^2 \right)^2.
\end{align*}
Now, $v_{11}^2, v_{12}^2 = \frac{w^2 - 1}{2w^2} + O \left( \frac{1}{n} \right)$ by Theorem~\ref{thm:eveclim}, $\lambda_1 - \lambda_2 = O \left( \frac{1}{n} \right)$ by Theorem~\ref{thm:even}, and $\lambda_1, \lambda_2 = w + \frac{1}{w} +O \left( \frac{1}{n} \right)$ by Theorem~\ref{thm:bds}, so the bound is then  
\[
    4 \left( \frac{w^2 - 1}{2w^2} \right) \left( \frac{(w + 1)^2}{w} \right) \frac{1}{w^2} + \frac{4}{w^4} + O \left( \frac{1}{n} \right) = \frac{2w^4 + 4w^3 - 2}{w^5} + O \left( \frac{1}{n} \right). \qedhere
\]
\end{proof}

Figure \ref{fig:time_sens} plots the (numerically computed)  time sensitivity and  fidelity as functions of time for the values $n=5, 10$ and $w=2, 4$. We note that for $w=2$ we have 
$\frac{2w^4 + 4w^3 - 2}{w^5} = 1.9375,$ while for $w=4$, we have $\frac{2w^4 + 4w^3 - 2}{w^5} =0.748046875.$ Inspecting the maximum and minimum values of the sensitivity in the plots, it appears that the rough bound of 
$\frac{2w^4 + 4w^3 - 2}{w^5} + O \left( \frac{1}{n} \right)$ on $\left| \frac{dp}{dt} \right| $ is accurate up to  a constant factor. 

\begin{figure}[htbp]
\centering
	\includegraphics[scale=.55]{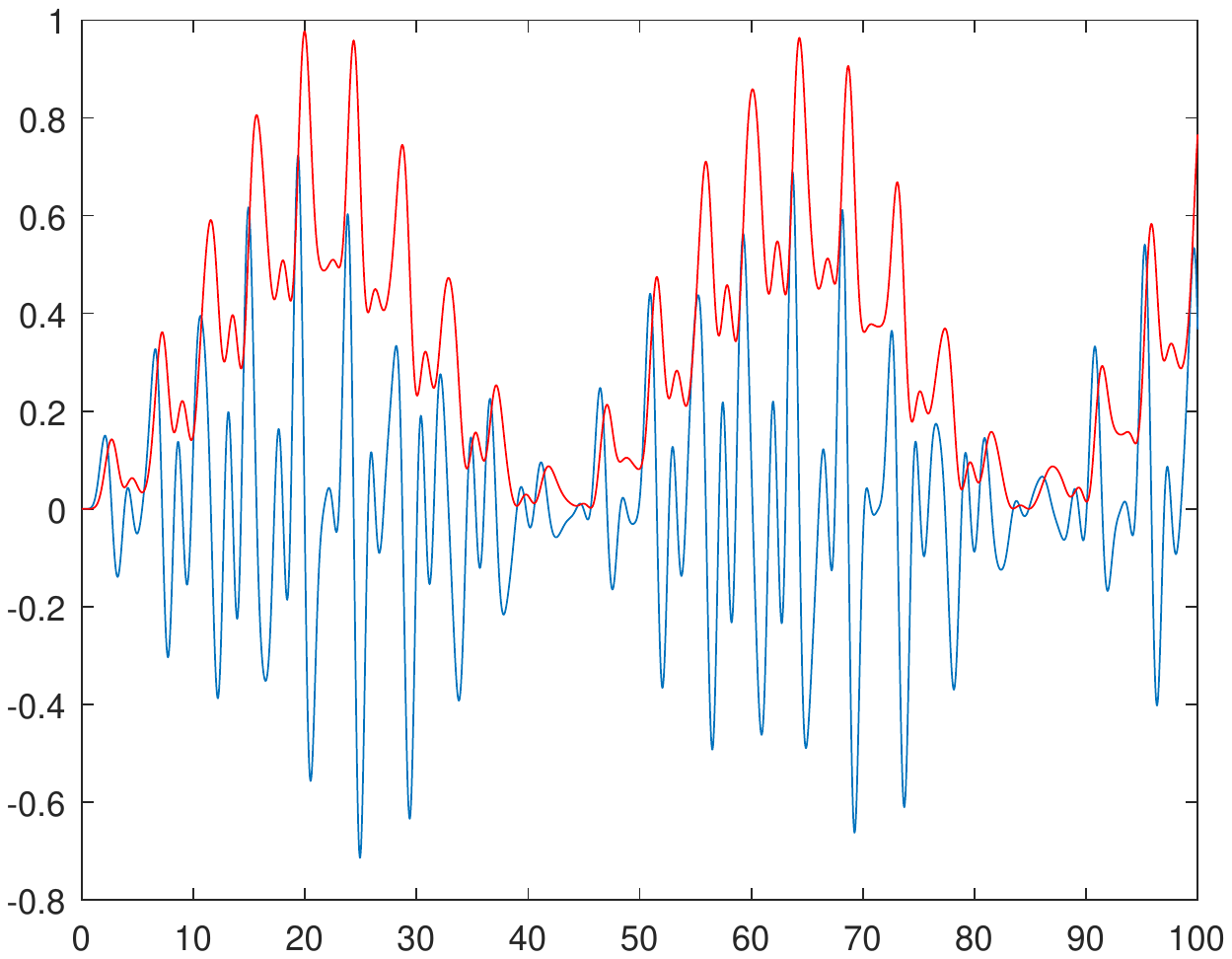}
	\includegraphics[scale=.55]{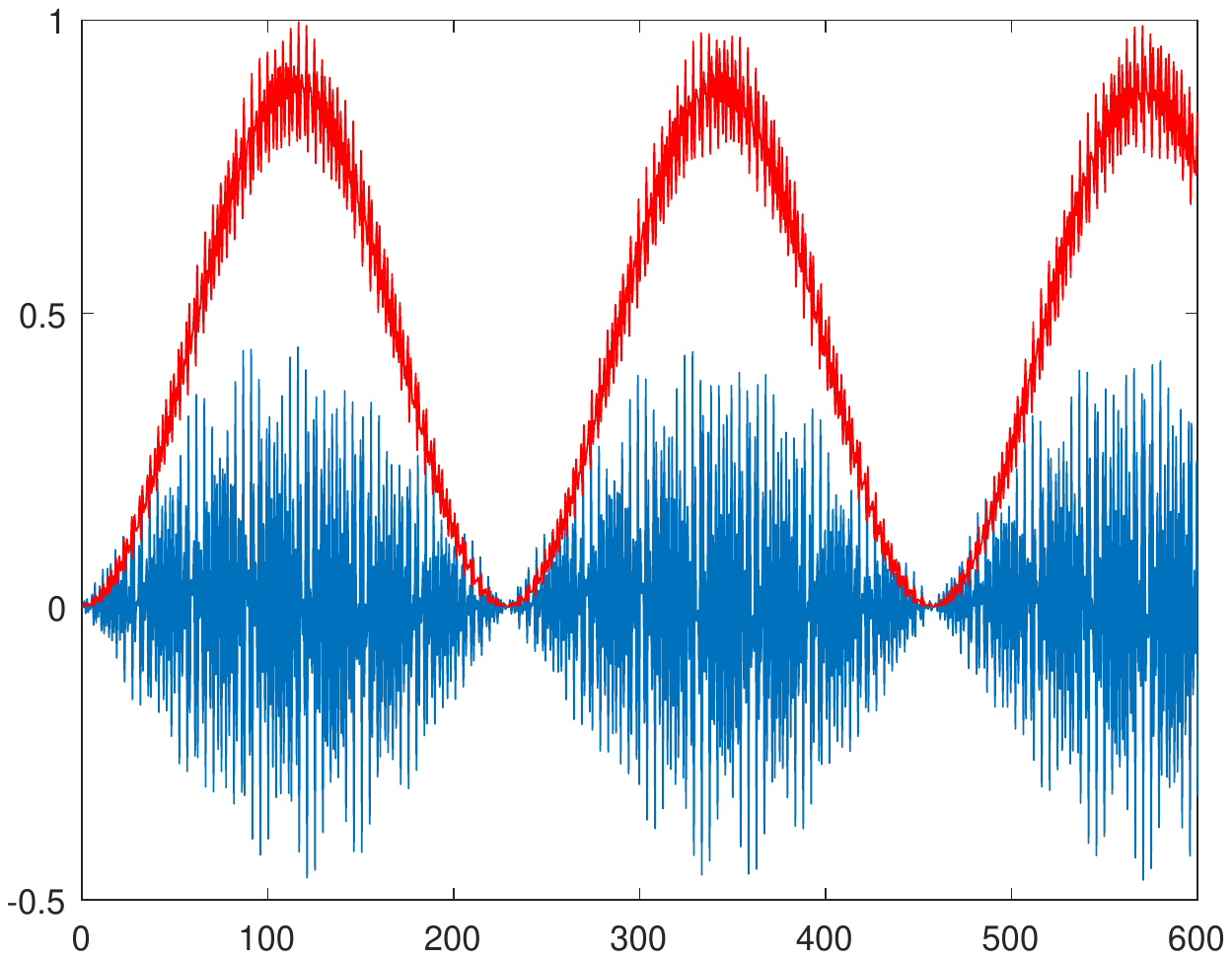}
	\includegraphics[scale=.55]{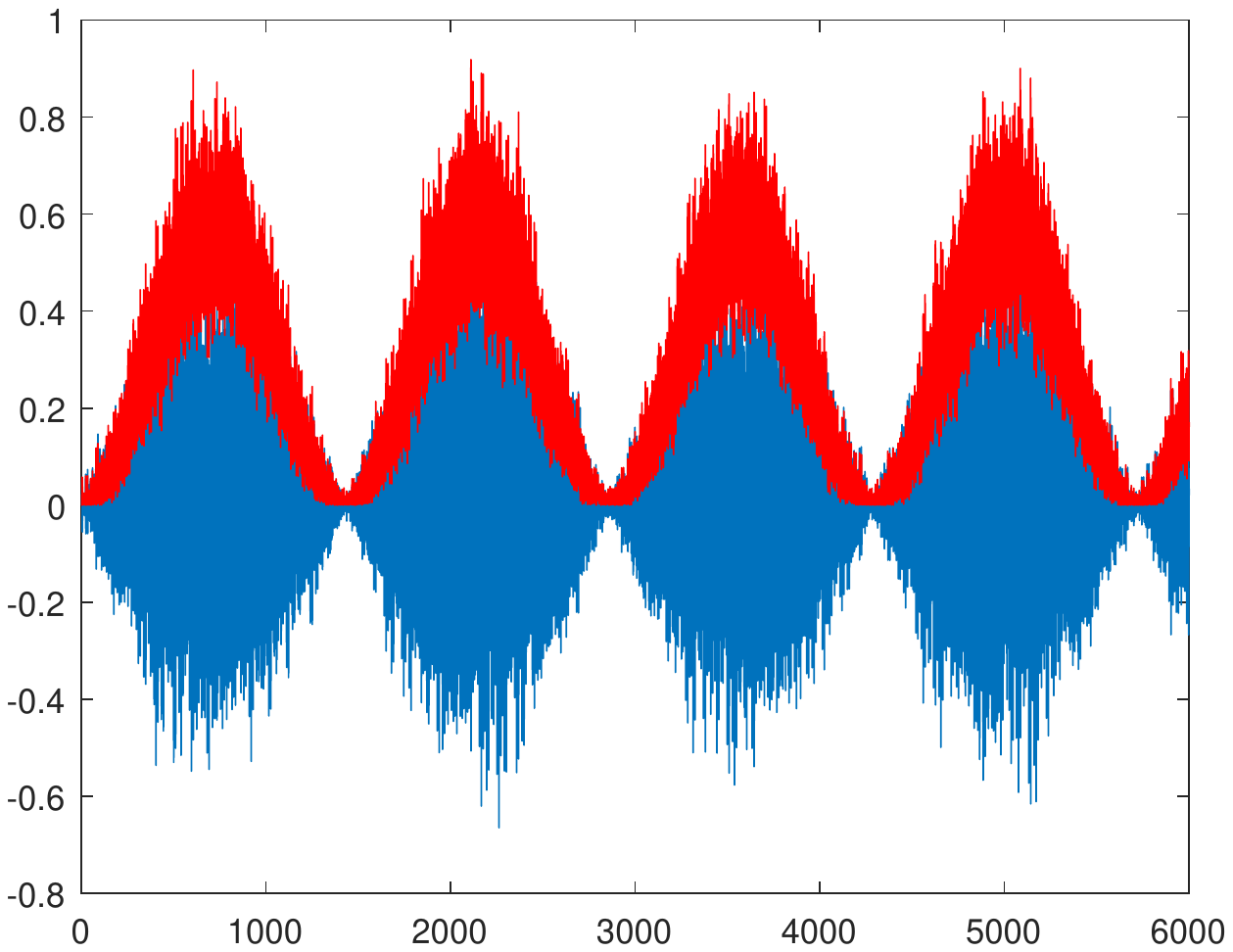}
	\includegraphics[scale=.55]{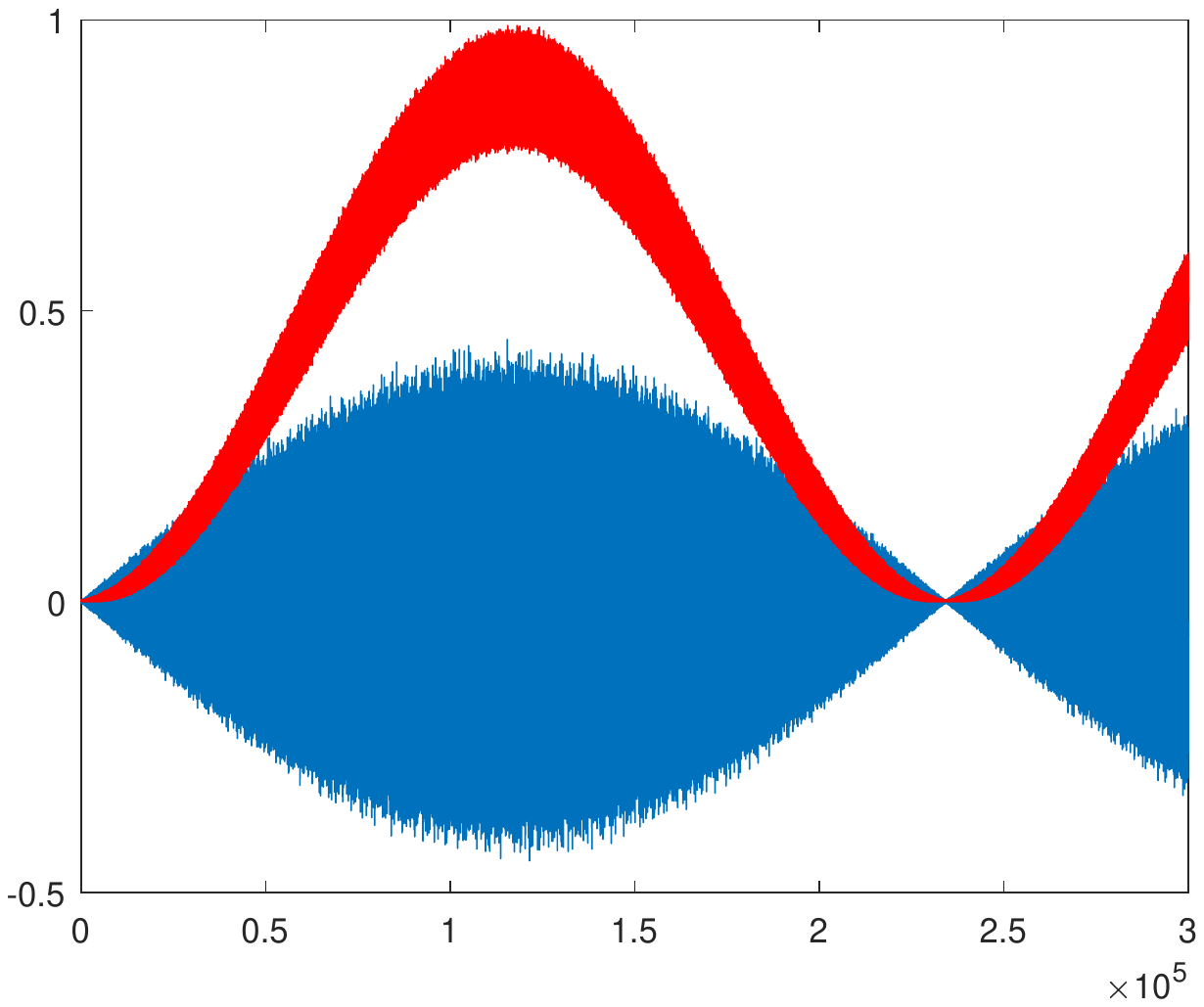}
	\caption{Graphs of time sensitivity (blue) and fidelity (red) against time; top row is $n=5, w=2, 4$, bottom row is $n=10, w=2,4$}\label{fig:time_sens}
\end{figure}

\section{Weight Sensitivity}\label{sec:weight}

Parallel to the result in Section \ref{sec:time},  we now estimate the sensitivity of the fidelity of state transfer between end vertices, with respect to the weight $w$. Here is our main result on this topic. 

\begin{theorem}
Let $A(w)$ be the adjacency matrix of the path with loops of weight $w$ on the end vertices.  The sensitivity of the fidelity of transfer between the end vertices with respect to loop weight is  bounded as follows: 
\[
\left| \frac{dp}{dw} \right| \le \frac{8 n^3}{3 \pi^2} +O(n^2) + 4t \left(  \frac{w^4-2w^2+3}{2w^6} +O \left( \frac{1}{n} \right) \right).
\]
\end{theorem}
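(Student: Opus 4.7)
The plan is to differentiate
\[
p(t)=\left|\sum_{j=1}^{n}(-1)^{j-1}v_{1j}^{2}e^{it\lambda_{j}}\right|^{2}
\]
(which uses $v_{nj}=(-1)^{j-1}v_{1j}$ from Lemma~\ref{lem:interlace}) with respect to $w$ and control each resulting piece using the spectral data collected in Sections~\ref{sec:l1l2} and~\ref{sec:other}. Write $f(t)=\sum_{j}(-1)^{j-1}v_{1j}^{2}e^{it\lambda_{j}}$, so that $p=|f|^{2}$ and $|dp/dw|\le 2|df/dw|$ because $|f|\le 1$. The product rule gives
\[
\frac{df}{dw}=\sum_{j=1}^{n}(-1)^{j-1}\frac{dv_{1j}^{2}}{dw}e^{it\lambda_{j}}+it\sum_{j=1}^{n}(-1)^{j-1}v_{1j}^{2}\frac{d\lambda_{j}}{dw}e^{it\lambda_{j}}.
\]
Since $dA/dw=e_{1}e_{1}^{T}+e_{n}e_{n}^{T}$, the Hellmann--Feynman theorem yields $d\lambda_{j}/dw=v_{1j}^{2}+v_{nj}^{2}=2v_{1j}^{2}$, while first-order perturbation theory combined with $v_{nk}=(-1)^{k-1}v_{1k}$ gives
\[
\frac{dv_{1j}^{2}}{dw}=\sum_{\substack{k\ne j\\ k\equiv j\,(\mathrm{mod}\,2)}}\frac{4v_{1j}^{2}v_{1k}^{2}}{\lambda_{j}-\lambda_{k}},
\]
the cross terms with $j+k$ odd cancelling by that parity symmetry. (Since the eigenvalues of $A(w)$ are distinct, by the argument in the proof of Lemma~\ref{lem:interlace}, these denominators are nonzero.) Taking moduli produces the skeleton inequality
\[
\left|\frac{dp}{dw}\right|\le 8\sum_{\substack{j\ne k\\ j\equiv k\,(\mathrm{mod}\,2)}}\frac{v_{1j}^{2}v_{1k}^{2}}{|\lambda_{j}-\lambda_{k}|}+4t\sum_{j=1}^{n}v_{1j}^{4},
\]
which separates the eigenvector-sensitivity contribution (first term, independent of $t$) from the eigenvalue-sensitivity contribution (second term, linear in $t$).

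I would handle the time-proportional coefficient first, as it is the more routine of the two pieces. Split the sum as $j\in\{1,2\}$ versus $j\ge 3$: for the former, Theorem~\ref{thm:eveclim} together with the quantitative estimates in Corollary~\ref{cor:vbounds} and Remark~\ref{rmk:bfunction} pin down $v_{11}^{4}+v_{12}^{4}$ up to $O(1/n)$, while Remark~\ref{rmk:v-entry} bounds the tail by $\sum_{j\ge 3}v_{1j}^{4}\le (n-2)(2/n+O(1/n^{2}))^{2}=O(1/n)$. Careful bookkeeping of the $O(1/n)$ corrections produces the advertised expression $\frac{w^{4}-2w^{2}+3}{2w^{6}}+O(1/n)$.

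The main obstacle is the $\frac{8n^{3}}{3\pi^{2}}+O(n^{2})$ bound on the eigenvector-perturbation sum. Pairs with $\min(j,k)\in\{1,2\}$ are harmless: Theorem~\ref{thm:bds} gives $|\lambda_{1}-\lambda_{k}|$ and $|\lambda_{2}-\lambda_{k}|$ bounded below by a positive constant for $k\ge 3$, so after using $\sum_{k}v_{1k}^{2}=1$ their aggregate contribution is $O(1)$. For the pairs with $j,k\ge 3$ of common parity I would combine the localisation $\theta_{j}\in[(j-2)\pi/(n-1),j\pi/(n+1)]$ from Theorem~\ref{thm:other-evals} with the product-to-sum identity
\[
|\lambda_{j}-\lambda_{k}|=4\sin\!\left(\tfrac{\theta_{j}+\theta_{k}}{2}\right)\sin\!\left(\tfrac{|\theta_{j}-\theta_{k}|}{2}\right),
\]
the inequality $\sin x\ge 2x/\pi$ on $[0,\pi/2]$, and the uniform estimate $v_{1j}^{2}\le 2/n+O(1/n^{2})$ from Remark~\ref{rmk:v-entry}. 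The worst-case denominators behave like $\Theta(1/n^{3})$ when both angles cluster near an endpoint of $[0,\pi]$ with $|j-k|=2$; tracking the resulting discrete double sum as a Riemann-type approximation to a weighted double integral on $[0,\pi]^{2}$ and isolating the singular endpoint contribution produces the leading $\frac{8n^{3}}{3\pi^{2}}$ term, with the non-singular bulk contribution and the Riemann-sum discretisation error folded into the $O(n^{2})$ remainder.
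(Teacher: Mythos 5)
Your setup and the first ($t$-independent) term are essentially on the paper's track: the perturbation formulas $\frac{d\lambda_k}{dw}=2v_{1k}^2$ and $\frac{dv_{1k}}{dw}=\sum_{\ell\neq k}\frac{v_{1\ell}^2v_{1k}(1+(-1)^{k+\ell})}{\lambda_k-\lambda_\ell}$ match those used in the paper (quoted there from Kirkland's sensitivity analysis), the $O(1)$ disposal of pairs involving $\lambda_1,\lambda_2$ is correct, and for the pairs with $j,k\ge 3$ the paper does exactly what you sketch, only more crudely: it uses the uniform gap bound $\lambda_k-\lambda_{k+2}\ge \frac{24\pi^2}{n^3}+O(n^{-4})$ (from the localisation of the $\theta_j$ and the product-to-sum identity) together with $v_{1j}^2\le \frac 2n+O(n^{-2})$ and simply multiplies by the number of terms, giving $\frac{64}{n}\cdot n\cdot\frac{n^3}{24\pi^2}=\frac{8n^3}{3\pi^2}$; no Riemann-sum analysis of a double integral is needed, and your sketch asserts rather than derives the constant.

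The genuine gap is in the time-proportional term. Once you pass to $\left|\frac{dp}{dw}\right|\le 2\left|\frac{df}{dw}\right|$ and take moduli term by term, the eigenvalue-derivative contribution becomes $4t\sum_j v_{1j}^4$, whose leading coefficient is $v_{11}^4+v_{12}^4+O(1/n)=\frac{(w^2-1)^2}{2w^4}+O(1/n)$. No bookkeeping of $O(1/n)$ corrections can turn $\frac{(w^2-1)^2}{2w^4}$ into the advertised $\frac{w^4-2w^2+3}{2w^6}=\frac{(w^2-1)^2+2}{2w^6}$; indeed for $w^2>1+2^{1/3}$ (roughly $w>1.503$) your coefficient is strictly larger, so your inequality does not imply the stated one. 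The ingredient you have discarded is the antisymmetry of the factor $\sin\bigl(t(\lambda_j-\lambda_k)\bigr)$ in the exact expression
\[
4t\sum_{k=1}^n\sum_{j=1}^n(-1)^{k+j}v_{1j}^2v_{1k}^4\sin\bigl(t(\lambda_j-\lambda_k)\bigr),
\]
which the paper keeps by computing $\frac{dp}{dw}=2\,\mathrm{Re}\bigl(\bar f\,\frac{df}{dw}\bigr)$ explicitly rather than bounding $|\bar f|\le 1$ early. Pairing $(j,k)$ with $(k,j)$ makes the dominant block collapse to $v_{11}^2v_{12}^2\bigl(v_{11}^2-v_{12}^2\bigr)=O(1/n)$ (using $v_{11}^2-v_{12}^2=O(1/n)$ from Theorem~\ref{thm:eveclim}), leaving only the cross terms between $\{1,2\}$ and $\{3,\dots,n\}$, which contribute $2\bigl(\frac{w^2-1}{2w^2}\bigr)^2\frac{1}{w^2}$, and the $\{3,\dots,n\}^2$ block, bounded by $\frac{1}{w^6}$; these sum to $\frac{w^4-2w^2+3}{2w^6}$. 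To repair your argument you must postpone the triangle inequality until after this $j\leftrightarrow k$ cancellation has been exploited.
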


\begin{proof}
The fidelity $p$ is given by
\[
p(t) = \left| \sum_{j = 1}^n (-1)^{j - 1} v_{1j}^2 e^{it \lambda_j} \right|^2 = \left( \sum_{j = 1}^n (-1)^{j - 1} v_{1j}^2 \cos (t \lambda_j) \right)^2 + \left( \sum_{j = 1}^n (-1)^{j - 1} v_{1j}^2 \sin (t \lambda_j) \right)^2.
\]
We calculate the derivative with respect to the  loop weight for the fidelity and obtain the following expression in terms of the derivatives of the eigenvalues and eigenvectors: 
\begin{equation*}
    \frac{dp}{dw} = 4 \sum_{k = 1}^n \sum_{j = 1}^n (-1)^{k + j} v_{1j}^2 v_{1k} \cos(t (\lambda_j - \lambda_k)) \frac{dv_{1k}}{dw} + 2t \sum_{k = 1}^n \sum_{j = 1}^n (-1)^{k + j} v_{1j}^2 v_{1k}^2 \sin(t (\lambda_j - \lambda_k)) \frac{d \lambda_k}{dw}.
\end{equation*}

Now, as calculated by Kirkland~\cite{K15}, the derivatives of the eigenvalues and eigenvectors are given by
\begin{align*}
    \frac{d \lambda_k}{dw} &= 2 v_{1k}^2, \\
    \frac{d v_{1k}}{dw} &= (\lambda_k I - A)^\dagger_{11} v_{1k} + (\lambda_k I - A)^\dagger_{1n} v_{nk} = \sum_{\ell \neq k} \frac{1}{\lambda_k - \lambda_\ell} v_{1 \ell}^2 v_{1 k} (1 + (-1)^{k + \ell}),
\end{align*}
where $M^\dagger$ denotes the Moore-Penrose inverse of $M$.

Thus, we have
\begin{align}\label{eq:dpdw} \nonumber
    \frac{dp}{dw} &= 4 \sum_{k = 1}^n \sum_{j = 1}^n (-1)^{k + j} v_{1j}^2 v_{1k} \cos (t (\lambda_j - \lambda_k)) \sum_{\ell \neq k} \frac{1}{\lambda_k - \lambda_\ell} v_{1 \ell}^2 v_{1k} (1 + (-1)^{k + \ell}) \\ &\quad + 4t \sum_{k = 1}^n \sum_{j = 1}^n (-1)^{k + j} v_{1j}^2 v_{1k}^4 \sin(t (\lambda_j - \lambda_k)).
\end{align}

We now bound the sensitivity in absolute value.  Considering the first term in \eqref{eq:dpdw}, we note $\sum v_{1j}^2 = 1$ and the cosine function is bounded by 1.  Splitting off the terms corresponding to the first and second eigenvalues and applying the triangle inequality, we obtain

\begin{align*}
    &\quad \left| 4 \sum_{k = 1}^n \sum_{j = 1}^n (-1)^{k + j} v_{1j}^2 v_{1k} \cos (t (\lambda_j - \lambda_k)) \sum_{\ell \neq k} \frac{1}{\lambda_k - \lambda_\ell} v_{1 \ell}^2 v_{1k} (1 + (-1)^{k + \ell}) \right| \\
    &\le 8  \sum_{m \neq 1} \left| \frac{1}{\lambda_1 - \lambda_m} v_{11}^2 v_{1m}^2 (1 + (-1)^{m + 1}) \right| + 8  \sum_{m \neq 2} \left| \frac{1}{\lambda_2 - \lambda_m} v_{12}^2 v_{1m}^2 (1 + (-1)^m) \right| \\ &\quad + 8 \sum_{k = 3}^n \sum_{\ell = k + 2}^n \left|  \frac{1}{\lambda_k - \lambda_\ell} v_{1k}^2 v_{1 \ell}^2 (1 + (-1)^{k + \ell}) \right|. \\
\end{align*}

Recall that $\lambda_1, \lambda_2 = w + \frac{1}{w} +O\left(\frac{1}{n}\right)$, $\lambda_m \le 2$ for $m \ge 3$, and $v_{11}^2, v_{12}^2 = \frac{w^2 - 1}{2w^2} +O\left(\frac{1}{n}\right)$; these yield 
\begin{eqnarray*}
&& 8  \sum_{m \neq 1} \left| \frac{1}{\lambda_1 - \lambda_m} v_{11}^2 v_{1m}^2 (1 + (-1)^{m + 1}) \right| + 8  \sum_{m \neq 2} \left| \frac{1}{\lambda_2 - \lambda_m} v_{12}^2 v_{1m}^2 (1 + (-1)^m) \right|\\
&& \le 16 \sum_{m = 3}^n \left( \frac{1}{w + \frac{1}{w} - 2} \left(\frac{w^2 - 1}{2 w^2} \right) +O \left( \frac{1}{n} \right) \right) v_{1 m}^2 \\
&&= O(1).
\end{eqnarray*}
Returning to the term corresponding to the remaining eigenvalues, we have that $\lambda_k - \lambda_\ell \ge \lambda_k - \lambda_{k + 2}$ for $\ell \ge k + 2;$ also, from Remark \ref{rmk:v-entry},  $v_{1j}^2 \le \frac{2}{n}+ O( \frac{1}{n^2}), j=3, \ldots, n.$ Hence we obtain
\begin{align*}
8 \sum_{k = 3}^n \sum_{\ell = k + 2}^n \left| \frac{1}{\lambda_k - \lambda_\ell} v_{1k}^2 v_{1 \ell}^2 (1 + (-1)^{k + \ell}) \right| &\le 8 \sum_{k = 3}^n \frac{n}{\lambda_k - \lambda_{k + 2}} \left( \frac{8}{n^2} +O \left(\frac{1}{n^3} \right) \right) \\ &=\frac{64}{n} \sum_{k = 3}^n \frac{1}{\lambda_k - \lambda_{k + 2}}  +O \left( \frac{1}{n} \right).  
\end{align*}
By Theorem~\ref{thm:other-evals}, we have $2 \cos \left( \frac{k \pi}{n + 1} \right) \le \lambda_k \le 2 \cos \left( \frac{(k - 2) \pi}{n - 1} \right)$, and we obtain
\begin{align*}
    \lambda_k - \lambda_{k + 2} &\ge 2 \cos \left( \frac{k \pi}{n + 1} \right) - 2 \cos \left( \frac{k \pi}{n - 1} \right). 
\end{align*}
Observe that $\cos \left( \frac{k \pi}{n + 1} \right) -  \cos \left( \frac{k \pi}{n - 1} \right) = \cos \left( \frac{kn \pi}{n^2 - 1} -\frac{k \pi}{n^2 - 1} \right) -  \cos \left( \frac{kn\pi}{n^2 - 1} + \frac{k \pi}{n^2 - 1} \right) = 2 \sin \left(\frac{kn\pi}{n^2-1}\right) \sin \left(\frac{k\pi}{n^2-1}\right).$ As $3 \le k \le n-2,$ it follows that
\begin{align*}
&\sin \left(\frac{k\pi}{n^2-1}\right) \ge  \sin \left(\frac{3\pi}{n^2-1}\right) = \frac{3 \pi}{n^2}+O \left( \frac{1}{n^3} \right),\text{ and} \\
&\sin \left(\frac{kn\pi}{n^2-1}\right) \ge \sin \left(\frac{(n-2)n\pi}{n^2-1}\right)= \sin \left(\frac{(2n-1)\pi}{n^2-1}\right) = \frac{2 \pi}{n}+O \left( \frac{1}{n^2} \right).
\end{align*}
It now follows that 
$  \lambda_k - \lambda_{k + 2} \ge \frac{24 \pi^2}{n^3}+O(\frac{1}{n^4}).$ In view of the observations above, we find that 
\begin{align*}
\left| 4 \sum_{k = 1}^n \sum_{j = 1}^n (-1)^{k + j} v_{1j}^2 v_{1k} \cos (t (\lambda_j - \lambda_k)) \sum_{\ell \neq k} \frac{1}{\lambda_k - \lambda_\ell} v_{1 \ell}^2 v_{1k} (1 + (-1)^{k + \ell}) \right| &\le \frac{64}{n} n \left( \frac{n^3}{24 \pi^2} + O(n^2) \right) \\ &= \frac{8 n^3}{3 \pi^2} +O(n^2).
\end{align*}
Therefore, the first term of \eqref{eq:dpdw} is bounded above in absolute value by $\frac{8 n^3}{3 \pi^2} +O(n^2).$

Now, considering the second term  of \eqref{eq:dpdw} for the sensitivity, we note that sine is an odd function, and splitting off the terms corresponding to the first and second eigenvalues, we obtain
\begin{align*}
    &\quad \sum_{k = 1}^n \sum_{j = 1}^n (-1)^{k + j} v_{1j}^2 v_{1k}^4 \sin(t (\lambda_j - \lambda_k)) \\
    &= (-1) \sin(t(\lambda_2 - \lambda_1)) (v_{11}^4 v_{12}^2 - v_{12}^4 v_{11}^2) + \sum_{j = 3}^n (-1)^{j + 1} \sin(t(\lambda_j - \lambda_1)) (v_{11}^4 v_{1j}^2 - v_{11}^2 v_{1j}^4) \\ 
    &\quad + \sum_{j = 3}^n (-1)^j \sin(t(\lambda_j - \lambda_2)) (v_{12}^4 v_{1j}^2 - v_{12}^2 v_{1j}^4) + \sum_{k = 3}^n \sum_{j = 3}^n (-1)^{k + j} v_{1j}^2 v_{1k}^4 \sin (t (\lambda_j - \lambda_k)).
\end{align*}
We have  $v_{11}^2, v_{12}^2 = \frac{w^2 - 1}{2w^2} +O\left(\frac{1}{n}\right)$, $v_{11}^2 - v_{12}^2 =O\left(\frac{1}{n}\right)$, and $\sum_{j = 3}^n v_{1j}^2 = \frac{1}{w^2}+O\left(\frac{1}{n}\right)$.  Applying these estimates, bounding sine by 1, and using the triangle inequality, we obtain 
\[
\left| 4t \sum_{k = 1}^n \sum_{j = 1}^n (-1)^{k + j} v_{1j}^2 v_{1k}^4 \sin(t (\lambda_j - \lambda_k)) \right| \le 4t \left(\left( \frac{w^2 - 1}{2w^2} \right)^2 \left(\frac{2}{w^2} \right) + \frac{1}{w^6} + O \left( \frac{1}{n} \right)\right).
\]
Combining these terms, we obtain the desired inequality.
\end{proof}

From the above, we deduce that as $n \rightarrow \infty,$ the absolute value of the coefficient of $t$ in our expression for $\frac{dp}{dw}$ is bounded above by a function that is asymptotic to 
$\frac{2(w^4-2w^2+3)}{w^6}.$ 

Figure \ref{fig:wt_sens} plots the (numerically computed)  weight sensitivity  as a function of time for the values $n=5, 10$ and $w=2, 4$. Observe that the values of $\left| \frac{dp}{dw}\right|$ at local maxima are increasing in the readout time, as anticipated by the results in this section. For  $w=2, 4$ we have $\frac{2(w^4-2w^2+3)}{w^6} \approx 0.3431, 0.1108,$ respectively.  Inspecting Figure \ref{fig:wt_sens}, we see that the behaviour of the local extrema of $ \frac{dp}{dw}$ is commensurate with those decimal values. 

\begin{figure}[htbp]
\centering
	\includegraphics[scale=.55]{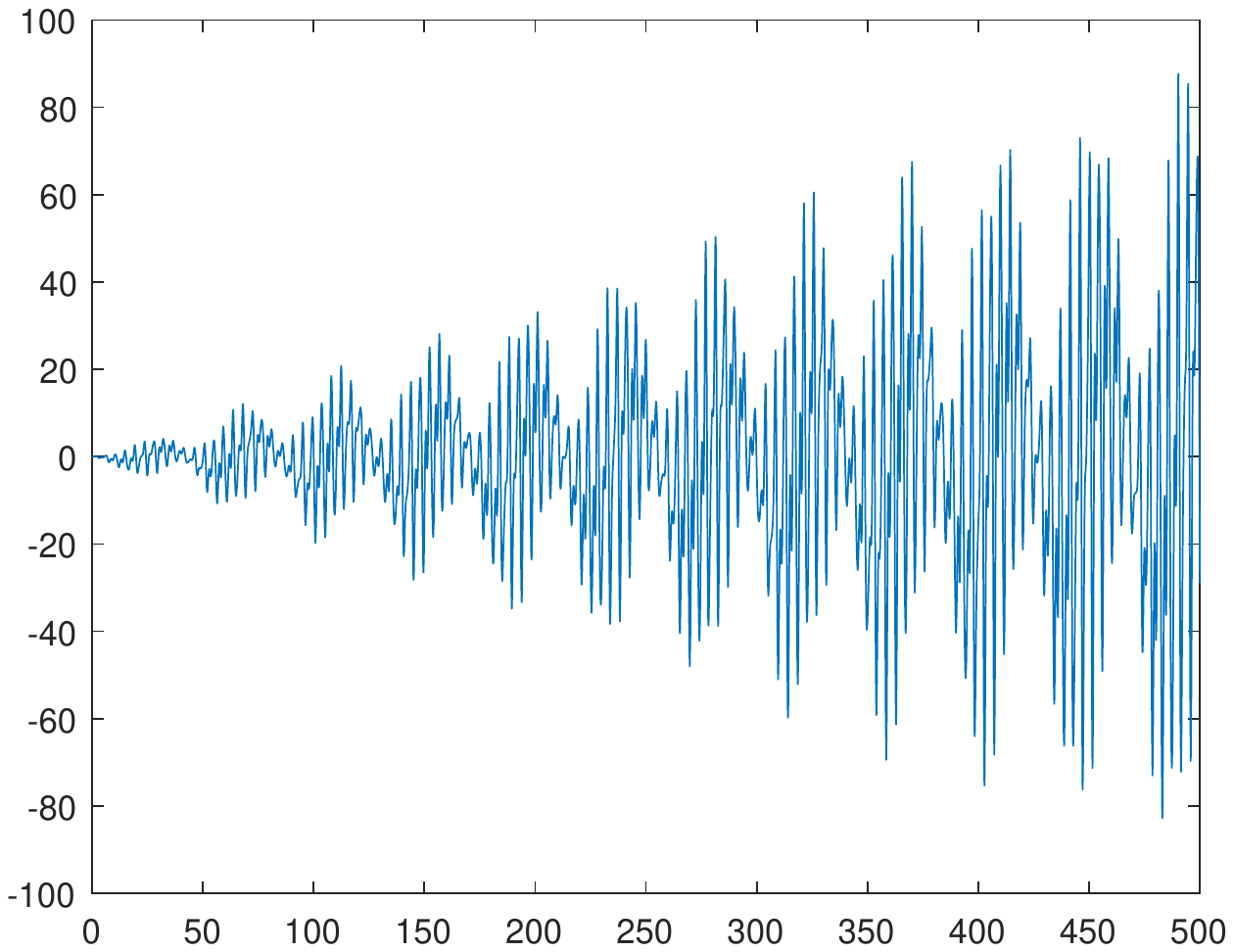}
	\includegraphics[scale=.55]{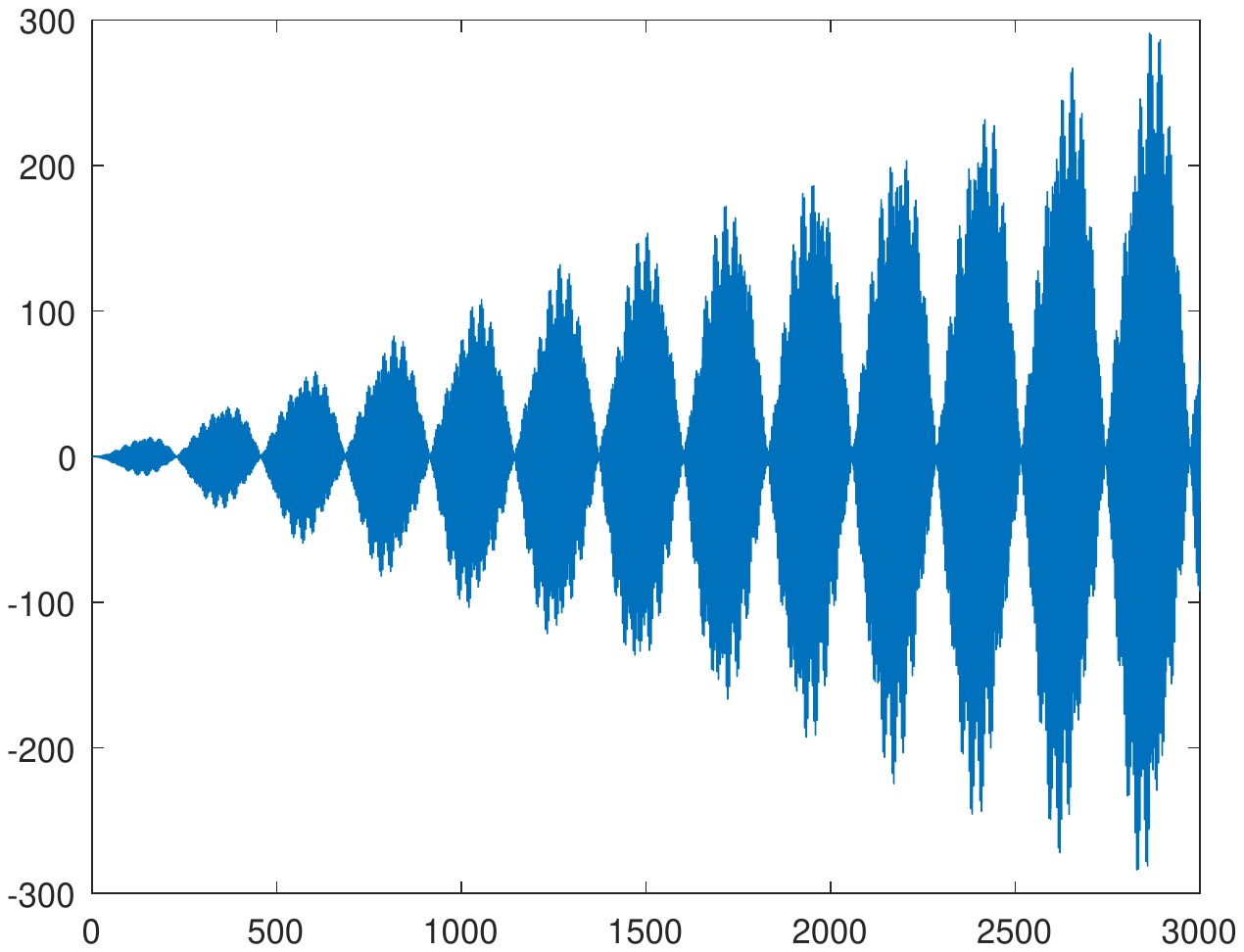}
	\includegraphics[scale=.55]{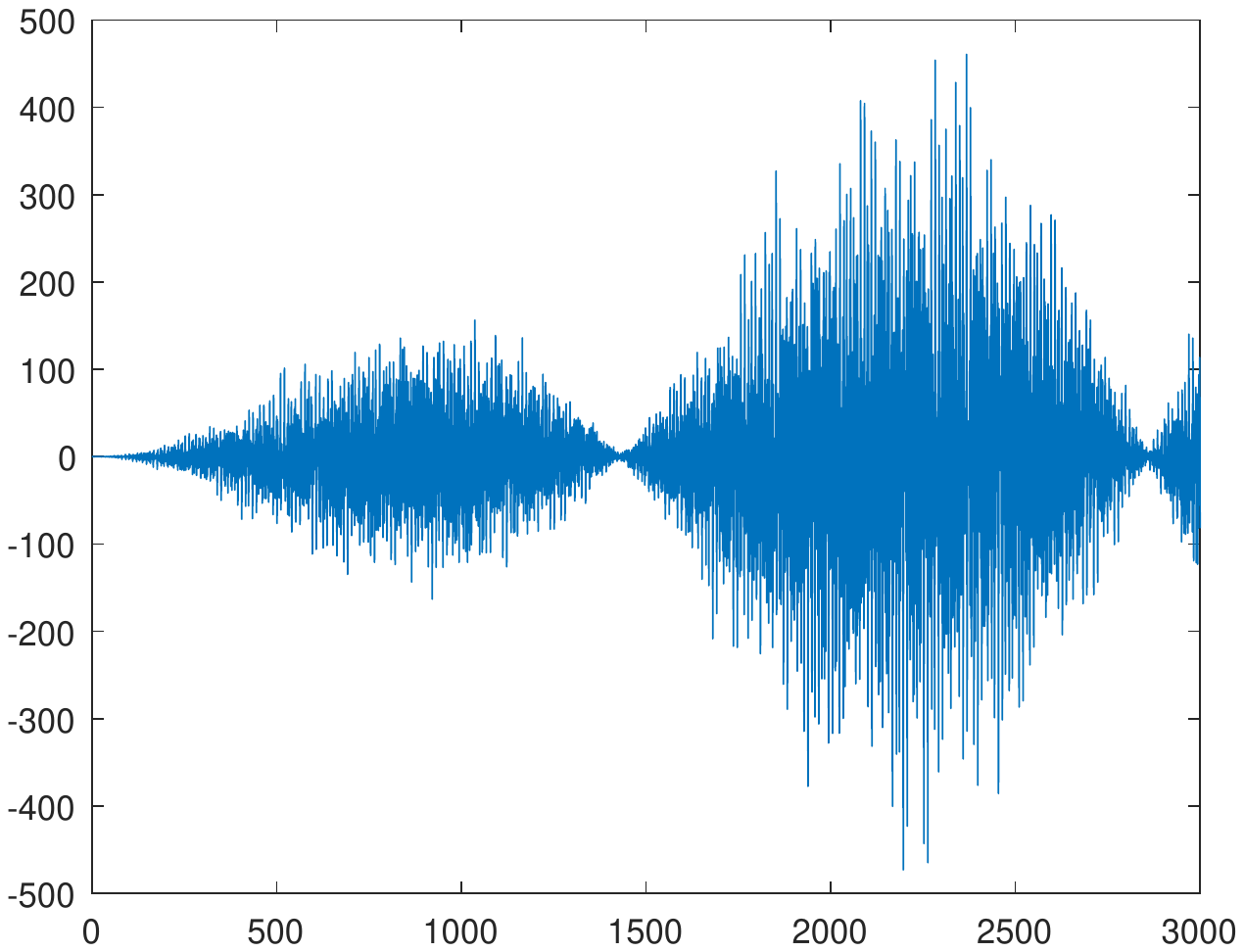}
	\includegraphics[scale=.55]{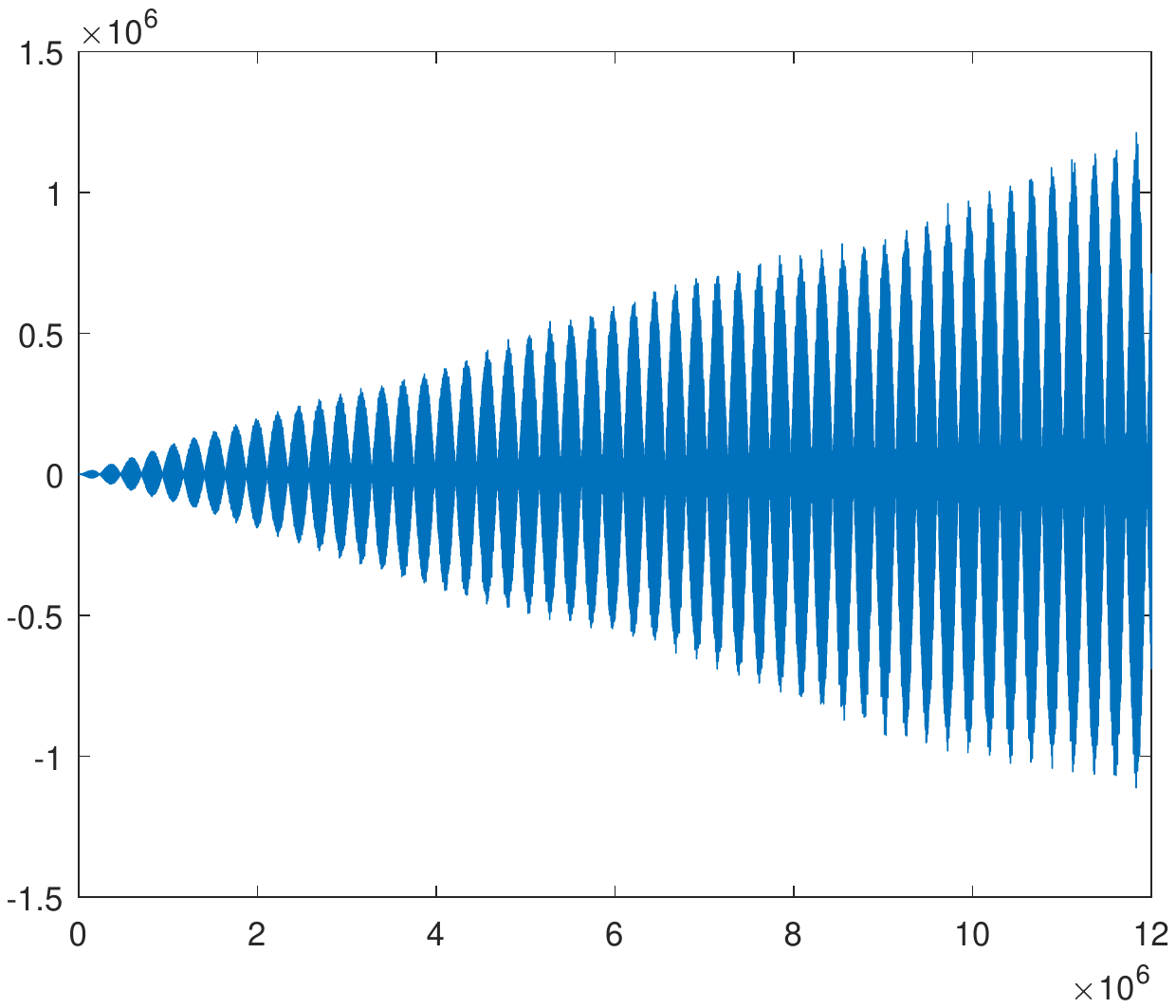}
	\caption{Graphs of weight sensitivity against time; top row is $n=5, w=2, 4$, bottom row is $n=10, w=2,4$}\label{fig:wt_sens}
\end{figure}

\end{document}